\newtheorem{theorem}{Theorem}[section]
\newtheorem{lemma}[theorem]{Lemma}
\newtheorem{corollary}[theorem]{Corollary}
\newtheorem{definition}[theorem]{Definition}
\newtheorem{remark}[theorem]{Remark}
\newtheorem{assumption}[theorem]{Assumption}
\DeclareMathOperator*{\argmin}{arg\,min}
\DeclareMathOperator*{\lexmin}{lex\,min}
\DeclareMathOperator*{\arglexmin}{arg\,lex\,min}
\begin{document}

\thispagestyle{empty}
{\large\noindent IEEE Copyright Notice}
\vspace{3mm}

\noindent
\fbox{
\begin{minipage}{\linewidth}
\textcircled{c} 2025 IEEE. Personal use of this material is permitted.  Permission from IEEE must be obtained for all other uses, in any current or future media, including reprinting/republishing this material for advertising or promotional purposes, creating new collective works, for resale or redistribution to servers or lists, or reuse of any copyrighted component of this work in other works.
\end{minipage}
}

\vspace{2cm}

{\large\noindent A part of this work has been accepted to be published in the IEEE Transactions on Automatic Control.}

\vspace{2cm}

{\large\noindent Cite as:}
\vspace{3mm}

\noindent
\fbox{
\begin{minipage}{\linewidth}
    S. An, D. Lee, and G. Park, "Ultimate boundedness and output convergence of prioritized output tracking control under nonsmooth and imperfect feedback linearization," \textit{IEEE Transactions on Automatic Control}, 2025, doi: 10.1109/TAC.2025.3547538.
\end{minipage}
}

\vspace{2cm}

{\large\noindent BibTeX:}

\begin{Verbatim}[frame=single]
@article{an2025tac,
    title={Ultimate Boundedness and Output Convergence of Prioritized Output 
Tracking Control Under Nonsmooth and Imperfect Feedback Linearization},
    author={An, Sang-ik and Lee, Dongheui and Park, Gyunghoon},
    journal={IEEE Transactions on Automatic Control},
    year={2025},
    publisher={IEEE},
    doi={10.1109/TAC.2025.3547538}
\end{Verbatim}

\newpage

\title{Input-Output Feedback Linearization Preserving Task Priority for Multivariate Nonlinear Systems Having Singular Input Gain Matrix}

\author{%
    Sang-ik An, Dongheui Lee, and Gyunghoon Park
    \thanks{
        This article was presented in part at the 61st IEEE Conference on Decision and Control, Kanc\'{u}n, Mexico, December 2022 \cite{An2022} and submitted in part to the 62nd IEEE Conference on Decision and Control, 2023 \cite{An2023}.
        {\it(Corresponding Author: Gyunghoon Park.)}
    }
    \thanks{Sang-ik An and Gyunghoon Park are with School of Electrical and Computer Engineering, University of Seoul, South Korea (e-mail: S.An.Robots@gmail.com; gyunghoon.park@uos.ac.kr).}
    \thanks{Dongheui Lee is with Autonomous Systems at Institute of Computer Technology, Faculty of Electrical Engineering and Information Technology, TU Wien, Austria and also with Institute of Robotics and Mechatronics, German Aerospace Center (DLR), Germany (e-mail: dongheui.lee@tuwien.ac.at).}
}

\maketitle

\begin{abstract}

We propose an extension of the input-output feedback linearization for a class of multivariate systems that are not input-output linearizable in a classical manner.
The key observation is that the usual input-output linearization problem can be interpreted as the problem of solving simultaneous linear equations associated with the input gain matrix: thus, even at points where the input gain matrix becomes singular, it is still possible to solve a part of linear equations, by which a subset of input-output relations is made linear or close to be linear.
Based on this observation, we adopt the task priority-based approach in the input-output linearization problem.
First, we generalize the classical Byrnes-Isidori normal form to a prioritized normal form having a triangular structure, so that the singularity of a subblock of the input gain matrix related to lower-priority tasks does not directly propagate to higher-priority tasks.
Next, we present a prioritized input-output linearization via the multi-objective optimization with the lexicographical ordering, resulting in a prioritized semilinear form that establishes input output relations whose subset with higher priority is linear or close to be linear.
Finally, Lyapunov analysis on ultimate boundedness and task achievement is provided, particularly when the proposed prioritized input-output linearization is applied to the output tracking problem.
This work introduces a new control framework for complex systems having critical and noncritical control issues, by assigning higher priority to the critical ones.

\end{abstract}

\section{Introduction}
\label{sec:introduction}

Sigularity is a major cause of complexity in extending the input-output feedback linearization from a single-input single-output (SISO) system to a multi-input multi-output (MIMO) system.
A SISO system is input-output linearizable locally if and only if the system has a relative degree at a point, i.e., a scalar-valued input gain function remains to be nonzero locally  \cite{Brockett1978,Isidori1981,Byrnes1984,Byrnes1988,Khalil2015}.
A generalization of this concept to the MIMO system has been made by employing the vector relative degree defined with a matrix-valued input gain function \cite{Isidori1995} 
(while it has been further extended with the nonlinear structure algorithm \cite{Silverman1969, Hirschorn1979, Singh1981, Isidori2017}).
An essential part of the MIMO extension is to solve simultaneous linear equations related with the input gain matrix at each state. 
In this point of view, nonsingularity of the input gain matrix is understood as a necessary requirement for the input-output linearizability of the system, introducing a fundamental limitation on practical use of the input-output linearization as illustrated in related works.

In this article, we propose an extension of the input-output linearization without presuming the nonsingularity of the input gain matrix.  
An observation we made is that even at some point in the state space where the input gain matrix becomes singular, the simultaneous linear equations can still be solved partially in a sense that at least the relation between a subset of input and output variables is made linear or close to be linear.
This enlightens a possibility to generalize the input-output linearization, and to apply well-developed linear control theories for a part of non-linearizable nonlinear systems.

At this point a natural question arises: which subset of input-output relations should be chosen to be linearized? 
As a key to answer the question, the notion of the {\it priority} comes into the picture of our work. 
Prioritization is a strategy that distributes limited resources to multiple tasks.
In the context of control engineering, each task can be interpreted as  a job that a decision maker has to fulfill with a dynamical system, and its success is directly connected to the output controllability of the system associated with a variable termed a {\it task variable}, with which the task is fully represented.
In this setup, singularity of the input gain matrix may violate the output controllability of two task variables as the subblock of the  matrix associated with the task variables does not have full rank: it means that achieving these two tasks simultaneously can be impossible. 
In order to deal with these incompatible tasks, we introduce priority relations on the tasks, by which output controllability of higher-priority tasks is preserved first.
It is worth noting that the task priority-based approach has been studied intensively in the robotics society, such as 
prioritized inverse kinematics (PIK) \cite{Nakamura1987, Chiaverini1997, An2019}, 
constrained PIK \cite{Mansard2009, Kanoun2012},
task switching \cite{Lee2012, An2015}, 
prioritized control \cite{Khatib1987,Ott2015}, 
prioritized optimal control \cite{Nakamura1987a, Geisert2017}, 
learning prioritized tasks \cite{Saveriano2015, Calinon2018}, 
etc.
Nonetheless, most of the aforementioned studies have been conducted mainly focusing on practical problems, while the theoretical aspect of priority has not been understood well.
With this kept in mind, we in this work pursue to make a bridge between two successful studies on the input-output linearization and the priority-based approaches in a theoretical point of view.  
As a result, it is seen in this article that the input-output linearization can be extended to systems that are not input-output linearizable via the proposed approach with task priority, and furthermore a concrete theoretical background is provided for the task priority-based control of robotic systems.

The rest of this article is organized as follows.
In Section~\ref{sec:prioritized_control_problem_and_prioritized_normal_form}, we propose an extension of the Byrnes-Isidori normal form with the task priority, called the {\it prioritized normal form} throughout this article, by applying a specific LQ factorization to the input gain matrix.
Then Section~\ref{sec:prioritized_input_output_linearization} presents a priority-based generalization of the input-output linearization, in both canonical and generalized approaches. 
Finally in Section~\ref{sec:prioritized_tracking_control}, the prioritized output tracking control problem is addressed on top of the prioritized input-output linearization technique. 
We will observe that the singularity of the input gain matrix causes two inevitable issues in the tracking control: the factorization of the input gain matrix becomes nonsmooth at some points, and the linearization performed is imperfect. 
This work focuses on providing mathematical analysis on ultimate boundedness and higher-priority task achievement by using the Lyapunov's method, even in the presence of nonsmooth orthogonalization and imperfect linearization. 

\subsection{Notations}
\label{subsec:notations}

For a given matrix $M\in\mathbb{R}^{a\times b}$, 
$\sigma_{\rm max}(M)$ and $\sigma_{\rm min}(M)$ denote the maximum and minimum singular values of $M$, respectively, ${\cal R}(M)$ and ${\cal N}(M)$ are the range and null spaces of $M$, respectively, and $M^+$ represents the Moore-Penrose pseudoinverse of $M$. 
Together with $c\in [0,\infty]$, the extended damped pseudoinverse $M^{+(c)} \in \mathbb{R}^{b\times a}$ of $M$ is given as 
\begin{equation*}
    M^{+(c)} = \begin{dcases*} M^+, &$c=0$ \\ M^T(MM^T+c^2I_a)^{-1}, & $0<c<\infty$ \\ 0, & $c=\infty$. \end{dcases*}
\end{equation*}
For a block diagonal matrix $M = {\rm diag}( M_{11},\dots, M_{kk} )$ and a vector $c=(c_1,\dots,c_k) \in [0,\infty]^k$, 
$$M^{\oplus(c)}={\rm diag}( M_{11}^{+(c_1)},\dots, M_{kk}^{+(c_k)} ).$$
For sufficiently smooth functions $f$, $h$, and $G$, let $L_f^jh(x) = (\partial (L_f^{j-1}h)/\partial x)f(x)$, $L_GL_f^jh(x) = (\partial (L_f^jh)/\partial x)G(x)$, and $L_f^0h(x) = h(x)$.
${\rm int}(\mathcal{S})$, ${\rm bd}(\mathcal{S})$, ${\rm cl}(\mathcal{S})$, and $\overline{\rm co}(\mathcal{S})$ are the interior, the boundary, the closure, and the convex closure of a set $\mathcal{S}$, respectively, and $\mathcal{B}_n$ is the open unit ball in $\mathbb{R}^n$.

Throughout this paper, the subscript $i:j$ or $i:j,i':j'$ will be mainly employed to designate a part of the entirety in the domain of discourse.
For example, if $u = u_1 + \cdots + u_k$, $v = {\rm col}(v_1,\dots,v_k)$, and $L = [L_{ab}]_{1\le a,b\le k}$, then $u_{i:j} = u_i + \cdots + u_j$, $v_{i:j} = {\rm col}(v_i,\dots,v_j)$, and $L_{i:j,i':j'} = [L_{ab}]_{i\le a\le j, i'\le b\le j'}$, respectively.
Sometimes, we define individuals $\bullet_i = \bullet_{i:i}$ first and then construct the entirety $\bullet = \bullet_{1:k}$ later by specifying the meaning of $1:k$.
We also use the subscript $i:j$ to designate a variable or a function whose definition is closely connected to a part of the entirety.
For example, if $\xi = {\rm col}(\xi_1,\dots,\xi_k)$, we define a variable $\eta_{1:i}$ and a function $\Phi_{1:i}(x)$ satisfying $(\eta_{1:i},\xi_{1:i}) = \Phi_{1:i}(x)$.
Without any definition, $\bullet_{i:j}$ represents ${\rm col}(\bullet_i,\dots,\bullet_j)$; otherwise, we always specify the meaning of $\bullet_{i:j}$.
If we need to consider an initial segment of an index set $\mathcal{I} = \{1,\dots,k\}$, we define $\mathcal{I}_0 = \{1,\dots,i_0\} \subset \mathcal{I}$ and write $\bullet_0 = \bullet_{1:i_0}$ or $\bullet_0^c = \bullet_{i_0+1:k}$ or $\bullet_0 = \bullet_{1:i_0,1:i_0}$ or $\bullet_0^c = \bullet_{i_0+1:k,i_0+1:k}$ for the simplicity of notation.

\section{Prioritized Control Problem and Prioritized Normal Form}
\label{sec:prioritized_control_problem_and_prioritized_normal_form}

We begin this section by introducing the priotized control problem for a dynamical system and its related concepts.
A \textit{task} ${\mathfrak{T}}$ is a job, usually written as a set of statements, a decision maker has to fulfill.
In order to complete a task, a decision maker builds a dynamical system ${\mathfrak{S}}$ and develops a control strategy $\mathfrak{C}$ for an \textit{input variable} $u$ of $\mathfrak{S}$, i.e., a process to find the control input of ${\mathfrak{S}}$ for ${\mathfrak{T}}$.
For a given task $\mathfrak{T}$, the associated \textit{task variable} $y_{\mathfrak{T}}$ is an ordered set of outputs of a dynamical system ${\mathfrak{S}}$ with which the task $\mathfrak{T}$ can be represented (e.g., to minimize an objective function $\pi(y_{\mathfrak{T}})$). (Hereinafter, we drop the subscript $\mathfrak{T}$ in $y_{\mathfrak{T}}$ if obvious.)
We here do not specify what the task is, and selection of $\mathfrak{T}$ is left to a decision maker.
When $\mathfrak{T}$ is determined {\it a priori}, we say that a dynamical system $\mathfrak{S}$ is \textit{square} / \textit{redundant} / \textit{deficient} if the number of inputs is equal to / greater than / less than the number of outputs.

In this article, we are interested in a particular scenario that a decision maker needs to accomplish multiple interconnected tasks ${\mathfrak{T}}_1,\dots,{\mathfrak{T}}_{{k}}$ simultaneously for a single dynamical system $\mathfrak{S}$. 
This leads to a difficulty in controller design, as achieving two among the tasks could be impossible due to the lack of control resources. 
A remedy is to introduce the notion of {\it priority} for the tasks, and to distribute available resources according to the priority. 
In what follows, a task $\mathfrak{T}_i$ is said to \textit{have higher priority} than another one $\mathfrak{T}_j$, if it is pursued to accomplish $\mathfrak{T}_i$ first, and then  $\mathfrak{T}_j$ if still possible. 
\textit{Unprioritized tasks} or \textit{tasks without priority} $({\mathfrak{T}}_1,\dots,{\mathfrak{T}}_{{k}})$ are multiple tasks that do not have priority relations between tasks.
On the other hand, \textit{prioritized tasks} or \textit{tasks with priority} $[{\mathfrak{T}}_1,\dots,{\mathfrak{T}}_{{k}}]$ are multiple tasks that have priority relations between tasks, listed in an ascending order that $\mathfrak{T}_1$ has the highest priority and $\mathfrak{T}_{{k}}$ has the lowest priority.

Conceptually, the problem of our main interest can be formulated as follows (while its detailed version will be introduced in Section~\ref{sec:prioritized_tracking_control} after the tasks are specified to the output tracking problem):

{\bf Priotized Control Problem}: \textit{
For a given dynamical system $\mathfrak{S}$ and priotized tasks $[\mathfrak{T}_1,\dots,\mathfrak{T}_k]$, the priotized control problem is to find a control strategy ${\mathfrak{C}}$ that consists of partial control strategies ${\mathfrak{C}}_1,\dots,{\mathfrak{C}}_k$ for $\mathfrak{T}_1,\dots,\mathfrak{T}_k$, respectively, and satisfies the following:
\begin{enumerate}
    \item {\bf (Dependence)}: A control strategy ${\mathfrak{C}}_i$ for $\mathfrak{T}_i$ does not affect whether or not the tasks $\mathfrak{T}_1,\dots,\mathfrak{T}_{i-1}$ with higher priority are achieved;
    \item {\bf (Representation)}: Under the dependence property, ${\mathfrak{C}}_i$ uses maximum available control resources but no more than $\mathfrak{T}_i$ actually needs.
\end{enumerate}
}

In particular, the representation property claims that, doing nothing or unnecessary things for $\mathfrak{T}_i$ is not preferred, in order to not only complete a task $\mathfrak{T}_i$ but also deal with those with lower priorities.
On the other hand, a control strategy that solves the priotized control problem is called a {\it priotized control} and denoted as $\mathfrak{C} = [\mathfrak{C}_1,\dots,\mathfrak{C}_k]$.
As the first step of constructing a priotized control $\mathfrak{C}$, we study the Byrnes-Isidori normal form of redundant systems.

\subsection{Normal Form of Square or Redundant Systems}
\label{subsec:normal_form_of_square_or_redundant_systems}

We consider a multivariate nonlinear input-affine system
\begin{subequations}\label{eqn:dynamical_system_with_single_task}
\begin{align}
    \dot{x} &= f(x) + G(x)u \\
    y &= h(x)
\end{align}
\end{subequations}
where $f:\mathbb{R}^n\to\mathbb{R}^n$, $G:\mathbb{R}^n\to\mathbb{R}^{n\times {m}}$, and $h:\mathbb{R}^n\to\mathbb{R}^{p}$ are sufficiently smooth on $\mathbb{R}^n$ and ${p}\le {m}\le n$ (so the system is square or redundant).
Without loss of generality, $f(0) = 0$ and $h(0) = 0$. 
An usual assumption for \eqref{eqn:dynamical_system_with_single_task} is the well-defineness of the vector relative degree.

\begin{definition}
    \label{def:relative_degree}
    \cite[\S5.1]{Isidori1995}
    The system \eqref{eqn:dynamical_system_with_single_task} has a \textit{(vector) relative degree} $({r}_1,\dots,{r}_{p})\in\mathbb{N}^{p}$ at $x_0\in\mathbb{R}^n$ if 
    \begin{enumerate}
        \item $L_GL_f^jh_i(x) = 0$ for all $1\le i\le {p}$, for all $0\le j\le {r}_i-2$, and for all $x$ in a neighborhood of $x_0$;
        \item the ${p}\times {m}$ matrix 
        \begin{equation}
            J(x) = \mathrm{col}(L_GL_f^{{r}_1-1}h_1(x),\dots,L_GL_f^{{r}_{p}-1}h_{p}(x))
            \label{eqn:input_gain_matrix}
        \end{equation}
        is nonsingular at $x = x_0$. 
        $\hfill\square$
    \end{enumerate}
\end{definition}

We call ${r}_1 + \cdots + {r}_{p}$ as the \textit{total relative degree}.

\begin{lemma}
    \label{lem:diffeomorphism}
    \cite[Lemma 5.1.1, Proposition 5.1.2, Remark 5.1.3]{Isidori1995}
    Assume that the system \eqref{eqn:dynamical_system_with_single_task} has a relative degree $({r}_1,\dots,{r}_{p})$ at $x_0\in\mathbb{R}^n$.
    Denote $\phi_i^j = L_f^{j-1}h_i$ and $d\phi_i^j = \partial\phi_i^j/\partial x$ and define a distribution $\Delta = \mathrm{span}\{g_1,\dots,g_{{m}}\}$
    where $g_i$ is the $i$-th column of $G$.
    Then, we have
    \begin{enumerate}
        \item ${r} = {r}_1 + \cdots + {r}_{p} \le n$;
        \item there is a diffeomorphism $$\Phi(x) = \mathrm{col}(\phi_1(x),\dots,\phi_n(x))$$ on a neighborhood of $x_0$ satisfying $$(\phi_{n-r+1},\dots,\phi_{n}) = (\phi_1^1,\dots,\phi_1^{{r}_1},\dots,\phi_{p}^1,\dots,\phi_{p}^{{r}_{p}});$$
        \item if $f(x_0) = 0$ and $h(x_0) = 0$, then $\Phi$ can be chosen in a way that $\Phi(x_0) = 0$;
        \item if ${p}={m}$, ${r}<n$, and $\Delta$ is involutive near $x_0$,
        then $\Phi$ can be chosen in a way that $L_G\phi_i(x) = 0$ for all $1\le i\le n-r$ and all $x$ near $x_0$.
        $\hfill\square$
    \end{enumerate}
\end{lemma}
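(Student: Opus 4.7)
The plan for parts (1) and (2) is to show first that the $r$ covectors $d\phi_i^j(x_0)$, with $\phi_i^j := L_f^{j-1} h_i$, $1 \le i \le p$, $1 \le j \le r_i$, are linearly independent at $x_0$. If $\sum_{i,j} c_{ij}\, d\phi_i^j(x_0) = 0$, pairing with each $g_k(x_0)$ and using the relative-degree condition $L_{g_k} L_f^{j-1} h_i = 0$ for $j \le r_i - 1$ isolates the top layer $\{c_{i,r_i}\}$, and nonsingularity of $J(x_0)$ forces this layer to vanish. Iterating the argument after an application of $L_f$ propagates vanishing downward in $j$, yielding all $c_{ij} = 0$. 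In particular $r \le n$. Then one extends $\{d\phi_i^j(x_0)\}$ to a basis of $T^*_{x_0}\mathbb{R}^n$ by adding $n - r$ smooth functions $\phi_1,\dots,\phi_{n-r}$, so that the Jacobian of $\Phi = \mathrm{col}(\phi_1,\dots,\phi_{n-r},\phi_1^1,\dots,\phi_p^{r_p})$ is nonsingular at $x_0$, and the inverse function theorem delivers the local diffeomorphism in (2).

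For (3), the hypothesis propagates by induction: $\phi_i^1(x_0) = h_i(x_0) = 0$ and $\phi_i^j(x_0) = d\phi_i^{j-1}(x_0)\, f(x_0) = 0$ for $j \ge 2$, so the last $r$ entries of $\Phi(x_0)$ already vanish. The remaining entries are normalized by replacing $\phi_k$ with $\phi_k - \phi_k(x_0)$ for $1 \le k \le n-r$, which preserves all differentials and hence the diffeomorphism property while enforcing $\Phi(x_0) = 0$.

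For (4), I would invoke the Frobenius theorem on the involutive distribution $\Delta$, which has constant dimension $m$ near $x_0$ (since the $m$ columns of $J$ involve $g_1,\dots,g_m$ and must be independent for $J(x_0)$ to be nonsingular in the square case $p = m$). Frobenius supplies $n - m$ smooth functions $\lambda_1,\dots,\lambda_{n-m}$ with independent differentials spanning the annihilator of $\Delta$, i.e., $L_G \lambda_\ell \equiv 0$. The relative-degree identities make $L_G \phi_i^j \equiv 0$ for $j < r_i$, placing the $r - m$ covectors $\{d\phi_i^j(x_0) : j < r_i\}$ inside $\Delta^\perp|_{x_0}$. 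Completing them to a basis of $\Delta^\perp|_{x_0}$ by $n - r$ constant-coefficient linear combinations of $d\lambda_\ell(x_0)$ determines $\phi_1,\dots,\phi_{n-r}$; being fixed linear combinations of the $\lambda_\ell$, they are smooth and satisfy $L_G \phi_k \equiv 0$ on a neighborhood. Since nonsingularity of $J(x_0)$ makes the $m$ remaining covectors $\{d\phi_i^{r_i}(x_0)\}$ transverse to $\Delta^\perp|_{x_0}$, the full Jacobian of $\Phi$ is nonsingular at $x_0$, so $\Phi$ remains a local diffeomorphism.

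The principal obstacle I anticipate is a clean execution of the independence argument in (1): the peeling via repeated $L_f$-applications and $g_k$-pairings requires careful bookkeeping of the indices $(i,j)$ relative to $r_i$ before the row-rank of $J(x_0)$ can be brought to bear. A secondary care point is in (4): the completing functions must be fixed (not locally varying) linear combinations of the Frobenius coordinates $\lambda_\ell$ so that $L_G \phi_k$ vanishes identically on a neighborhood of $x_0$ rather than only at the point itself. The remaining parts (2) and (3) are standard consequences of the inverse function theorem and affine translation.
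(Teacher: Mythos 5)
Your parts (2)--(4) are sound and follow essentially the paper's route: completion of the independent differentials to a local coordinate chart via the inverse function theorem, translation by constants for item~3, and Frobenius applied to the involutive $m$-dimensional $\Delta$ plus the transversality of the top-layer covectors $d\phi_i^{r_i}(x_0)$ (encoded in the nonsingularity of $J(x_0)$) for item~4. The one substantive difference is cosmetic: for item~2 the paper, following Isidori, completes the coordinates with Frobenius coordinates annihilating a single column $g$, whereas you use a generic completion; both suffice for the statement as given.

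The genuine gap is in the peeling argument of part~(1). The first step is fine: pairing $\sum_{i,j} c_{ij}\,d\phi_i^j(x_0)=0$ with each $g_k(x_0)$ kills all terms with $j\le r_i-1$ and leaves $\sum_i c_{i,r_i} L_{g_k}L_f^{r_i-1}h_i(x_0)=0$, so full row rank of $J(x_0)$ gives $c_{i,r_i}=0$. But the iteration as you describe it --- ``an application of $L_f$'' followed by another $g_k$-pairing --- does not go through. After removing the top layer you are left with a \emph{pointwise} relation $d\psi(x_0)=0$ for $\psi=\sum_{i,\,j\le r_i-1}c_{ij}L_f^{j-1}h_i$; you cannot differentiate a relation that holds only at $x_0$ along $f$, and the quantity you would need, $L_{g_k}L_f\psi(x_0)$, is not determined by $d\psi(x_0)=0$ (it picks up a second-derivative term $(\partial^2\psi/\partial x^2)(x_0)f(x_0)$ unless $f(x_0)=0$). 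Pairing the residual relation with $g_k(x_0)$ again just returns $0=0$. The missing device is to pair with the iterated brackets $ad_f^a g_k(x_0)$ instead, together with the identity (Isidori's Lemma 4.1.2, reproduced as Lemma A.1 in the paper's appendix) $L_{ad_f^{a}g_k}L_f^{b}h_i = 0$ for $a+b<r_i-1$ and $L_{ad_f^{a}g_k}L_f^{b}h_i=(-1)^a L_{g_k}L_f^{r_i-1}h_i$ for $a+b=r_i-1$; this identity is legitimate precisely because the lower-order Lie derivatives $L_{g_k}L_f^{j-1}h_i$ vanish \emph{identically in a neighborhood} of $x_0$, not merely at the point. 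Pairing with $ad_f^a g_k(x_0)$ for $a=0,1,2,\dots$ then isolates the layers $j=r_i-a$ one at a time and lets the rank of $J(x_0)$ be applied at each stage (the paper packages this as the block-triangular structure of the product $XY$). Without this bracket identity the downward propagation, which is the crux of item~1 and hence of $r\le n$, does not close.
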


\begin{proof}
    See Appendix \ref{app:proof_of_lemma_about_diffeomorphism}
\end{proof}

Assume that the dynamical system \eqref{eqn:dynamical_system_with_single_task} has a relative degree $({r}_1,\dots,{r}_{p})$ at $0$ and define
\begin{equation}
    \kappa(x) = \mathrm{col}(L_f^{{r}_1}h_1(x),\dots,L_f^{{r}_{p}}h_{p}(x)).
    \label{eqn:kappa_function}
\end{equation}
Let $\Phi(x)$ be a diffeomorphism given by Lemma \ref{lem:diffeomorphism} and $\mathcal{D}\subset\mathbb{R}^n$ be a neighborhood of $0$ in which the relative degree and the diffeomorphism are defined.
We define new coordinates of the system as
\begin{align*}
    z = \begin{bmatrix} \eta \\ \xi \end{bmatrix} = \begin{bmatrix} \phi_{1:n-r}(x) \\ \phi_{n-r+1:n}(x) \end{bmatrix} = \Phi(x).
\end{align*}
The system \eqref{eqn:dynamical_system_with_single_task} is transformed into the (Byrnes-Isidori) normal form as (for brevity, we will utilize a notation $a[\eta,\xi] = a[z] = a(\Phi^{-1}(z))$ for a function $a$ defined on $\mathcal{D}$):
\begin{subequations}\label{eqn:normal_form}
\begin{align}
    \dot{\eta} &= {f}_\eta[\eta,\xi] + {G}_\eta[\eta,\xi]u \\
    \dot{\xi} &= A\xi + B\left( \kappa[\eta,\xi] + J[\eta,\xi]u \right) \\
    y &= C\xi
\end{align}
\end{subequations}
where $(A,B,C)$ is a canonical form of chains of integrators,%
\footnote{
    The canonical form representation is given as:
    \begin{align*}
        A_i &= \begin{bmatrix} 
            0 & 1 & 0 & \cdots & 0 & 0 \\ 
            0 & 0 & 1 & \cdots & 0 & 0 \\
            \vdots & \vdots & \vdots & \ddots & \vdots & \vdots \\
            0 & 0 & 0 & \cdots & 1 & 0 \\
            0 & 0 & 0 & \cdots & 0 & 1 \\
            0 & 0 & 0 & \cdots & 0 & 0
        \end{bmatrix} \in \mathbb{R}^{r_i\times r_i} &
        B_i &= \begin{bmatrix} 0 \\ \vdots \\ 0 \\ 1 \end{bmatrix} \in \mathbb{R}^{r_i\times 1} &
        C_i &= \begin{bmatrix} 1 & 0 & \cdots & 0 \end{bmatrix} \in \mathbb{R}^{1\times r_i} \\
        A &= \begin{bmatrix} A_1 & 0 & \cdots & 0 \\ 0 & A_2 & \cdots & 0 \\ \vdots & \vdots & \ddots & \vdots \\ 0 & 0 & \cdots & A_p \end{bmatrix}\in\mathbb{R}^{r\times r} &
        B &= \begin{bmatrix} B_1 & 0 & \cdots & 0 \\ 0 & B_2 & \cdots & 0 \\ \vdots & \vdots & \ddots & \vdots \\ 0 & 0 & \cdots & B_p \end{bmatrix}\in\mathbb{R}^{r\times p} &
        C &= \begin{bmatrix} C_1 & 0 & \cdots & 0 \\ 0 & C_2 & \cdots & 0 \\ \vdots & \vdots & \ddots & \vdots \\ 0 & 0 & \cdots & C_p \end{bmatrix}\in\mathbb{R}^{p\times r}.
    \end{align*}
}
${f}_\eta = L_f\phi_{1:n-r}$, and ${G}_\eta = L_G\phi_{1:n-r}$.
We assume $\Phi(0) = 0$; thus, ${f}_\eta(0) = {f}_\eta[0] = 0$ and $\kappa(0) = \kappa[0] = 0$.

\subsection{System Redundancy and Control of Internal Dynamics}

In this subsection, we discuss how the redundancy of a system provides an additional control resource to be used for other purposes.
For this, we for now assume that the system \eqref{eqn:dynamical_system_with_single_task} is redundant (i.e., $p<m$), and set our task as stabilization of the output $y(t)$.
Suppose that the stabilizing task is completely achieved; that is, $y(t)\equiv 0$.
Then by definition, $\xi(t) \equiv 0$ as long as ${x}_\eta(t) \coloneqq \Phi^{-1}(\eta(t),0)$ is in ${\mathcal{D}}$.
Then, the (output-zeroing) input $u_\eta(t)$ needs to satisfy
\begin{equation}
    J({x}_\eta(t))u_\eta(t) = -\kappa({x}_\eta(t)).
    \label{eqn:j_u_equals_minus_kappa}
\end{equation}
If $J(x_\eta(t))$ is nonsingular, the solution of \eqref{eqn:j_u_equals_minus_kappa} is given as
\begin{equation}
    u_\eta(t)= -J^+({x}_\eta(t))\kappa({x}_\eta(t)) + N({x}_\eta(t))u_f
    \label{eqn:u_to_find_zero_dynamics}
\end{equation}
where $u_f\in\mathbb{R}^{{m}}$ is a free variable, and 
\begin{equation}\label{eqn:null_space_projector}
    N(x) = I_{{m}} - J^+(x)J(x)
\end{equation} 
is the {\it null-space projector} with respect to $J(x)$. 
Applying \eqref{eqn:u_to_find_zero_dynamics} to the internal dynamics of \eqref{eqn:normal_form} as $u=u_\eta$ gives 
\begin{equation}
    \dot{\eta}(t) = {f}_\eta^\circ[\eta(t),0] + ({G}_\eta N)[\eta(t),0]u_f
    \label{eqn:zero_dynamics}
\end{equation}
(which is actually the zero dynamics of \eqref{eqn:dynamical_system_with_single_task})
for ${x}_\eta(t) \in {\mathcal{D}}$ where ${f}_\eta^\circ = {f}_\eta - {G}_\eta J^+\kappa$.
It should be noted that $u_f$ that appears in \eqref{eqn:u_to_find_zero_dynamics} is a free variable and plays no role in achieving \eqref{eqn:j_u_equals_minus_kappa}. 
This is a unique characteristic of the redundant system, providing an opportunity for achieving extra tasks (possibly defined on $\eta$) by utilizing $u_f$ as an {\it additional} control resource.

At this point, with Lemma~\ref{lem:diffeomorphism} taken into account, one may wonder whether ${r} = n$ or there is a particular diffeomorphism satisfying $L_G\phi_i(x) = 0$  for a redundant system, for both of which cases $u_f$ disappears in the resultant normal form; in other words, $G_\eta^\circ(x)=0$.
In fact, we can easily exclude such possibility under a mild assumption.
This will be seen by deriving the zero dynamics on the original coordinates.
Let $y(t) = 0$ for all $t$ satisfying $x(t)\in {\mathcal{D}}$.
Then, the system must evolve on the manifold
\begin{equation}
    \mathcal{Z} = \{x\in {\mathcal{D}} : L_f^jh_i(x) = 0,\,1\le i\le {p},\,0\le j\le {r}_i-1\},
    \label{eqn:manifold_of_zero_dynamics}
\end{equation}
which is smooth and has dimension $n-{r}$.
Note that $f(0) = 0$ and $h(0) = 0$ imply $0\in \mathcal{Z}$.
The state feedback $u=k(x)$ that makes $x$ be on $\mathcal{Z}$ must have the form
\begin{equation}
    k(x) = -J^+(x)\kappa(x) + N(x)u_f
    \label{eqn:u_to_find_zero_dynamics2}
\end{equation}
with the free variable $u_f\in\mathbb{R}^{{m}}$, by which the system \eqref{eqn:dynamical_system_with_single_task} (in the original coordinate) becomes
\begin{equation}
    \dot{x} = f^\circ(x) + (GN)(x)u_f
    \label{eqn:zero_dynamics_original_coordinate}
\end{equation}
where $f^\circ = f - GJ^+\kappa$.
It is easily checked that $f^\circ(x) + (GN)(x)u_f$ is tangent to $\mathcal{Z}$ for all $x\in \mathcal{Z}$ and $u_f\in\mathbb{R}^{{m}}$.%
\footnote{
    Since $L_f^jh_i(x) = 0$ for all $1\le i\le {p}$, for all $0\le j\le {r}_i-1$, and for all $x\in \mathcal{Z}$, we have
    \begin{align*}
        dL_f^jh_i(x)(f^\circ(x) + (GN)(x)u_f) 
        &= dL_f^jh_i(x)\left[f(x) - G(x)J^+(x)\kappa(x) + G(x)N(x)u_f\right] \\
        &= L_f^{j+1}h_i(x) + L_GL_f^jh_i(x)\left[ -J^+(x)\kappa(x) + N(x)u_f \right] 
        = 0
    \end{align*}
    for all $1\le i \le {p}$, $0\le j\le {r}_i-2$, and all $x\in \mathcal{Z}$.
    Also,
    \begin{align*}
        &\begin{bmatrix} dL_f^{{r}_1-1}h_1(x) \\ \vdots \\ dL_f^{{r}_{p}-1}h_{p}(x) \end{bmatrix}(f^\circ(x)+(GN)(x)u_f) 
        = \begin{bmatrix} dL_f^{{r}_1-1}h_1(x) \\ \vdots \\ dL_f^{{r}_{p}-1}h_{p}(x) \end{bmatrix}[f(x)-G(x)J^+(x)\kappa(x) + G(x)N(x)u_f] \\
        &= \begin{bmatrix} L_f^{{r}_1}h_1(x) \\ \vdots \\ L_f^{{r}_{p}}h_{p}(x) \end{bmatrix} + \begin{bmatrix} L_GL_f^{{r}_1-1}h_1(x) \\ \vdots \\ L_GL_f^{{r}_{p}-1}h_{p}(x) \end{bmatrix}[-J^+(x)\kappa(x) + N(x)u_f] 
        = \kappa(x) + J(x)[-J^+(x)\kappa(x) + N(x)u_f] \\
        &= \kappa(x) - J(x)J^+(x)\kappa(x) + J(x)N(x)u_f 
        = 0
    \end{align*}
    for all $x\in{\cal Z}$.
    Thus, 
    \begin{equation*}
        dL_f^jh_i(x)(f^\circ(x)+(GN)(x)u_f) = 0
    \end{equation*}
    for all $1\le i\le {p}$, all $0\le j\le {r}_i-1$, all $x\in \mathcal{Z}$.
    Let $x_1,x_2\in \mathcal{Z}$.
    Then,
    \begin{align*}
        0 = L_f^jh_i(x_1) 
        = L_f^jh_i(x_2) + dL_f^jh_i(x_2)(x_1-x_2) + r(x_1-x_2) 
        = dL_f^jh_i(x_2)(x_1-x_2) + r(x_1-x_2)
    \end{align*}
    and
    \begin{equation*}
        \lim_{x_1\to x_2}dL_f^jh_i(x_2)\frac{x_1-x_2}{\|x_1-x_2\|} = -\lim_{x_1\to x_2}\frac{r(x_1-x_2)}{\|x_1-x_2\|} = 0.
    \end{equation*}
    Thus, $dL_f^jh_i(x)$ is normal to $\mathcal{Z}$ for all $x\in \mathcal{Z}$.
    It follows that $f^\circ(x) + (GN)(x)u_f$ is tangent to $\mathcal{Z}$ for all $x\in \mathcal{Z}$ and $u_f\in\mathbb{R}^m$.
}
Thus, every solution $x(t)$ of \eqref{eqn:zero_dynamics_original_coordinate} initiated in $\mathcal{Z}$ stays on $\mathcal{Z}$ until it leaves $\mathcal{D}$.

\begin{lemma}
    \label{lem:condition_for_GN_equal_zero}
    Let $\Phi(x)$ be a diffeomorphism in Lemma \ref{lem:diffeomorphism}.
    Then, for $x\in \mathcal{D}$, we have $d\phi_{n-r+1:n}(x)(GN)(x) = 0$ and
    \begin{align*}
        L_G\phi_{1:n-r}(x) = 0 \implies (GN)(x) = 0 \iff (G_\eta N)(x) = 0.
    \end{align*}
    Assume $G(x)$ is nonsingular on $\mathcal{D}$.
    Then,
    \begin{equation*}
        r = n \implies p = m \iff (GN)(x) = 0
    \end{equation*}
    for each $x\in \mathcal{D}$.
    $\hfill\square$
\end{lemma}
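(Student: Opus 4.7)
The plan is to first establish the row identity $d\phi_{n-r+1:n}(x)(GN)(x) = 0$, and then derive the remaining implications from it using the invertibility of the Jacobian $d\Phi(x)$. For the core identity, I would expand $\phi_{n-r+1:n}(x) = \mathrm{col}(\phi_1^1,\ldots,\phi_1^{r_1},\ldots,\phi_p^1,\ldots,\phi_p^{r_p})$ with $\phi_i^j(x) = L_f^{j-1}h_i(x)$. The relative-degree condition in Definition~\ref{def:relative_degree} gives $L_G L_f^{j-1}h_i(x) = 0$ for $1 \le j \le r_i - 1$ and every $i$, so every row of $d\phi_{n-r+1:n}(x)G(x)$ vanishes on $\mathcal{D}$ except the $p$ rows indexed by $j = r_i$, which assemble exactly into $J(x)$ as in \eqref{eqn:input_gain_matrix}. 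Since $JJ^+J = J$ by the Moore-Penrose identities, $JN = J - JJ^+J = 0$ on $\mathcal{D}$, and therefore $d\phi_{n-r+1:n}(x)(GN)(x) = 0$.

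For the two implications in the displayed line, I would next suppose $L_G\phi_{1:n-r}(x) = 0$, i.e.\ $d\phi_{1:n-r}(x)G(x) = 0$; then a fortiori $d\phi_{1:n-r}(x)(GN)(x) = 0$, and stacking with the identity just proved yields $d\Phi(x)(GN)(x) = 0$. Because $\Phi$ is a diffeomorphism, $d\Phi(x)$ is invertible, which forces $(GN)(x) = 0$. The equivalence $(GN)(x) = 0 \iff (G_\eta N)(x) = 0$ would follow analogously: the forward direction is immediate from $G_\eta = d\phi_{1:n-r}G$, while for the converse $(G_\eta N)(x) = 0$ combined with $d\phi_{n-r+1:n}(x)(GN)(x) = 0$ again gives $d\Phi(x)(GN)(x) = 0$, hence $(GN)(x) = 0$ by invertibility of $d\Phi(x)$.

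Finally, adding the assumption that $G(x)$ is nonsingular on $\mathcal{D}$ (hence injective), I would observe $(GN)(x) = 0 \iff N(x) = 0$. Since $J(x)$ is $p\times m$ of full row rank $p \le m$, $J^+J$ is an orthogonal projection of rank $p$, and hence $N(x) = I_m - J^+(x)J(x) = 0$ if and only if $p = m$. For the remaining implication $r = n \Rightarrow p = m$, the index range $1:n-r$ is empty when $r = n$, so $L_G\phi_{1:n-r}(x) = 0$ holds vacuously; the previous paragraph then gives $(GN)(x) = 0$, and the equivalence just proved yields $p = m$. The main delicate point is the bookkeeping in the first step, identifying which rows of $d\phi_{n-r+1:n}(x)G(x)$ vanish and which reconstitute $J(x)$; once this is handled, the remaining items are short consequences of the Moore-Penrose identity $JJ^+J = J$ and the invertibility of $d\Phi(x)$.
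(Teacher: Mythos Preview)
Your proof is correct and, for all but the final implication, matches the paper's argument essentially line for line: the identity $d\phi_{n-r+1:n}(x)(GN)(x)=0$ via the relative-degree conditions and $JN=0$, the stacking argument with the invertibility of $d\Phi(x)$ for both $L_G\phi_{1:n-r}=0\Rightarrow (GN)=0$ and $(GN)=0\Leftrightarrow(G_\eta N)=0$, and the equivalence $p=m\Leftrightarrow (GN)(x)=0$ via full column rank of $G(x)$.

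The one genuine difference is the step $r=n\Rightarrow p=m$. You argue that when $r=n$ the block $\phi_{1:n-r}$ is empty, so $L_G\phi_{1:n-r}=0$ holds vacuously, whence $(GN)(x)=0$ by the implication already established, and $p=m$ follows from the equivalence. The paper instead invokes a dynamical argument: when $r=n$ the output-zeroing manifold $\mathcal{Z}$ reduces to $\{0\}$, so every trajectory of \eqref{eqn:zero_dynamics_original_coordinate} starting at $0$ must stay there, forcing $f^\circ(0)+(GN)(0)u_f=0$ for arbitrary $u_f$ and hence $(GN)(0)=0$. Your route is more elementary and entirely algebraic, reusing what was already proved rather than appealing to invariance of $\mathcal{Z}$; the paper's route, while heavier, ties the conclusion back to the geometric picture of the zero dynamics that motivates the surrounding discussion. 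Both are valid.
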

\begin{proof}
    See Appendix \ref{app:proof_of_lemma_about_condition_for_GN_equal_zero}.
\end{proof}

Lemma~\ref{lem:condition_for_GN_equal_zero} implies that, as long as $G(x)$ is nonsingular on $\mathcal{D}$ and the system is redundant, we have $r<n$, $(GN)(x) \neq 0$, and $(G_\eta N)(x)\neq0$; in other words, the extra control input $u_f$ will not be vanished and  is able to affect the internal part of the system.

In the rest of this subsection, we clarify the meaning of $u_f$ in solving \eqref{eqn:zero_dynamics} or \eqref{eqn:zero_dynamics_original_coordinate}.
When we solve \eqref{eqn:zero_dynamics}, the value of $u_f$ should be specified for each $t$; that is, $u_f$ is in fact a function of $t$ as $u_f(t)$.
Since $u_f$ remains undetermined, candidates for $u_f(t)$ are not unique but infinitely many.
Considering the solvability of \eqref{eqn:zero_dynamics}, we restrict $u_f(t)$ to be a (Lesbesgue) measurable function satisfying $u_f(t)\in \mathcal{U}_f$ almost everywhere for some compact set $\mathcal{U}_f\subset\mathbb{R}^{{m}}$.
Then, for every $\eta^\circ$ sufficiently close to zero, there exists a (Carath\'{e}odory) solution $\eta(t)$ of \eqref{eqn:zero_dynamics} satisfying $\eta(0) = \eta^\circ$ \cite[\S1]{Filippov1988}.%
\footnote{
    A Carath\'{e}odory solution of a differential equation $\dot{x} = f(t,x)$ is a function $x(t)$ defined on an open or closed interval $l$ such that $x(t)$ is absolutely continuous on each closed interval $[\alpha,\beta]\subset l$ and satisfies $\dot{x}(t) = f(t,x(t))$ almost everywhere.
}

The existence of the solution can be extended to a {\it set of free variables}, with help of {\it differential inclusion}. Indeed, under the same assumptions on $u_f(t)$ and $\eta^\circ$, an absolutely continuous function $\eta(t)$ is a solution of \eqref{eqn:zero_dynamics} if and only if $\eta(t)$ is a solution of the differential inclusion
\begin{equation}
    \dot{\eta}(t) \in {\mathcal{F}}_\eta^\circ[\eta(t),0] = \bigcup_{u_f\in {\mathcal{U}_f}}\{{f}_\eta^\circ[\eta(t),0] + ({G}_\eta N)[\eta(t),0]u_f\}
    \label{eqn:zero_dynamics_inclusion}
\end{equation}
at least locally \cite[Theorem 2.3, \S4.2]{Smirnov2002}.
Motivated by the discussions made so far, we call \eqref{eqn:zero_dynamics_inclusion} the zero dynamics of the redundant system \eqref{eqn:dynamical_system_with_single_task} (written in a differential inclusion form).
Note that for the case of square systems, \eqref{eqn:zero_dynamics_inclusion} is the very zero dynamics defined in a conventional way (but in a differential inclusion form), because for $x\in {\mathcal{D}}$ we have
\begin{equation*}
    {p}={m} \implies N(x) = 0 \implies {\mathcal{F}}_\eta^\circ(x) = \{{f}^\circ_\eta(x)\}.
\end{equation*}
Following the same reasoning process, we also have a counterpart of \eqref{eqn:zero_dynamics_inclusion} in the original coordinate
\begin{equation}
    \dot{x} \in \mathcal{F}^\circ(x) = \bigcup_{u_f\in \mathcal{U}_f}\{f^\circ(x) + (GN)(x)u_f\} \quad (x\in \mathcal{Z}).
    \label{eqn:zero_dynamics_inclusion_original_coordinate}
\end{equation}
Obviously, \eqref{eqn:zero_dynamics_inclusion_original_coordinate} holds for square systems because if ${p}={m}$, then ${\mathcal{F}}^\circ(x) = \{f^\circ(x)\}$ for all $x\in {\mathcal{D}}$.
We close this subsection by noting that, representing the internal dynamics as the differential inclusion as above will be useful when we consider multiple tasks with priority in the following sections.

\subsection{Prioritized Normal Form}
\label{subsec:prioritized_normal_form}

We consider a dynamical system $\mathfrak{S}$ defined with a set of tasks ${\mathfrak{T}}_i$, having the form
\begin{subequations}\label{eqn:dynamical_system_with_multiple_tasks}
\begin{align} 
    \dot{x} &= f(x) + G(x)u \\
    y_i &= h_i(x) \quad (1\le i\le {k})
\end{align}
\end{subequations}
where $f:\mathbb{R}^n\to\mathbb{R}^n$, $G:\mathbb{R}^n\to\mathbb{R}^{n\times {m}}$, and $h_i:\mathbb{R}^n\to\mathbb{R}^{{p}_i}$ are sufficiently smooth on $\mathbb{R}^n$ and ${p}_i \le {m}\le n$.
Without loss of generality, $f(0) = 0$ and $h_i(0) = 0$.
Let ${p}_{i:j} = {p}_i + \cdots + {p}_j$, and ${\mathfrak{S}}_{i:j}$ be the system
\begin{align*}
    \dot{x} &= f(x) + G(x)u \\
    y_a &= h_a(x) \quad (i\le a\le j).
\end{align*}
The task variables $y_1,\dots,y_{{k}}$ are assumed to be chosen to represent the associated tasks ${\mathfrak{T}}_1,\dots,{\mathfrak{T}}_{{k}}$, respectively.

In this section, we propose a normal-form like representation of \eqref{eqn:dynamical_system_with_multiple_tasks} with the priority of the tasks taken into account, which is called a \emph{prioritized normal form} of \eqref{eqn:dynamical_system_with_multiple_tasks} in this work.
One way to find a normal form of \eqref{eqn:dynamical_system_with_multiple_tasks} is to combine all tasks as a single task ${\mathfrak{T}} = ({\mathfrak{T}}_1,\dots,{\mathfrak{T}}_{{k}})$, which is unprioritized tasks, with a task variable $y\coloneqq y_{1:{k}}$. 
Then, the normal form \eqref{eqn:normal_form} of \eqref{eqn:dynamical_system_with_single_task} can be found under the assumptions that ${p} = {p}_{1:{k}}\le {m}$ and the system \eqref{eqn:dynamical_system_with_single_task} has a relative degree at $x=0$.
However, we cannot always assume those conditions and even if it is possible, a neighborhood ${\mathcal{D}}\subset \mathbb{R}^n$ of the origin, in which the vector relative degree and the diffeomorphism of \eqref{eqn:dynamical_system_with_single_task} are defined, can be very limited.
With this kept in mind, in this work we deal with a class of nonlinear systems \eqref{eqn:dynamical_system_with_multiple_tasks} satisfying a weaker condition than above, stated in the following definition.

\begin{definition}
    \label{def:prioritizable}
    The system \eqref{eqn:dynamical_system_with_multiple_tasks} is said to be \textit{prioritizable} at $x_0\in\mathbb{R}^n$ if $G(x_0)$ has full rank and each subsystem ${\mathfrak{S}}_i$ has a vector relative degree $({r}_{i1},\dots,{r}_{i{p}_i})\in\mathbb{N}^{{p}_i}$ at $x_0$.
    $\hfill\square$
\end{definition}

From now on, we assume that the tasks are prioritized as $[{\mathfrak{T}}_1,\dots,{\mathfrak{T}}_{k}]$ and the system \eqref{eqn:dynamical_system_with_multiple_tasks} is prioritizable at 0.
Then, there exists a neighborhood ${\mathcal{D}}\subset\mathbb{R}^n$ of $0$ in which the following statements hold:
\begin{enumerate}
    \item $G(x)$ is nonsingular on $\mathcal{D}$;
    \item 
    for each $i$, there exists a neighborhood ${\mathcal{D}}_i\subset{\mathcal{D}}$ of 0 such that, with a total relative degree ${r}_i$ and a coordinate transformation
    \begin{equation*}
        z_i = \begin{bmatrix} \eta_i \\ \xi_i \end{bmatrix} = \begin{bmatrix} \phi_{i,1:n-{r}_i} \\ \phi_{i,n-{r}_i+1:n} \end{bmatrix} = \Phi_i(x)
    \end{equation*}
    defined over $\mathcal{D}_i$, the subsystem ${\mathfrak{S}}_i$ has the following  normal form    on $\mathcal{D}_i$:
    \begin{subequations}\label{eqn:normal_form_of_s_i}
    \begin{align}
        \dot{\eta}_i &= {f}_i[\eta_i,\xi_i] + G_i[\eta_i,\xi_i]u \label{eqn:normal_form_of_s_i_1} \\
        \dot{\xi}_i &= A_i\xi_i + B_i\left(\kappa_i[\eta_i,\xi_i] + J_i[\eta_i,\xi_i]u\right) \label{eqn:normal_form_of_s_i_2} \\
        y_i &= C_i\xi_i \label{eqn:normal_form_of_s_i_3}
    \end{align}
    \end{subequations}
    (hereinafter, for any function $a(x)$ defined on $\mathcal{D}_i$, we use the symbol $a[\eta_i,\xi_i]$ instead of $a(\Phi_i^{-1}(\eta_i,\xi_i))$) with $f_i(x) = L_f\phi_{i,1:n-r_i}(x)$ and $G_i(x) = L_G\phi_{i,1:n-r_i}(x)$;
    \item $L_GL_f^\alpha h_{ij}(x) = 0$ for all $1\le i\le k$, $1\le j\le p_i$, $0\le \alpha\le r_{ij}-2$, and $x\in\mathcal{D}$ where $h_{ij}$ is the $j$-th component of $h_i$.
\end{enumerate}

 It is noted that the subsystem \eqref{eqn:normal_form_of_s_i_2} and \eqref{eqn:normal_form_of_s_i_3}, also known as the partial normal form of ${\mathfrak{S}}_i$, is equivalent to
\begin{equation*}
    \mathrm{col}\left(\frac{d^{{r}_{i1}}y_{i1}}{dt^{{r}_{i1}}},\dots,\frac{d^{{r}_{i{p}_i}}y_{i{p}_i}}{dt^{{r}_{i{p}_i}}}\right) = \kappa_i(x) + J_i(x)u
\end{equation*}
where $y_{ij}$ is the $j$-th component of $y_i$, which holds not only on $\mathcal{D}_i$ but also on $\mathcal{D}$, whenever the first item of Definition \ref{def:relative_degree} is satisfied.
Thus, it is possible to extend the domain of \eqref{eqn:normal_form_of_s_i_2} and \eqref{eqn:normal_form_of_s_i_3} to $\mathcal{D}$ in a sense that \eqref{eqn:normal_form_of_s_i_2} and \eqref{eqn:normal_form_of_s_i_3} are replaced with
\begin{equation}
        \dot{\xi}_i = A_i\xi_i + B_i\left(\kappa_i(x) + J_i(x)u\right),\quad y_i = C_i \xi_i.
    \label{eqn:subsystem_on_D}
\end{equation}
From a geometric point of view, \eqref{eqn:subsystem_on_D} says that $\mathcal{R}(J_i^T(x))$ represents the set of available control input $u$ used to achieve ${\mathfrak{T}}_i$ at $x$, in a sense that all $u\in\mathcal{N}(J_i(x)) = \mathcal{R}(J_i^T(x))^\perp$ do not affect \eqref{eqn:subsystem_on_D} at $x$.
Thus, with no restriction on ${\mathfrak{T}}_i$, the ability of achieving ${\mathfrak{T}}_i$ becomes maximal if $J_i(x)$ has full rank so that ${\rm dim}(\mathcal{R}(J_i^T(x))) = p_i$: similarly, ${\rm dim}(\mathcal{R}(J^T(x))) = p$ guarantees to achieve all the possible tasks. 
For systems of being prioritizable, ${\rm dim}(\mathcal{R}(J^T(x))) = p$ cannot be achieved in general. 
In more details, $J_i(x)$ in \eqref{eqn:subsystem_on_D} is nonsingular on ${\mathcal{D}}_i$ but $J(x)$ is allowed to be singular on any point in $\mathcal{D}$ (possibly including the origin) particularly at which a conflict between two or more tasks takes place.
For example, suppose that ${\mathfrak{T}}_i$ is incompatible with the higher-priority tasks ${\mathfrak{T}}_1,\dots,{\mathfrak{T}}_{i-1}$ at some $x\in {\mathcal{D}}$; in other words,
\begin{equation*}
    \mathrm{rank}(J_{1:i}(x)) < \mathrm{rank}(J_{1:i-1}(x)) + \mathrm{rank}(J_i(x)).
\end{equation*}
Then there are some rows of $J_i(x)$ that are linearly dependent on the rows of $J_{1:i-1}(x)$, which means that
\begin{equation*}
    \mathcal{V} = \mathcal{R}(J_{1:i-1}^T(x))\cap\mathcal{R}(J_i^T(x)) \neq \{0\}.
\end{equation*}
The subspace $\mathcal{V}$ is the set of all shared control inputs that are necessary for achieving all ${\mathfrak{T}}_1,\dots,{\mathfrak{T}}_{i-1}$ and ${\mathfrak{T}}_i$. 
It then follows that both $y_{1:i-1}$ and $y_i$ cannot be controlled simultaneously and arbitrarily at $x$, leading to the failure of achieving all ${\mathfrak{T}}_1$, $\dots$, ${\mathfrak{T}}_i$. 

In order to deal with such a lack of control resources for $[{\mathfrak{T}}_1,\dots,{\mathfrak{T}}_{i}]$, we suggest to focus on the higher-priority tasks ${\mathfrak{T}}_1$, $\cdots$, ${\mathfrak{T}}_{i-1}$, while the lower one ${\mathfrak{T}}_i$ remains out of interest at least at the point $x$. 
In our case, this strategy can be implemented by factorizing $\mathcal{R}(J^T(x))$ on ${\cal D}$ with the priority taken into account. 
For this purpose, we introduce the null-space operator $N_{1:i}(x)$  and the orthogonal projector $P_i(x)$, which are constructed by the following recursive relations:
\begin{align}
N_{1:i} = I_{{m}} - J_{1:i}^+J_{1:i} ~~ (1\leq i \leq k),\quad 
 N_{1:0} = I_{{m}}
\end{align}
and 
\begin{subequations}
\begin{align}
    P_i & = (J_iN_{1:i-1})^+(J_iN_{1:i-1}),~~ (1\leq i \leq k)\\
    P_{{k}+1} & = I_{{m}} - P_{1:{k}}, 
\end{align}
\end{subequations}
where $ P_{i:j} \coloneqq P_i + \cdots + P_j$ for $1\leq i\le j\leq k+1$. 
Let us also denote the rank of $P_i(x)$ as $\rho_i(x)= {\rm rank}(P_i(x))\geq 0$ (which depends on $x$ and thus possibly becomes $0$ at some $x$), and ${\rho}_{i:j} = {\rho}_i + \cdots + {\rho}_j$, and ${\rho} = {\rho}_{1:{k}}$.
Some useful properties on $P_i$, $N_{1:i}$, and $\rho_i$ are summarized as below. 
\begin{lemma}
    For all $x\in {\mathcal{D}}$, the following statements hold:
    \begin{enumerate}
        \label{lem:orthogonal_projectors}
        \item $P_i(x)P_j(x) = 0$ if $i\neq j$;
        \item $ N_{1:i}(x) = I_{{m}} - P_{1:i}(x)$;
        \item $J_i(x) = J_i(x)P_{1:i}(x)$;
        \item $J_i(x)P_i(x) = J_i(x)N_{1:i-1}(x)$;
        \item ${\rho}_i(x) = \mathrm{rank}(J_i(x)P_i(x)) \le {p}_i$;
        \item ${\rho}_{1:i}(x) = \mathrm{rank}(P_{1:i}(x)) = \mathrm{rank}(J_{1:i}(x)) \le {p}_{1:i}$;
        \item There exist a lower triangular matrix $L_e(x)\in\mathbb{R}^{{p}\times {m}}$ and an orthogonal matrix $Q_e(x)\in\mathbb{R}^{{m}\times {m}}$ such that
        \begin{equation}
            \underbrace{\begin{bmatrix} J_1 \\ \vdots \\ J_{{k}}\end{bmatrix}}_{J(x)} 
            = \underbrace{\begin{bmatrix} L_{11} & \cdots & 0 & 0 \\ \vdots & \ddots & \vdots & \vdots \\ L_{{k}1} & \cdots & L_{{k}{k}} & 0\end{bmatrix}}_{L_e(x)=[L_{ij}(x)]}
            \underbrace{\begin{bmatrix}Q_1 \\ \vdots \\ Q_{{k}+1}\end{bmatrix}}_{Q_e(x)},
            \label{eqn:orthogonalization_of_J}
        \end{equation}
        $L_e^T(x)$ is in a row echelon form with positive leading entries, and 
        \begin{equation}
            \begin{split}
                J_iP_j &= L_{ij}Q_j \\
                P_j &= Q_j^TQ_j
            \end{split}
            \quad \left(\begin{array}{c}1\le i\le {k} \\ 1\le j\le {k}+1 \end{array}\right)
            \label{eqn:properties_of_orthogonalization_of_J}
        \end{equation}
        where $L_{ij}(x)\in\mathbb{R}^{{p}_i\times {\rho}_j(x)}$ is the $(i,j)$-th block of $L_e(x)$, $Q_i(x)\in\mathbb{R}^{{\rho}_i(x)\times {m}}$ is the $i$-th row block of $Q_e(x)$, and 
        \begin{equation}
            \label{eqn:definitions_when_r_j_equals_0}
            \begin{split}
                L_{ij}(x)Q_j(x) &\coloneqq 0 \in\mathbb{R}^{{p}_i\times {m}} \\
                Q_j^T(x)Q_j(x) &\coloneqq 0\in\mathbb{R}^{{m}\times {m}}
            \end{split}
            \quad ({\rho}_j(x) = 0).
        \end{equation}
        $\hfill\square$
    \end{enumerate}
\end{lemma}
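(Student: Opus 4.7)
The entire lemma is pointwise in $x \in \mathcal{D}$, so I would fix $x$ and suppress the argument. The single workhorse is the fact that for any matrix $A$, $A^+A$ is the orthogonal projector onto $\mathcal{R}(A^T)$, and $I-A^+A$ is the orthogonal projector onto $\mathcal{N}(A) = \mathcal{R}(A^T)^\perp$. Hence $N_{1:i}$ is the orthogonal projector onto $\mathcal{N}(J_{1:i})$ and, for $1\le i\le k$, $P_i$ is the orthogonal projector onto $\mathcal{R}\bigl((J_iN_{1:i-1})^T\bigr) = \mathcal{R}(N_{1:i-1}J_i^T)$.

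The crux is item (2), which I would prove by induction on $i$ together with the auxiliary statement that $P_{1:i}$ is the orthogonal projector onto $\mathcal{R}(J_{1:i}^T)$ and that $\mathcal{R}(P_1),\dots,\mathcal{R}(P_i)$ are mutually orthogonal. The base $i=1$ is immediate from $N_{1:0}=I$. For the inductive step, assuming $P_{1:i-1}$ projects onto $\mathcal{R}(J_{1:i-1}^T)$ and $N_{1:i-1}=I-P_{1:i-1}$, I decompose $J_i^T = P_{1:i-1}J_i^T + N_{1:i-1}J_i^T$. The second summand spans $\mathcal{R}(P_i)$ by the interpretation above, and it lies in $\mathcal{R}(N_{1:i-1}) = \mathcal{R}(J_{1:i-1}^T)^\perp$. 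Consequently,
\begin{equation*}
    \mathcal{R}(J_{1:i}^T) = \mathcal{R}(J_{1:i-1}^T) + \mathcal{R}(J_i^T) = \mathcal{R}(P_{1:i-1}) \oplus^\perp \mathcal{R}(P_i),
\end{equation*}
whose orthogonal projector is $P_{1:i-1}+P_i = P_{1:i}$; this coincides with $J_{1:i}^+J_{1:i}$, giving $N_{1:i}=I-P_{1:i}$.

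Items (1), (3), (4) then drop out quickly. Orthogonality of $\mathcal{R}(P_j)$ for $j\le k$ is built into the induction, while $\mathcal{R}(P_{k+1}) = \mathcal{N}(J)$ is orthogonal to every $\mathcal{R}(P_j)$ for $j\le k$, establishing (1). For (3), the rows of $J_i$ lie in $\mathcal{R}(J_{1:i}^T)$, so $J_iP_{1:i}=J_i$. Then (4) follows from $J_iP_i = J_i(P_{1:i}-P_{1:i-1}) = J_i - J_iP_{1:i-1} = J_iN_{1:i-1}$. Items (5) and (6) reduce to rank computations on orthogonal projectors: $\rho_i = \mathrm{rank}(P_i) = \mathrm{rank}(J_iN_{1:i-1}) = \mathrm{rank}(J_iP_i) \le \mathrm{rank}(J_i)\le p_i$, and the pairwise orthogonality of $\mathcal{R}(P_j)$ gives $\rho_{1:i} = \sum_{j=1}^i\rho_j = \mathrm{rank}(P_{1:i}) = \mathrm{rank}(J_{1:i})\le p_{1:i}$.

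For the LQ factorization (7), I would choose $Q_j \in \mathbb{R}^{\rho_j\times m}$ to have rows forming an orthonormal basis of $\mathcal{R}(P_j)$ (with the zero conventions in \eqref{eqn:definitions_when_r_j_equals_0} when $\rho_j=0$), so that $Q_j^TQ_j = P_j$. Mutual orthogonality of the ranges together with $\sum_{j=1}^{k+1}\rho_j = m$ makes $Q_e$ orthogonal. Set $L_{ij}=J_iQ_j^T$; then $L_{ij}Q_j = J_iP_j$ and summing in $j$ yields $J_i = \sum_j J_iP_j = \sum_j L_{ij}Q_j$. The lower triangular pattern comes from showing $J_iP_j=0$ for $j>i$: columns of $P_j$ sit in $\mathcal{R}(P_j)\subseteq \mathcal{R}(N_{1:j-1}) = \mathcal{N}(J_{1:j-1})$, which annihilates $J_i$ as a row block of $J_{1:j-1}$. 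The only delicate part I anticipate is keeping the inductive decomposition in step~2 honest about which projector lives on which side and checking $N_{1:i-1}J_i^T \perp J_{1:i-1}^T$; once that orthogonal decomposition is secured, everything else is bookkeeping with projector identities.
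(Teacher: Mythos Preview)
Your proposal is correct and covers all seven items. The underlying mathematics matches the paper's proof---orthogonal projector identities and an inductive orthogonal decomposition of $\mathcal{R}(J_{1:i}^T)$---but the organization and level of self-containment differ in ways worth noting.

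For items (2)--(6), the paper does not carry out the induction you describe; instead it cites an external result (Baerlocher--Boulic) for the recursion $N_{1:i} = N_{1:i-1} - P_i$ and then reads off everything else in two lines. Your argument proves this recursion from scratch via the orthogonal splitting $\mathcal{R}(J_{1:i}^T) = \mathcal{R}(P_{1:i-1}) \oplus^\perp \mathcal{R}(P_i)$, which is more self-contained and makes the geometric content (the same decomposition the paper later states separately as \eqref{eqn:properties_of_orthogonal_subspaces}) explicit inside the proof. For item (7), the paper runs an explicit row-wise Gram--Schmidt (Algorithm~\ref{alg:find_qr_decomposition}) on $[J\ 0]$, then reindexes nonzero rows and fills in an orthonormal basis of $\mathcal{N}(J)$; the identities $P_j = Q_j^TQ_j$ and $J_iP_j = L_{ij}Q_j$ are checked afterward by a second induction. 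Your construction---pick $Q_j$ with orthonormal rows spanning $\mathcal{R}(P_j)$ and set $L_{ij} = J_iQ_j^T$---yields these identities immediately and avoids the reindexing, at the cost of being less algorithmic. Both routes are valid; yours is shorter and more conceptual, the paper's is closer to what one would implement.
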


\begin{proof}
    See Appendix \ref{app:proof_of_lemma_about_orthogonal_projectors}.
\end{proof}

\begin{remark}
    Notice that if ${\rho}_j(x) = 0$, then $L_{ij}(x)$, $1\le i\le {k}$, and $Q_j(x)$ are vanished in the decomposition \eqref{eqn:orthogonalization_of_J}.
    Nonetheless, in order to keep consistency of notations, we define \eqref{eqn:definitions_when_r_j_equals_0} for the case when $\rho_j(x)=0$, so that the structural properties in  \eqref{eqn:orthogonalization_of_J} and \eqref{eqn:properties_of_orthogonalization_of_J} hold for all ${\rho}_j(x)\ge0$.
    $\hfill\square$
\end{remark}

By letting $L=L_{1:k,1:k}$ and $Q=Q_{1:k}$, we have the reduced decomposition $J = LQ$ along with $P \coloneqq J^+J = Q^TQ = P_{1:k}$ and $N \coloneqq I-J^+J = N_{1:k} = P_{k+1}$.
For the later use, we also define $\Omega_{i:j} = \rho_{i:j}^{-1}(p_{i:j})$.
Then, each $\Omega_{1:i}$ is open and $\Omega_{1:k}\subset\Omega_{1:k-1}\subset\cdots\subset\Omega_1\subset{\cal D}$ by Item 6) of Lemma \ref{lem:orthogonal_projectors}.

The orthogonal projectors $P_i(x)$ decompose the input space $\mathbb{R}^{{m}}$ into mutually orthogonal subspaces $\mathcal{R}(P_i(x))$ as
\begin{equation*}
    \mathbb{R}^m = \mathcal{R}(P_1(x)) + \mathcal{R}(P_2(x)) + \cdots + \mathcal{R}(P_{{k}+1}(x)).
\end{equation*}
We now claim that each set $\mathcal{R}(P_i(x))$ satisfies
\begin{subequations}\label{eqn:properties_of_orthogonal_subspaces}
\begin{align}
    \mathcal{R}(J_{1:i}^T(x)) &= \mathcal{R}(J_{1:i-1}^T(x)) + \mathcal{R}(P_i(x)),\label{eqn:properties_of_orthogonal_subspaces_1}\\
    \mathcal{R}(J_{1:i-1}^T(x)) & \perp {\mathcal{R}(P_i(x))}.\label{eqn:properties_of_orthogonal_subspaces_2}
\end{align}
\end{subequations}
Indeed, by Lemma \ref{lem:orthogonal_projectors} we have, for $1\le j\le i$,
\begin{equation*}
    \mathcal{R}(J_{1:j}^T(x)) 
    = \mathcal{R}(Q_{1:j}^T(x)L_{1:j,1:j}^T(x)) 
    = \mathcal{R}(P_{1:j}(x))
\end{equation*}
where the last equality is derived from Item~3 of the lemma and the fact that $L_{ii}$ has full rank. 
From the above equality with $j=i-1$ and $j=i$, we have \eqref{eqn:properties_of_orthogonal_subspaces_1}. 
On the other hand, \eqref{eqn:properties_of_orthogonal_subspaces_2} is derived from the above and Item~1 of Lemma~\ref{lem:orthogonal_projectors}. 
The claim on \eqref{eqn:properties_of_orthogonal_subspaces} says that, each set $\mathcal{R}(P_i(x))$ is the maximum available set of control inputs for ${\mathfrak{T}}_i$ that do not influence the control of ${\mathfrak{T}}_1$, $\dots$,${\mathfrak{T}}_{i-1}$. 
With this kept in mind, we represent the input $u$ in $\mathbb{R}^m$ as a sum of mutually orthogonal vectors  
\begin{align}
    u = u_1 + \cdots + u_{{k}+1},\quad u_i = P_i(x)u \in \mathcal{R}(P_i(x)).\label{eqn:input_decomposition}
\end{align}

We are in place of providing a {\it prioritized} normal form representation of the system \eqref{eqn:dynamical_system_with_multiple_tasks}, which is the domain-extended subsystem \eqref{eqn:subsystem_on_D} along with the internal dynamics \eqref{eqn:normal_form_of_s_i_1} and the orthogonal decomposition \eqref{eqn:input_decomposition} of the input.

\begin{definition}
    \label{def:prioritized_normal_form}
    For a given system \eqref{eqn:dynamical_system_with_multiple_tasks} that is prioritizable at $0$, a \textit{prioritized normal form} of \eqref{eqn:dynamical_system_with_multiple_tasks} is defined as 
\begin{subequations}\label{eqn:prioritized_normal_form}
    \begin{align}
        \dot{\eta}_i &= f_i(x) + G_i(x)u \label{eqn:prioritized_normal_form_1} \\
        \dot{\xi}_{i} &= A_{i}\xi_{i} + B_{i}\bigg( \kappa_{i}(x) + \sum_{j=1}^iL_{ij}(x)Q_j(x)u_j \bigg) \label{eqn:prioritized_normal_form_2} \\
        y_i &= C_i\xi_i \label{eqn:prioritized_normal_form_3}
    \end{align}
    \end{subequations}
    for $1\leq i \leq k$, where $u_i$ is defined in \eqref{eqn:input_decomposition}, \eqref{eqn:prioritized_normal_form_1} holds on ${\mathcal{D}}_i$ with $x = \Phi_i^{-1}(\eta_i,\xi_i)$, and \eqref{eqn:prioritized_normal_form_2}--\eqref{eqn:prioritized_normal_form_3} hold on ${\mathcal{D}}$.
    $\hfill\square$
\end{definition}

From now on, we will pay special attention to a class of prioritizable systems, regarding the prioritized normal form \eqref{eqn:prioritized_normal_form} as the system to be controlled. 
The following theorem says that, for the usual cases with a well-defined vector relative degree, the prioritized normal form \eqref{eqn:prioritized_normal_form} turns out to be the conventional normal form \eqref{eqn:normal_form}.

\begin{theorem}
    \label{thm:prioritized_normal_form_can_be_reduced_to_Byrnes_Isidori_normal_form}
    Assume that $p\le m$, the system \eqref{eqn:dynamical_system_with_single_task} with  $({\mathfrak{T}}_1,\dots,{\mathfrak{T}}_k)$ has a vector relative degree at $0$, and $G(0)$ is nonsingular. 
    Then the associated system \eqref{eqn:dynamical_system_with_multiple_tasks} with $[{\mathfrak{T}}_1,\dots,{\mathfrak{T}}_k]$ is prioritizable at $0$.
    Moreover, for given prioritized normal form \eqref{eqn:prioritized_normal_form} of \eqref{eqn:dynamical_system_with_multiple_tasks}, there exists a normal form \eqref{eqn:normal_form} of \eqref{eqn:dynamical_system_with_single_task} that is a reduced-order model of \eqref{eqn:prioritized_normal_form} on a neighborhood of $0$.
    $\hfill\square$
\end{theorem}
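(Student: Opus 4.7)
The plan is to split the theorem into two parts—prioritizability and the reduced-order correspondence—and to leverage the fact that both the conventional normal form \eqref{eqn:normal_form} and the prioritized normal form \eqref{eqn:prioritized_normal_form} stem from the same Lie-derivative construction underlying Lemma~\ref{lem:diffeomorphism}.

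For prioritizability, I would observe that the hypothesis that the unprioritized system \eqref{eqn:dynamical_system_with_single_task} has vector relative degree $(r_{11},\ldots,r_{1p_1},\ldots,r_{k1},\ldots,r_{kp_k})$ at $0$ decomposes automatically into a vector relative degree hypothesis for every subsystem $\mathfrak{S}_i$. Item~1) of Definition~\ref{def:relative_degree} for the combined system is literally the same family of identities as item~1) restricted to indices $(i,j)$ with the task index $i$ fixed. For item~2), the subsystem input-gain matrix $J_i(0)$ is the $i$-th row block of $J(0)$, so the full row rank $p = p_1 + \cdots + p_k$ of $J(0)$ forces each $J_i(0)$ to have full row rank $p_i$. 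Combined with the nonsingularity of $G(0)$, this meets Definition~\ref{def:prioritizable}.

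For the reduced-order correspondence, I would first invoke Lemma~\ref{lem:diffeomorphism} on the combined system to build a diffeomorphism whose $\xi$-block is exactly $(\xi_1,\ldots,\xi_k)$ of the prioritized normal form; this is legitimate because both forms extract the same family of Lie derivatives $L_f^{\,j-1} h_{ij}$. Next I would verify that stacking the $\dot{\xi}_i$-equations \eqref{eqn:prioritized_normal_form_2} reproduces the $\dot{\xi}$-equation in \eqref{eqn:normal_form}. This is where Lemma~\ref{lem:orthogonal_projectors}, item~7), does the work: mutual orthogonality $P_j P_{j'} = 0$ for $j\neq j'$ yields $Q_j u_{j'} = \delta_{jj'} Q_j u$, hence $\sum_{j=1}^{i} L_{ij}(x) Q_j(x) u_j$ collapses to the $i$-th row block of $J(x) u$, and the $k$ stacked equations assemble into $\dot{\xi} = A\xi + B(\kappa(x) + J(x) u)$ with the block-diagonal $A$, $B$ of \eqref{eqn:normal_form}.

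The main obstacle lies in the $\eta$-block. The combined $\eta$ has dimension $n-r$ with $r = r_1 + \cdots + r_k$, whereas each prioritized $\eta_i$ has the strictly larger dimension $n-r_i$, the extra $r - r_i$ coordinates implicitly encoding the $\xi_j$ with $j\neq i$. My approach is to choose the $n-r$ complementary coordinates $\phi_{1:n-r}$ of the combined diffeomorphism as a single smooth completion of the stacked map $x\mapsto (\xi_1,\ldots,\xi_k)$, whose existence on $\bigcap_{i=1}^k \mathcal{D}_i$ is guaranteed by the implicit function theorem since the Jacobian $\partial \xi/\partial x$ has rank $r$ at $0$. Writing $\bar{\xi}_i := (\xi_1,\ldots,\xi_{i-1},\xi_{i+1},\ldots,\xi_k)$, the tuple $(\eta,\bar{\xi}_i)$ then supplies $n-r_i$ functionally independent coordinates complementing $\xi_i$, so a local diffeomorphism $\Psi_i$ permits the identification $\eta_i = \Psi_i(\eta,\bar{\xi}_i)$. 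Substituting this into \eqref{eqn:prioritized_normal_form_1} and projecting onto the $\eta$-components yields the $\dot{\eta}$-equation of \eqref{eqn:normal_form}: the redundant $\bar{\xi}_i$ copies embedded in the various $\eta_i$'s are merged, exhibiting \eqref{eqn:normal_form} as a reduced-order model on the neighborhood $\bigcap_{i=1}^k \mathcal{D}_i$ of $0$.
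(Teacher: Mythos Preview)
Your treatment of prioritizability and of the $\xi$-equations is correct and in fact spells out more explicitly than the paper why the stacked equations \eqref{eqn:prioritized_normal_form_2} collapse to $\dot\xi=A\xi+B(\kappa+Ju)$ via the projector orthogonality of Lemma~\ref{lem:orthogonal_projectors}.

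There is, however, a genuine gap in your handling of the internal part. The claim that ``a local diffeomorphism $\Psi_i$ permits the identification $\eta_i=\Psi_i(\eta,\bar\xi_i)$'' is false in general. Having two coordinate systems $(\eta_i,\xi_i)$ and $(\eta,\bar\xi_i,\xi_i)$ that share the last block $\xi_i$ does \emph{not} force the change of variables between the complementary blocks to be independent of $\xi_i$. A simple instance: with $n=3$, $r_1=r_2=1$, $\xi_1=x_2$, $\xi_2=x_3$, $\eta=x_1$, one may legitimately take $\Phi_1(x)=(x_1,\,x_3+x_2,\,x_2)$, so $\eta_1=(x_1,x_3+x_2)$ depends on $\xi_1=x_2$ and cannot be written as a function of $(\eta,\bar\xi_1)=(x_1,x_3)$ alone. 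The correct statement is $\eta_i=\Psi_i(\eta,\xi)$ with $\Psi_i$ smooth and, for each fixed value of $\xi_i$, a local diffeomorphism in the remaining arguments. Your subsequent ``projecting onto the $\eta$-components'' step then needs to be rephrased: once \eqref{eqn:normal_form} is obtained directly from the combined diffeomorphism $\Phi$ via Lemma~\ref{lem:diffeomorphism}, the redundancy of the $k$ copies of \eqref{eqn:prioritized_normal_form_1} follows by the chain rule applied to $\eta_i=\Psi_i(\eta,\xi)$, not by a projection.

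As for the comparison: the paper proceeds inductively, absorbing $\xi_{i+1}$ into the complementary coordinates of $\mathfrak{S}_{1:i}$ at each step so that $\eta_{1:i}=(\eta_{1:i+1},\xi_{i+1})$ after a suitable reordering, and only at the end obtains the combined normal form of $\mathfrak{S}_{1:k}$. Your direct route---build the combined $\Phi$ first, then read off the redundancy of each $\eta_i$---is arguably cleaner once the gap above is patched, and it makes the ``reduced-order'' claim more transparent: the single pair $(\eta,\xi)$ determines every $\eta_i$ through $\Psi_i$, so the $k$ internal-dynamics equations of the prioritized normal form are over-determined copies of the one in \eqref{eqn:normal_form}.
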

\begin{proof}
    Since nonsingularity of $J(0)$ implies nonsingularity of $J_i(0)$, eash system ${\mathfrak{S}}_i$ has a relative degree at $0$ and the system \eqref{eqn:dynamical_system_with_multiple_tasks} of $[{\mathfrak{T}}_1,\dots,{\mathfrak{T}}_k]$ is prioritizable at $0$.
    Thus, there exists a prioritized normal form \eqref{eqn:prioritized_normal_form} of \eqref{eqn:dynamical_system_with_multiple_tasks} in Definition \ref{def:prioritized_normal_form}.
    Obviously, the prioritized normal form is reduced to the normal form \eqref{eqn:normal_form_of_s_i} of ${\mathfrak{S}}_1$ on $\mathcal{D}_1$.
    By Lemma \ref{lem:diffeomorphism}, there exists a neighborhood $\mathcal{D}_{1:2}\subset{\cal D}_1\cap\Omega_{1:2}$ of $0$ such that
    \begin{align*}
        &\{d\phi_{2,n-r_2+1},\dots,d\phi_{2n}\} 
        \subset \mathrm{span}\{d\phi_{11}(x),\dots,d\phi_{1,n-r_1}(x)\}
    \end{align*}
    for all $x\in \mathcal{D}_{1:2}$.
    Without loss of generality, one can reorder $\phi_{1:2}(x)$ so that $\phi_{2,n-r_2+i}(x) = \phi_{1,n-r_1-r_2+i}(x)$ holds for $1\le i\le r_2$ on $\mathcal{D}_{1:2}$.
    Thus, we have $\eta_1 = \mathrm{col}(\eta_{1:2},\xi_2)$ on $\mathcal{D}_{1:2}$ where $\eta_{1:2} = \phi_{1,1:n-r_1-r_2}(x)$.
    It follows that the prioritized normal form can be reduced to the normal form
    \begin{align*}
        \dot{\eta}_{1:2} &= f_{1:2}[\eta_{1:2},\xi_{1:2}] + G_{1:2}[\eta_{1:2},\xi_{1:2}]u \\
        \dot{\xi}_{1:2} &= A_{1:2}\xi_{1:2} + B_{1:2}\left(\kappa_{1:2}[\eta_{1:2},\xi_{1:2}] + J_{1:2}[\eta_{1:2},\xi_{1:2}]u\right) \\
        y_{1:2} &= C_{1:2}\xi_{1:2}
    \end{align*}
    of ${\mathfrak{S}}_{1:2}$ on $\mathcal{D}_{1:2}$ where $\xi_{1:2} = \mathrm{col}(\xi_1,\xi_2)$.
    By following the same procedure, we see that for each ${\mathfrak{S}}_{1:i}$, there exists a neighborhood $\mathcal{D}_{1:i} \subset {\cal D}_{1:i-1}\cap\Omega_{1:i}$ of $0$ such that the prioritized normal form can be reduced to a normal form of ${\mathfrak{S}}_{1:i}$.
    Obviously, ${\mathfrak{S}}_{1:k}$ is the system \eqref{eqn:dynamical_system_with_single_task}.
\end{proof}

\begin{remark}
It is clearly seen from Theorem~\ref{thm:prioritized_normal_form_can_be_reduced_to_Byrnes_Isidori_normal_form} that the prioritized normal form \eqref{eqn:prioritized_normal_form} is an extension of the conventional normal form \eqref{eqn:normal_form}, where the former is available even at singular points in the state space where \eqref{eqn:dynamical_system_with_single_task} with unprioritized tasks $({\mathfrak{T}}_1,\dots,{\mathfrak{T}}_k)$ does not have a relative degree.
Additionally, the prioritized normal form accepts the case $p > m$: that is, the system with  $({\mathfrak{T}}_1,\dots,{\mathfrak{T}}_k)$ is allowed to be deficient.
A possible usage of this case, e.g., $p_{1:k-1} = m$, is that we may make the last task ${\mathfrak{T}}_k$ have a control strategy that is activated only when the system is in singularity.
$\hfill\square$
\end{remark}

We close this section by remarking important points on the internal dynamics \eqref{eqn:prioritized_normal_form_1}. 
If we do not need to consider the internal behaviors of the prioritized normal form, it is enough to deal with \eqref{eqn:prioritized_normal_form_2} and \eqref{eqn:prioritized_normal_form_3} termed the \textit{partial} prioritized normal form.
However, in general handling the internal dynamics is inevitable and sometimes cumbersome.
Since a single $\Phi_i$ is not defined on the entire domain $\mathcal{D}$, we have multiple internal dynamics equations \eqref{eqn:prioritized_normal_form_1} on each subset $\mathcal{D}_i$, and even such an equation does not exist in $\mathcal{D}\setminus \bigcup_{i=1}^k \mathcal{D}_i$.

One option of tackling this issue would be to select a proper integer $i$ with which $z_{1:i_0}=(\eta_{ 1:i_0},\xi_{1:i_0 }) = \Phi_{1:i_0}(x)$ is well-defined on a  neighborhood $\mathcal{D}_{1:i_0}\subset\Omega_{1:i_0}$ of $0$, on which the prioritized normal form \eqref{eqn:prioritized_normal_form} is reduced to
\begin{subequations}\label{eqn:prioritized_normal_form_on_D1}
\begin{align}
    \dot{\eta}_{1:i_0} &= f_{1:i_0}[z_{1:i_0}] + G_{1:i_0}[z_{1:i_0}]u \label{eqn:prioritized_normal_form_on_D1_1} \\
    \dot{\xi}_i &= A_i\xi_i + B_i\kappa_i[z_{1:i_0}]+ B_i\sum_{j=1}^i(L_{ij}Q_j)[z_{1:i_0}]u_j \label{eqn:prioritized_normal_form_on_D1_2} \\
    y_i &= C_i\xi_i \label{eqn:prioritized_normal_form_on_D1_3}
\end{align}
\end{subequations}
where $1\le i\le k$ and \eqref{eqn:prioritized_normal_form_on_D1_1}--\eqref{eqn:prioritized_normal_form_on_D1_3} hold on $\Phi_{1:i_0}(\mathcal{D}_{1:i_0})$.
For the case when we confine ourselves to a compact subset of $\Phi_{1:i_0}(\mathcal{D}_{1:i_0})$, it is enough to analyze \eqref{eqn:prioritized_normal_form_on_D1} instead of \eqref{eqn:prioritized_normal_form}. 
It should be noted that, due to the existence of the internal dynamics \eqref{eqn:prioritized_normal_form_on_D1_1}, the prioritized normal form \eqref{eqn:prioritized_normal_form_on_D1_1}--\eqref{eqn:prioritized_normal_form_on_D1_3} is still over-determined (when $i_0<k$) in a sense that $\xi_{i_0+1},\dots,\xi_k$ are governed by $(\eta_{1:i_0},\xi_{1:i_0})$.%
\footnote{
    We can clearly see this  by differentiating
    \begin{equation*}
       \label{eqn:xi_i_as_a_function_of_eta_and_xi}
       \xi_i = \Xi_i(\eta_{1:i_1},\xi_{1:i_1}) \coloneqq \phi_{i,n-r_i+1:n}[\eta_{1:i_1},\xi_{1:i_1}]
    \end{equation*}
    for $i_1+1\le i\le k$ on $\mathcal{D}_{1:i_0}$ as
    \begin{equation*}
       \label{eqn:xi_i_depends_on_eta_1_and_xi_1}
       \dot{\xi}_i = \frac{\partial \Xi_i}{\partial \eta_{1:i_1}}\dot{\eta}_{1:i_1} + \frac{\partial \Xi_i}{\partial \xi_{1:i_1}}\dot{\xi}_{1:i_1}.
    \end{equation*}
    This implies that, \eqref{eqn:prioritized_normal_form_on_D1_1} does not exactly match to the internal dynamics left by \eqref{eqn:prioritized_normal_form_on_D1_2}--\eqref{eqn:prioritized_normal_form_on_D1_3} for $1\le i\le k$ but includes it together with a part of the ${\xi}_{i_0+1:k}$-dynamics; the other part of the ${\xi}_{i_0+1:k}$-dynamics is included in the ${\xi}_{1:i_0}$-dynamics.
}
We note in advance that, at the expense of complexity in the proposed framework, this over-determination provides an useful tool for mathematical analysis when we address the priotized output tracking problem, as discussed in Section \ref{sec:prioritized_tracking_control}. 

\section{Prioritized Input-Output Linearization}
\label{sec:prioritized_input_output_linearization}

In this section, we propose the \emph{prioritized} input-output linearization for the system \eqref{eqn:dynamical_system_with_multiple_tasks} with given prioritized tasks $[{\mathfrak{T}}_1,\cdots, {\mathfrak{T}}_k]$.
Under typical assumptions in Theorem~\ref{thm:prioritized_normal_form_can_be_reduced_to_Byrnes_Isidori_normal_form}, the conventional input-output linearization problem can be formulated as that of finding a control input $u$ such that the $\xi$-subsystem in the normal form \eqref{eqn:normal_form} is linearized as
\begin{align*}
\dot{\xi} = A \xi + B v, \quad y = C \xi
\end{align*}
where $v\in\mathbb{R}^p$ is a control command to be determined later after  ${\mathfrak{T}}_i$ are specified. 
This problem is equivalent to solving a set of linear inverse problems on $u$ (with $x$ and $v$ fixed)
\begin{equation}
    \label{eqn:residual}
    e_i(x,v_i,u) \coloneqq v_i - \kappa_i(x) - J_i(x)u = 0,\quad (1\leq i \leq k)
\end{equation}
where $e_i$ is termed the $i$-th {\it residual} for the linearization.
If the input gain matrix $J(x)$ has full rank, then $e=0$ is solved by
\begin{equation}
    u = J^+(x)(v - \kappa(x)) + N(x)u_f
    \label{eqn:u_for_input_output_linearization}
\end{equation}
(with the null-space projector $N(x)$ in \eqref{eqn:null_space_projector}) that serves as a linearizing state-feedback control input with which  the normal form \eqref{eqn:normal_form} is (partially) linearized as
\begin{subequations}    \label{eqn:conventional_linear_form}
\begin{align}
        \dot{\eta} &= {f}_\eta[\eta,\xi] + ({G}_\eta J^+)[\eta,\xi](v - \kappa[\eta,\xi])+ (G_\eta N)[\eta,\xi] u_f \\
        \dot{\xi} &= A\xi + Bv \\
        y &= C\xi
\end{align}
\end{subequations}
in a neighborhood of 0.
For comparison, in this work \eqref{eqn:conventional_linear_form} is called the \textit{conventional (partial) linear form} of \eqref{eqn:dynamical_system_with_single_task}.

We here extend the input-output linearization to a larger class of systems being prioritizable at the origin, with which $J(x)$ may not have full rank at some $x$ and thus the exact feedback linearization as in \eqref{eqn:conventional_linear_form} would not be possible.  
One alternative approach that we propose is to make $\|e_i\|$ as small as possible in the {\it least-squares sense}, which is called the {\it canonical} approach in this work.
Motivated by the  results on the canonical approach, a generalization of the result will be delivered in a sequel.

\subsection{Canonical Approach}
\label{subsec:canonical_approach}

In the canonical approach, we aim to keep the input-output relations of the system \eqref{eqn:dynamical_system_with_multiple_tasks} as linear as possible, with the priority of the tasks taken into account.
Noting that the linear inverse problem $e_i(x,v_i,u)=0$ for the linearization may not be solvable in $\mathcal{D}\setminus {\cal D}_i$,
a remedy is to compute the {\it least-squares solution} $u_i^{\sf c}$ ($i = 1,\cdots,k$) of \eqref{eqn:residual}, which is the solution of the optimization problem
\begin{equation}
    \min_{u_i\in\mathcal{R}(P_i(x))}\left\|e_i(x,v_i,u)\right\|^2 + \lambda_i^2(x)\|u_i\|^2
    \label{eqn:optimization_problem_for_u_i}
\end{equation}
where $\lambda_i(x)\in[0,\infty]$ represents the damping term that serves as a weight for the regularization term $\|u_i\|$. 
It is obvious from the lower-triangular structure of $J(x)$ in \eqref{eqn:prioritized_normal_form} that, once $u_1$, $\dots$, $u_{i-1}$ in \eqref{eqn:optimization_problem_for_u_i} are given as $u_1^{\sf c}$, $\dots$, $u_{i-1}^{\sf c}$, the optimization problem \eqref{eqn:optimization_problem_for_u_i} for $u_i$ admits a \textit{unique} solution
\begin{equation}
    u_i^{\sf c} = \left(Q_i^TL_{ii}^{+(\lambda_i)}\right)(x)\bigg(v_i - \kappa_i(x) - \sum_{j=1}^{i-1}(L_{ij}Q_j)(x)u_j^{\sf c} \bigg)
    \label{eqn:linearizing_control_input_u_i}
\end{equation}
that is well-defined for all $(x,v_i)\in {\mathcal{D}}\times \mathbb{R}^{p_i}$. 
(Here, it is noted that $(Q_i^TL_{ii}^{+(\lambda_i)})(x) \coloneqq 0\in\mathbb{R}^{m\times p_i}$ for $x$ satisfying $\rho_i(x) = 0$.)
Since we are dealing with the linearization problem for \eqref{eqn:dynamical_system_with_multiple_tasks}, the sum
\begin{equation}
    u^{\sf c} = u^{\sf c}_1 + \cdots + u^{\sf c}_{k+1} \in \mathbb{R}^p \label{eqn:canonical_prioritized_linearizer}
\end{equation}
of the partial solutions $u_i^{\sf c}$ in \eqref{eqn:linearizing_control_input_u_i} and the extra input $u_{k+1}^{\sf c} \coloneqq Nu_f = N_{1:k}u_f$ in $\mathcal{R}(P_{1:k})^\perp = \mathcal{R}(N)$ is called a {\it canonical prioritized linearizer} of \eqref{eqn:dynamical_system_with_multiple_tasks}. 
The following lemma shows that the canonical prioritized linearizer $u^{\sf c}=u^{\sf c}(x,v,u_f)$ in \eqref{eqn:canonical_prioritized_linearizer} has both a recursive and a closed form, among which  the former is computationally more efficient  while the latter is easier to handle in the analysis.

\begin{lemma}
    \label{lem:control_input_for_canonical_prioritized_semilinear_form}
    For given $(x,v,u_f)\in \mathcal{D}\times \mathbb{R}^p\times\mathbb{R}^m$, the canonical prioritized linearizer $u^{\sf c}(x,v,u_f)$ in  \eqref{eqn:linearizing_control_input_u_i} and \eqref{eqn:canonical_prioritized_linearizer}  satisfies the recursive relations:
 	\begin{subequations} \label{eqn:control_input_for_canonical_linear_form_recursive}
    \begin{align}
            u^{\sf c} &= u^{\sf c}_{1:k} + Nu_f,\\
            u^{\sf c}_{1:i} &= u^{\sf c}_{1:i-1} + Q_i^TL_{ii}^{+(\lambda_i)}\left(v_i - \kappa_i - J_iu_{1:i-1}^c\right) ~~ (1\leq i \leq k),\label{eqn:control_input_for_canonical_linear_form_recursive_1}
    \end{align}
	\end{subequations}
    and $u_{1:0} = 0$. 
    In addition, $u^{\sf c}$ has the closed form
    \begin{align}
        u^{\sf c} & = J^{\oplus (\lambda)} (v-\kappa) + Nu_f
        \label{eqn:control_input_for_canonical_linear_form_closed}
    \end{align}
    where $L_{\rm D} = {\rm diag}(L_{11},\dots,L_{kk})$, $L_{\rm L} = L - L_{\rm D}$,  and
    \begin{equation}\label{eqn:prioritized_damped_psuedoinverse_of_J}
        J^{\oplus(\lambda)} \coloneqq  Q^T L_{\rm D}^{\oplus(\lambda)}\big(I_p + L_{\rm L} L_{\rm D}^{\oplus(\lambda)}\big)^{-1}.
    \end{equation}
    $\hfill\square$
\end{lemma}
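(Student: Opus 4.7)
The plan is to verify the two forms separately, starting from the per-task optimizer \eqref{eqn:linearizing_control_input_u_i} and the structural identities in Lemma~\ref{lem:orthogonal_projectors}. The key facts I will rely on are (i) $u_j^{\sf c}\in\mathcal{R}(P_j)$ so $P_j u_j^{\sf c}=u_j^{\sf c}$, (ii) $J_iP_j = L_{ij}Q_j$ from \eqref{eqn:properties_of_orthogonalization_of_J}, (iii) $Q_j$ is a row block of an orthogonal matrix so $Q_jQ_j^T = I_{\rho_j}$, and (iv) $L_e$ (hence its reduced block $L$) is block lower-triangular.

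For the recursive form, I would first observe that since $u_j^{\sf c}\in\mathcal{R}(P_j)$, fact (ii) gives $J_i u_j^{\sf c} = J_i P_j u_j^{\sf c} = L_{ij}Q_j u_j^{\sf c}$ for every $j\le i$. Summing from $j=1$ to $i-1$ yields $J_i u_{1:i-1}^{\sf c} = \sum_{j=1}^{i-1}L_{ij}Q_ju_j^{\sf c}$, so the bracket appearing in \eqref{eqn:linearizing_control_input_u_i} is exactly $v_i-\kappa_i-J_i u_{1:i-1}^{\sf c}$. Substituting this identity into \eqref{eqn:linearizing_control_input_u_i} and then using $u^{\sf c}_{1:i}=u^{\sf c}_{1:i-1}+u_i^{\sf c}$ proves \eqref{eqn:control_input_for_canonical_linear_form_recursive_1}; the base case $u_{1:0}^{\sf c}=0$ is by convention, and the terminal step $u^{\sf c} = u_{1:k}^{\sf c}+Nu_f$ follows by construction of \eqref{eqn:canonical_prioritized_linearizer}.

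For the closed form, the natural change of variable is $w_i \coloneqq Q_i u_i^{\sf c}\in\mathbb{R}^{\rho_i}$. Because $Q_jQ_j^T = I_{\rho_j}$ and $u_j^{\sf c}\in\mathcal{R}(Q_j^T)$, one recovers $u_j^{\sf c}=Q_j^T w_j$. Left-multiplying the recursion \eqref{eqn:control_input_for_canonical_linear_form_recursive_1} or equivalently \eqref{eqn:linearizing_control_input_u_i} by $Q_i$ and using $Q_iQ_i^T = I_{\rho_i}$ gives the scalar recursion
\begin{equation*}
w_i + L_{ii}^{+(\lambda_i)}\sum_{j=1}^{i-1}L_{ij}w_j = L_{ii}^{+(\lambda_i)}(v_i-\kappa_i),
\end{equation*}
which, stacked over $i=1,\dots,k$, becomes
\begin{equation*}
\bigl(I_p + L_{\rm D}^{+(\lambda)}L_{\rm L}\bigr) w = L_{\rm D}^{+(\lambda)}(v-\kappa).
\end{equation*}
Since $L_{\rm L}$ is strictly block lower-triangular and $L_{\rm D}^{+(\lambda)}$ is block diagonal, $L_{\rm D}^{+(\lambda)}L_{\rm L}$ is strictly block lower-triangular and $I_p + L_{\rm D}^{+(\lambda)}L_{\rm L}$ is unit lower-triangular, hence invertible. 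Thus $w = (I_p+L_{\rm D}^{+(\lambda)}L_{\rm L})^{-1}L_{\rm D}^{+(\lambda)}(v-\kappa)$, and assembling $u^{\sf c}_{1:k} = Q^T w$ yields
\begin{equation*}
u^{\sf c}_{1:k} = Q^T\bigl(I_p+L_{\rm D}^{+(\lambda)}L_{\rm L}\bigr)^{-1}L_{\rm D}^{+(\lambda)}(v-\kappa).
\end{equation*}
A final application of the push-through identity $A(I+BA)^{-1}=(I+AB)^{-1}A$ with $A=L_{\rm D}^{+(\lambda)}$ and $B=L_{\rm L}$ rewrites this as $Q^T L_{\rm D}^{+(\lambda)}(I_p+L_{\rm L}L_{\rm D}^{+(\lambda)})^{-1}(v-\kappa)$, matching \eqref{eqn:prioritized_damped_psuedoinverse_of_J}. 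Adding $Nu_f$ produces \eqref{eqn:control_input_for_canonical_linear_form_closed}.

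The main obstacle I anticipate is book-keeping in the degenerate regime where some $\rho_j(x)=0$: there $Q_j$, $L_{ij}$, $w_j$ are vacuous and the convention \eqref{eqn:definitions_when_r_j_equals_0} together with the convention $(Q_i^T L_{ii}^{+(\lambda_i)})(x)\coloneqq 0$ must be invoked to confirm that both the recursive step and the stacked linear system collapse consistently (in particular, the corresponding rows/columns of the block-triangular system simply drop out, preserving invertibility and the closed-form identity). Everything else is algebraic manipulation using the factorization $J=LQ$ already guaranteed by Lemma~\ref{lem:orthogonal_projectors}.
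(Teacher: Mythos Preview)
Your proposal is correct and follows essentially the same route as the paper: you derive the recursive form from $J_i u_{1:i-1}^{\sf c}=\sum_{j<i}L_{ij}Q_j u_j^{\sf c}$, then set $w=Qu_{1:k}^{\sf c}$, obtain the block lower-triangular linear system, and commute $L_{\rm D}^{+(\lambda)}$ past the inverse (you use the push-through identity where the paper uses the finite Neumann series, but these are equivalent here since both products are nilpotent). One cosmetic slip: the identity in your stacked system should be $I_\rho$ rather than $I_p$, since $w\in\mathbb{R}^\rho$ and $L_{\rm D}^{+(\lambda)}L_{\rm L}\in\mathbb{R}^{\rho\times\rho}$.
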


\begin{proof}
    See Appendix \ref{app:proof_of_lemma_about_control_input_for_canonical_prioritized_semilinear_form}.
\end{proof}

It can be seen that, as long as $J$ has full rank and $\lambda =0$, $J^{\oplus(\lambda)}$ in \eqref{eqn:prioritized_damped_psuedoinverse_of_J} becomes the conventional pseudoinverse $J^+$.
Indeed, provided that $\rho(x) = {\rm rank}(J(x)) = \min\{p,m\}$, the corresponding matrices $L_{\rm D}\in\mathbb{R}^{p\times\rho}$ is of full column rank and $I_p + L_{\rm L}L_{\rm D}^+$ is of full row rank, so that $L^+ = (L_{\rm D} + L_{\rm L})^+ = \big((I_p + L_{\rm L}L_{\rm D}^+)L_{\rm D}\big)^+ = L_{\rm D}^+(I_p + L_{\rm L}L_{\rm D}^+)^{-1}$.%
\footnote{
    For matrices $A$ and $B$, $(AB)^+ = B^+A^+$ is true if $A$ is of full column rank and $B$ is of full row rank, if $A^*A = I$, if $BB^* = I$, if $B=A^*$, or if $B=A^+$ where $A^*$ denotes the conjugate transpose of $A$ \cite{Greville1966}.
}
Since $L_{\rm D}^+ = L_{\rm D}^{\oplus(0)}$ holds (regardless of the rank condition), we have
\begin{equation}\label{eqn:J_plus_is_the_same_as_J_oplus_0}
    J^+ = (LQ)^+ = Q^TL^+ = Q^T L_{\rm D}^+\big(I_p + L_{\rm L}L_{\rm D}^+ \big)^{-1} = J^{\oplus(0)}.
\end{equation}
In this context, $J^{\oplus (\lambda)}$ in \eqref{eqn:prioritized_damped_psuedoinverse_of_J} is termed the {\it prioritized damped psuedoinverse} of $J$.

With help of the closed-form representation \eqref{eqn:control_input_for_canonical_linear_form_closed}, the residual $e(x,v,u)$ under the control $u = u^{\sf c}(x,v,u_f)$, say $e^{\sf c}(x,v)$, is computed by
\begin{align*}
e^{\sf c}(x,v) &  = v - \kappa(x) - J(x)(J^{\oplus(\lambda)}(x)(v-\kappa(x)) + N(x)u_f)\\
& =  E^{\sf c}(x)(v-\kappa(x))
\end{align*}
where 
\begin{equation}\label{eqn:canonical_residual_matrix}
    E^{\sf c}(x) = [E_{ij}^{\sf c}(x)]\coloneqq I_p - J(x) J^{\oplus (\lambda)}(x)
\end{equation}
(called {\it (canonical) residual matrix} throughout this paper) is block lower-triangular and is vanished if $p\le m$, $J(x)$ is nonsingular, and $\lambda(x)=0$.
Since $\dot{\xi}_i = A_i x_i + B_i v_i - B_i e_i(x,v_i,u)$ by definition, we finally have a canonical prioritized counterpart of \eqref{eqn:conventional_linear_form}, which is the prioritized normal form \eqref{eqn:prioritized_normal_form} with $u^{\sf c}$ employed as a control input $u$ for linearization.

\begin{definition}
    \label{def:canonical_prioritized_semilinear_form}
    For a given system \eqref{eqn:dynamical_system_with_multiple_tasks} that is prioritizable at $0$, the canonical prioritized (partial) semilinear form of \eqref{eqn:dynamical_system_with_multiple_tasks} is defined as: for  $1\le i\le k$,
    \begin{subequations}\label{eqn:canonical_prioritized_semilinear_form}
    \begin{align}
        \!\!\!\!\!\! \dot{\eta}_i &= f_i(x) + \big(G_i J^{\oplus (\lambda)}\big)(x)\left(v - \kappa(x)\right) + (G_i N)(x) u_f \label{eqn:canonical_prioritized_semilinear_form_1} \!\!\!\! \\
        \!\!\!\!\!\! \dot{\xi}_i &= A_i\xi_i + B_iv_i - B_i \sum_{j=1}^i E_{ij}^{\sf c}(x)\left(v_j - \kappa_j(x)\right) \label{eqn:canonical_prioritized_semilinear_form_2} \\
        \!\!\!\!\!\! y_i &= C_i\xi_i \label{eqn:canonical_prioritized_semilinear_form_3}
    \end{align}
    \end{subequations}
    where $J^{\oplus (\lambda)}$ and $E^{\sf c}$ are given in \eqref{eqn:prioritized_damped_psuedoinverse_of_J} and \eqref{eqn:canonical_residual_matrix}, respectively,  \eqref{eqn:canonical_prioritized_semilinear_form_1} holds on ${\mathcal D}_i$, and  \eqref{eqn:canonical_prioritized_semilinear_form_2}--\eqref{eqn:canonical_prioritized_semilinear_form_3} on ${\mathcal D}$. 
    $\hfill\square$
\end{definition}

We emphasize that 
 \eqref{eqn:canonical_prioritized_semilinear_form} preserves the linear relation between $v_i$ and $y_i$ as close as possible in the order of the priority, by which we call  \eqref{eqn:canonical_prioritized_semilinear_form} a {\it prioritized semilinear} form. 
It is also clearly seen that, the canonical prioritized semilinear form \eqref{eqn:canonical_prioritized_semilinear_form} is a natural extension of the conventional linear form \eqref{eqn:conventional_linear_form} when $\lambda=0$ (if possible), as stated below. 

\begin{theorem}
    \label{thm:canonical_prioritized_semilinear_form_can_be_reduced_to_conventional_linear_form}
    Suppose that the hypothesis of Theorem~\ref{thm:prioritized_normal_form_can_be_reduced_to_Byrnes_Isidori_normal_form} holds. 
    Then for given canonical prioritized semilinear form \eqref{eqn:canonical_prioritized_semilinear_form} of \eqref{eqn:dynamical_system_with_multiple_tasks}, there exists 
    a conventional linear form \eqref{eqn:conventional_linear_form} of \eqref{eqn:dynamical_system_with_single_task} that is a reduced-order model of \eqref{eqn:canonical_prioritized_semilinear_form} with $\lambda=0$ on a neighborhood of $0$.
    $\hfill\square$
\end{theorem}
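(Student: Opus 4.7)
The plan is to leverage Theorem~\ref{thm:prioritized_normal_form_can_be_reduced_to_Byrnes_Isidori_normal_form} directly and then to verify that the residual matrix $E^{\sf c}$ vanishes on a suitable neighborhood under the stated hypotheses. First, I would apply Theorem~\ref{thm:prioritized_normal_form_can_be_reduced_to_Byrnes_Isidori_normal_form} to extract a neighborhood $\mathcal{D}_{1:k}$ of $0$ on which the prioritized normal form \eqref{eqn:prioritized_normal_form} can be reduced to the conventional normal form \eqref{eqn:normal_form} of \eqref{eqn:dynamical_system_with_single_task}, via the coordinate identification $\eta = \eta_{1:k}$ and $\xi = \xi_{1:k}$ obtained recursively by the reordering argument in that proof. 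This step carries over the identifications $f_\eta = f_{1:k}$, $G_\eta = G_{1:k}$, $\kappa = \kappa_{1:k}$, and $J = J_{1:k}$ on $\mathcal{D}_{1:k}$.

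Next, I would observe that under the hypothesis of Theorem~\ref{thm:prioritized_normal_form_can_be_reduced_to_Byrnes_Isidori_normal_form}, the system \eqref{eqn:dynamical_system_with_single_task} has a vector relative degree at $0$ with $p \le m$, so $J(0)$ has full row rank. By continuity, there exists a neighborhood $\mathcal{D}^\star \subset \mathcal{D}_{1:k}$ of $0$ on which $J(x)$ maintains full row rank; equivalently, $\rho_{1:k}(x) = p$ on $\mathcal{D}^\star$, so $\mathcal{D}^\star \subset \Omega_{1:k}$. Consequently the identity \eqref{eqn:J_plus_is_the_same_as_J_oplus_0} yields $J^{\oplus(0)}(x) = J^+(x)$ for all $x \in \mathcal{D}^\star$, and full row rank of $J(x)$ gives $J(x)J^+(x) = I_p$, so the canonical residual matrix satisfies
\begin{equation*}
    E^{\sf c}(x) = I_p - J(x)J^{\oplus(0)}(x) = I_p - J(x)J^+(x) = 0
\end{equation*}
throughout $\mathcal{D}^\star$.

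With the residual matrix vanishing, the $\xi$-dynamics in \eqref{eqn:canonical_prioritized_semilinear_form_2} collapses to $\dot{\xi}_i = A_i\xi_i + B_i v_i$ for each $i$, which, when stacked, is exactly $\dot{\xi} = A\xi + B v$ and $y = C\xi$ as in \eqref{eqn:conventional_linear_form}. For the internal part, substituting $\lambda = 0$ and the above identifications into \eqref{eqn:canonical_prioritized_semilinear_form_1} with $i$ ranging over $1:k$ and stacking produces
\begin{equation*}
    \dot{\eta} = f_\eta[\eta,\xi] + (G_\eta J^+)[\eta,\xi](v - \kappa[\eta,\xi]) + (G_\eta N)[\eta,\xi]u_f
\end{equation*}
on $\mathcal{D}^\star$, matching exactly the internal dynamics of the conventional linear form \eqref{eqn:conventional_linear_form}. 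Thus \eqref{eqn:conventional_linear_form} is the asserted reduced-order model on the neighborhood $\mathcal{D}^\star$.

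The only nontrivial point is keeping track of the coordinate identification between the prioritized state $(\eta_{1:k},\xi_{1:k})$ and the conventional state $(\eta,\xi)$; this is already established by the recursive reordering construction in the proof of Theorem~\ref{thm:prioritized_normal_form_can_be_reduced_to_Byrnes_Isidori_normal_form}, so no additional argument is needed. Once that identification is in hand, the reduction is a short consequence of the full-row-rank property of $J(0)$ combined with \eqref{eqn:J_plus_is_the_same_as_J_oplus_0} and continuity.
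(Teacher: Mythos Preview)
Your proposal is correct and follows essentially the same approach as the paper: invoke Theorem~\ref{thm:prioritized_normal_form_can_be_reduced_to_Byrnes_Isidori_normal_form} to reduce the prioritized normal form to the conventional one, then use the full-rank identity $J^{\oplus(0)}=J^+$ from \eqref{eqn:J_plus_is_the_same_as_J_oplus_0} to conclude. The paper compresses the verification that $E^{\sf c}=0$ and the matching of the $\eta$- and $\xi$-dynamics into a single sentence, whereas you spell these out explicitly, but the underlying argument is identical.
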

\begin{proof}
    By Theorem \ref{thm:prioritized_normal_form_can_be_reduced_to_Byrnes_Isidori_normal_form}, the  normal form \eqref{eqn:normal_form} of \eqref{eqn:dynamical_system_with_single_task} is a reduced-order model of the prioritized normal form \eqref{eqn:prioritized_normal_form} of \eqref{eqn:dynamical_system_with_multiple_tasks}.
    The proof is completed since $J^{\oplus(0)}(x) = J^+(x)$ with $J(x)$ having full rank, as in \eqref{eqn:J_plus_is_the_same_as_J_oplus_0}.
\end{proof}

For further discussions, from the recursive form \eqref{eqn:control_input_for_canonical_linear_form_recursive} of $u^{\sf c}$, one may represent the residual $e^{\sf c}_i$ as follows: 
\begin{equation}\label{eqn:canonical_prioritized_residual_2}
e_i^{\sf c}(x,v_{1:i}) = N_i^{\sf c}(x)\bigg(  v_i - \kappa_i(x) - \sum_{j=1}^{i-1}(L_{ij}Q_j) u_j^{\sf c}(x,v_{1:i-1}) \bigg)
\end{equation}
where $P_i^{\sf c} \coloneqq L_{ii} L_{ii}^{+(\lambda_i)}$ and $N_i^{\sf c}\coloneqq I_{p_i} - P_i^{\sf c}$. 
With \eqref{eqn:canonical_prioritized_residual_2}, we have another expression of the $\xi_i$-dynamics as%
\footnote{
    The $\xi_i$-dynamics can be formulated as:
    \begin{align*}
        \dot{\xi}_i &= A_i\xi_i + B_i\left(\kappa_i + \sum_{j=1}^{i-1}L_{ij}Q_ju_j^{\sf c} + L_{ii}Q_iu_i^{\sf c}\right) \\
            &= A_i\xi_i + B_i\left(\kappa_i + \sum_{j=1}^{i-1}L_{ij}Q_ju_j^{\sf c}\right) + B_iL_{ii}Q_iQ_i^TL_{ii}^{+(\lambda_i)}\left(v_i-\kappa_i-\sum_{j=1}^{i-1}L_{ij}Q_ju_j^{\sf c}\right) \\
            &= A_i\xi_i + B_i\left(\kappa_i + \sum_{j=1}^{i-1}L_{ij}Q_ju_j^{\sf c}\right) + B_iL_{ii}L_{ii}^{+(\lambda_i)}v_i - B_iL_{ii}L_{ii}^{+(\lambda_i)}\left(\kappa_i + \sum_{j=1}^{i-1}L_{ij}Q_ju_j^{\sf c}\right) \\
            &= A_i\xi_i + B_iL_{ii}L_{ii}^{+(\lambda_i)}v_i + B_i(I_{p_i}-L_{ii}L_{ii}^{+(\lambda_i)})\left(\kappa_i + \sum_{j=1}^{i-1}L_{ij}Q_ju_j^{\sf c}\right) \\
            &= A_i\xi_i + B_iv_i - B_i(I_{p_i} - L_{ii}L_{ii}^{+(\lambda_i)})\left(v_i - \kappa_i - \sum_{j=1}^{i-1}L_{ij}Q_ju_j^{\sf c}\right).
    \end{align*}
}
\begin{equation}\label{eqn:canonical_prioritized_form_2_2}
    \dot{\xi}_i = A_i \xi_i + B_i v_i - B_ie_i^{\sf c}(x,v_{1:i}).
\end{equation}

\begin{remark}
    While $\lambda_i=0$ would be the best in the sense that $N_i^{\sf c}(x)=0$ when $\rho_i(x) = p_i$ (so that the residual $e_i^{\sf c}(x,v_{1:i})$ becomes $0$),  the control input \eqref{eqn:control_input_for_canonical_linear_form_closed} with $\lambda_i=0$ diverges in the vicinity of singularity.
    Observe that if $L_{ii}Q_i$ is continuous on ${\cal D}$ and $\rho_i(0) = p_i$, then $\Omega_i$ is open and $\|(Q_i^TL_{ii}^+)(x)\|\to\infty$ as $x$  satisfying $x\in \Omega_i$ approaches a point in $\mathrm{bd}(\Omega_i)$:
    in other words, we cannot let $\lambda_i(x) = 0$ for all $x\in \Omega_i$ if we want the control input $u$ bounded on $\mathcal{D}$.
    This defect makes the practical use of the canonical prioritized semilinear form with $\lambda=0$ impossible.

    A remedy to this problem is to make $\lambda_i(x)$ be positive for all $x\in \mathcal{D}$ that are close to singular points.
    It alleviates the divergence problem, at the expense of inaccuracy of the solution \eqref{eqn:linearizing_control_input_u_i} for $e(x,v,u) = 0$.  
    Indeed, one can show that $N_i^{\sf c}(x)$ in \eqref{eqn:canonical_prioritized_residual_2} does not disappear on $\Omega_i$ by which the linearity between $v_i$ and $y_i$ cannot be guaranteed in \eqref{eqn:canonical_prioritized_form_2_2}.
    This motivates us to choose $\lambda_i(x)$ (nonzero but) sufficiently small so that $N_i^{\sf c}(x)\approx 0$ on $\Omega_i$.  
    Nonetheless, the trailing terms  \eqref{eqn:canonical_prioritized_residual_2} in \eqref{eqn:canonical_prioritized_form_2_2} make the analysis of the canonical prioritized semilinear form nontrivial, and we will discuss it later in this article. 
    $\hfill\square$
\end{remark}

\subsection{Characterization of Proper Objective Functions}
\label{subsec:proper_objective_function}

As the first step for generalization, we introduce a general form of objective functions 
\begin{equation}
    \label{eqn:objective_functions}
    \pi_i:{\mathcal{D}}\times\mathbb{R}^{p_i}\times\mathbb{R}^m\to[0,\infty] \quad (1\le i\le k)
\end{equation}
for the $i$-th task ${\mathfrak{T}}_i$, which is expected to be chosen in a way that minimization of $\pi_i$ implies that of the residual $e_i$.  
At this point, $\pi_i(x,v_i,u)$ is not a specific function but a variable on the function space $[0,\infty]^{\mathcal{D}\times\mathbb{R}^{p_i}\times \mathbb{R}^m}$.
Since not every $\pi_i$ is allowable, we characterize the set of $\pi_i$ that are proper for the prioritized input-output linearization, as stated in the following definition.

\begin{definition}
    \label{def:proper_objective_function}
    A vector-valued objective function
    \begin{equation}
        \label{eqn:objective_function_vector}
        \pi = (\pi_1,\dots,\pi_k): \mathcal{D}\times\mathbb{R}^p\times\mathbb{R}^m\to[0,\infty]^k
    \end{equation}
    is said to be \textit{strongly} / \textit{weakly proper} for the prioritized input-output linearization of \eqref{eqn:dynamical_system_with_multiple_tasks} if $\pi$ has both / first of the next two properties:
    \begin{enumerate}
        \item ({\bf Dependence}) For all $1\le i\le k$, $x\in {\mathcal{D}}$, and $v\in\mathbb{R}^p$,
        \begin{equation}
            \label{eqn:proper_objective_function_dependence}
            \pi_i(x,v,u) = \pi_i(x,v_i,P_{1:i}(x)u) \quad (u\in\mathbb{R}^m)
        \end{equation}
        and there exists a unique $u_i^\pi$ satisfying
        \begin{equation}
            \label{eqn:proper_objective_function_uniqueness}
            u^\pi_{i}=\argmin_{u_i\in\mathcal{R}(P_i(x))} \pi_i(x,v_i,u_{1:i-1}^\pi+u_i);
        \end{equation}
        \item ({\bf Representation}): For all $1\le i\le k$ and $x\in {\mathcal{D}}$, the map $v_i\mapsto u_i^\pi$ of $\mathcal{R}((L_{ii}Q_i)(x))$ into $\mathcal{R}(P_i(x))$ is one-to-one.
        $\hfill\square$
    \end{enumerate}
\end{definition}

We note that, in the canonical approach, $\pi_i$ has the form
\begin{equation}        \label{eqn:objective_function_for_canonical_form}
    \pi_i(x,v,u) = \|e_i(x,v,u)\|^2 + \lambda_i^2(x)\|P_i(x)u\|^2
\end{equation}
(that will be discussed in details shortly).
We will see in the rest of this subsection that the items in Definition~\ref{def:proper_objective_function} are closely related with those in {\bf Prioritized Control Problem}.

({\bf Dependence}) 
Note that any $u$ in $\mathbb{R}^m$ is decomposed into 
\begin{align*}
	u = P_{1:i}(x)u + P_{i+1:k+1}(x)u = u_{1:i} + u_{i+1:k+1}.
\end{align*} 
Then the first item of Definition~\ref{def:proper_objective_function} implies that, only $u_{1:i}$ is required in  minimizing $\pi_i$ for ${\mathfrak{T}}_i$, while the remaining part $u_{i+1:k}$ (constructed for the lower-priority tasks ${\mathfrak{T}}_{i+1}$,$\dots$,${\mathfrak{T}}_k$) and $u_{k+1}$ should not affect the result of minimizing $\pi_1$,$\dots$,$\pi_{i}$; this is a formal statement of Item~1 in {\bf Prioritized Control Problem} in some sense. 
On the other hand, the uniqueness for $u_i^\pi$ is preferred in order to avoid a local minima issue, as well as  to remove some trivial choices of $\pi_i$ (e.g., $\pi_1(x,v_1,u) = 0$ for all $u\in\mathcal{R}(P_1(x))$).

({\bf Representation})  
Conceptually, Item 2 of Definition~\ref{def:proper_objective_function} can be interpreted as the condition under which the control effort $v_i$ made in {\it any} direction should not be nullified in the resulting input $u_i^\pi$. 
For a deeper understanding, we remind first that minimization of $\pi_i$ aims to that of the $i$-th residual
\begin{equation*}
    e_i(x,v_i,u)  = v_i  - (L_{ii} Q_i)(x) u_i - \kappa_i(x) - \sum_{j=1}^{i-1} (L_{ij}Q_j)(x)u_j.
\end{equation*}
When solving a minimization problem for $e_i$ (or for $\pi_i$), $u_{1:i-1}$ is already chosen {\it a priori} so that  $-\kappa_i - \sum_{j=1}^{i-1} (L_{ij}Q_j)u_j$ is regarded as being fixed. 
On the other hand, $v_i$ is determined later for achieving a specific ${\mathfrak{T}}_i$, and is allowed to be any vector in $\mathbb{R}^{p_i}$. 
Thus, it would be desirable at this stage that $(L_{ii}Q_i) u_i^\pi$ (with $u_i^\pi \in \mathcal{R}(P_i)$ by construction) should represent all the elements of $\mathcal{R}(L_{ii}Q_i)$, so that any $v_i$ in $\mathcal{R}(L_{ii}Q_i)$ is exactly compensated in the resulting  $e_i(x,v_i,u^\pi)$.
This is enabled with Item~2 of Definition~\ref{def:proper_objective_function}, as 
the mapping $v_i\mapsto u_i^\pi$ of $\mathcal{R}((L_{ii}Q_i)(x))$ into $\mathcal{R}(P_i(x))$ is onto automatically as  $\mathrm{rank}((L_{ii}Q_i)(x)) = \rho_i(x)$ by Lemma \ref{lem:orthogonal_projectors}.

\subsection{General Approach}
\label{subsec:generalization}

We are now ready to generalize the canonical approach in Subsection~\ref{subsec:canonical_approach}, by employing proper objective functions $\pi_i$ in Definition~\ref{def:proper_objective_function}. 
Since we cannot always establish a linear relation between $v_i$ and $y_i$ on $\Omega_i$ with bounded $u$, there is no reason to stick to the optimization problem \eqref{eqn:optimization_problem_for_u_i} formulated in the least-squares manner.
In other words, it seems natural to extend the least-squares problem \eqref{eqn:optimization_problem_for_u_i} to a multi-objective optimization problem with the lexicographical ordering \cite{Miettinen1999}\cite{Ehrgott2005}%
\footnote{ 
    For given vector-valued objective function $\phi={\rm col}(\phi_1,\dots,\phi_k):\mathbb{R}^m\rightarrow [0,\infty]^k$ and a constraint set $\mathcal{U}\subset\mathbb{R}^m$, the problem $\lexmin_{u\in \mathcal{U}} \phi(u)$ 
    is to find an optimal solution $u^*\in \mathcal{U}$ satisfying, for all $1\le i\le k$, $   \phi_i(u^*) = \min\{ \phi_i(u): u\in \mathcal{U} \textrm{ and } \phi_j(u) = \phi_j(u^*) \textrm{ for } 1\le j\le i-1\}$.
}
\begin{equation}
    \label{eqn:multi_objective_optimization_with_lexicographical_ordering}
    \lexmin_{u\in\mathcal{R}(J^T)}\pi(x,v,u).
\end{equation}
The following lemma shows that the properness of $\pi$ ensures the existence and uniqueness of the solution $u^\pi$ for \eqref{eqn:multi_objective_optimization_with_lexicographical_ordering}.

\begin{lemma}
    \label{lem:solution_of_proper_objective_function_equals_solution_of_multi_objective_optimization}
    Suppose that an objective function $\pi_i$ in \eqref{eqn:objective_function_vector} is given.
    Then, for every $x\in {\mathcal{D}}$ and $v\in\mathbb{R}^p$, the sum $u_1^\pi+\cdots+u_k^\pi$ of minimizers $u_i^\pi$ of the optimization problem \eqref{eqn:proper_objective_function_uniqueness} is a unique minimizer of the multi-objective optimization problem \eqref{eqn:multi_objective_optimization_with_lexicographical_ordering} with the lexicographical ordering.
    $\hfill\square$
\end{lemma}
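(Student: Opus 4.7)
The plan is to argue by induction on the priority index, exploiting the decomposition $\mathcal{R}(J^T(x)) = \mathcal{R}(P_1(x)) \oplus \cdots \oplus \mathcal{R}(P_k(x))$ guaranteed by Lemma~\ref{lem:orthogonal_projectors}, so every admissible input in the lexicographic problem \eqref{eqn:multi_objective_optimization_with_lexicographical_ordering} can be written uniquely as $u = u_1 + \cdots + u_k$ with $u_i \in \mathcal{R}(P_i(x))$. The dependence property \eqref{eqn:proper_objective_function_dependence} then says $\pi_i(x,v,u) = \pi_i(x,v_i, u_{1:i})$, i.e., $\pi_i$ sees only the first $i$ components of the decomposition; this is exactly what makes the lexicographic structure align with the per-task problems \eqref{eqn:proper_objective_function_uniqueness}.

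First I would fix $(x,v)$ and unwind the definition of the lexicographic minimum by introducing the nested sets $\mathcal{U}_0 = \mathcal{R}(J^T(x))$ and, for $1\le i\le k$, $\mathcal{U}_i = \argmin_{u\in\mathcal{U}_{i-1}} \pi_i(x,v,u)$; the claim is $\mathcal{U}_k = \{u_1^\pi + \cdots + u_k^\pi\}$. For the base case $i = 1$, the dependence property makes $\pi_1(x,v,u)$ depend only on $u_1 = P_1(x)u$, so $\mathcal{U}_1 = \{u_1^\pi\} + (\mathcal{R}(P_2(x)) \oplus \cdots \oplus \mathcal{R}(P_k(x)))$, where uniqueness of $u_1^\pi$ on $\mathcal{R}(P_1(x))$ is precisely what \eqref{eqn:proper_objective_function_uniqueness} provides.

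For the inductive step I would assume $\mathcal{U}_{i-1} = \{u_1^\pi + \cdots + u_{i-1}^\pi\} + (\mathcal{R}(P_i(x)) \oplus \cdots \oplus \mathcal{R}(P_k(x)))$. Restricting $\pi_i$ to $\mathcal{U}_{i-1}$, the dependence property collapses it to $\pi_i(x, v_i, u_{1:i-1}^\pi + u_i)$, a function of $u_i \in \mathcal{R}(P_i(x))$ alone. Applying \eqref{eqn:proper_objective_function_uniqueness} at priority $i$ then yields the unique minimizer $u_i = u_i^\pi$, and the tail components in $\mathcal{R}(P_{i+1}(x)) \oplus \cdots \oplus \mathcal{R}(P_k(x))$ remain unconstrained at this stage. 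Hence $\mathcal{U}_i = \{u_1^\pi + \cdots + u_i^\pi\} + (\mathcal{R}(P_{i+1}(x)) \oplus \cdots \oplus \mathcal{R}(P_k(x)))$, closing the induction. At $i = k$ the tail is empty and we obtain the singleton $\{u_1^\pi + \cdots + u_k^\pi\}$, as required.

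The proof is mostly a bookkeeping exercise once the right decomposition is in place, so I do not anticipate a major obstacle; the one subtle point to state carefully is that minimization is taken over $\mathcal{R}(J^T(x))$ rather than $\mathbb{R}^m$, so the $\mathcal{R}(P_{k+1}(x)) = \mathcal{N}(J(x))$ direction is excluded from the lexicographic problem and does not enter the argument. I would also remark explicitly that the argument does not require continuity or convexity of $\pi_i$ beyond the two properties listed in Definition~\ref{def:proper_objective_function}, since existence and uniqueness at each level are assumed in \eqref{eqn:proper_objective_function_uniqueness} itself.
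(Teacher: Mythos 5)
Your proposal is correct and follows essentially the same route as the paper: both exploit the dependence property to collapse $\pi_i$ to a function of $u_{1:i}$ and then invoke the per-level uniqueness in \eqref{eqn:proper_objective_function_uniqueness}, with your nested sets $\mathcal{U}_i$ being just the induction-style packaging of the paper's unrolled chain of identities identifying each lexicographic sublevel set with $u_{1:i-1}^\pi$ plus the remaining orthogonal directions. Your explicit remark that the feasible set is $\mathcal{R}(J^T(x))$ rather than $\mathbb{R}^m$ (so the $\mathcal{N}(J(x))$ direction cannot spoil uniqueness) is a worthwhile clarification the paper leaves implicit.
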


\begin{proof}
    See Appendix \ref{app:proof_of_lemma_about_solution_of_proper_objective_function_equals_solution_of_multi_objective_optimization}.
\end{proof}

Given a proper objective function $\pi$, all the solutions for \eqref{eqn:multi_objective_optimization_with_lexicographical_ordering} can be parameterized with a free variable $u_f \in \mathbb{R}^p$ as
\begin{equation}
    \label{eqn:prioritized_control_input}
    u^\pi(x,v, u_f) = \arglexmin_{u\in \mathcal{R}(J^T) }\pi(x,v,u)+ Nu_f
\end{equation}
which is well-defined for all $x\in {\mathcal{D}}$, $v\in\mathbb{R}^p$, and $u_f\in\mathbb{R}^m$.
With respect to $u^{\sf c}$ in \eqref{eqn:canonical_prioritized_linearizer}, we say \eqref{eqn:prioritized_control_input}  \textit{$\pi$-prioritized linearizer} or simply \textit{prioritized linearizer} if $\pi$ does not need to be specified.
It is obvious by Definition \ref{def:proper_objective_function} that $u_i^\pi = u_i^\pi(x,v_{1:i}) = P_i(x) u^\pi(x,v,u_f)$ satisfies
\begin{equation}\label{eqn:input_decomposition_2}
    u^\pi(x,v, u_f) = u_1^\pi(x,v_1) + \cdots + u_k^\pi(x,v_{1:k}) + u_{k+1}^\pi(x,u_f)
\end{equation}
where $u_{k+1}^\pi = Nu_f$.
Thus, the $i$-th residual $e_i(x,v_i,u)$ with  $u=u^\pi(x,v,u_f)$ becomes
\begin{equation}
    \label{eqn:residual_rewritten}
    e_i^\pi(x,v_{1:i}) = v_i - \kappa_i(x) - \sum_{j=1}^i(L_{ij}Q_j)(x)u_j^\pi(x,v_{1:j}).
\end{equation}
We can hereby get a generalization of the canonical prioritized semilinear form \eqref{eqn:canonical_prioritized_semilinear_form} by applying the $\pi$-prioritized linearizer $u=u^\pi(x,v, u_f)$ to the prioritized normal form \eqref{eqn:prioritized_normal_form}.

\begin{definition}
    \label{def:prioritized_semilinear_form}
    For given system \eqref{eqn:dynamical_system_with_multiple_tasks} that is prioritizable at $0$ and given a strongly / weakly proper objective function $\pi$ in \eqref{eqn:objective_function_vector}, 
    the strongly / weakly prioritized (partial) semilinear form of \eqref{eqn:dynamical_system_with_multiple_tasks} is defined as: for $1\le i\le k$,
    \begin{subequations}\label{eqn:prioritized_semilinear_form}
    \begin{align}
        \dot{\eta}_i & = f_i(x) + G_i u^\pi (x,v, u_f) \label{eqn:prioritized_semilinear_form_1}\\
        \dot{\xi}_i & = A_i \xi_i + B_i v_i - B_i e_i^{\pi} (x,v_{1:i})\label{eqn:prioritized_semilinear_form_2}\\
        y_i & = C_i \xi_i\label{eqn:prioritized_semilinear_form_3}
    \end{align}
    \end{subequations}
    where $u^\pi$ and $e_i^\pi$ are defined in \eqref{eqn:prioritized_control_input} and \eqref{eqn:residual_rewritten}, respectively,   \eqref{eqn:prioritized_semilinear_form_1} holds on $\mathcal{D}_i$, and \eqref{eqn:prioritized_semilinear_form_2}--\eqref{eqn:prioritized_semilinear_form_3} on $\mathcal{D}$.
    $\hfill\square$
\end{definition}

We note that, with a particular choice of $\pi$, the prioritized semilinear form \eqref{eqn:prioritized_semilinear_form} is reduced to the canonical one \eqref{eqn:canonical_prioritized_semilinear_form}, in the following manner. 

\begin{theorem}
    \label{thm:generalization_contains_canonical_semilinear_from}
    If an objective function \eqref{eqn:objective_function_vector} is defined as $\pi=(\pi_1,\dots,\pi_k)$ with \eqref{eqn:objective_function_for_canonical_form}
    and $\lambda_i(x)\in[0,\infty]$, then $\pi$ is 
    \begin{itemize}
        \item strongly proper if $\lambda_i(x)\in[0,\infty)$ for all $(i,x)$;
        \item weakly proper if $\lambda_i(x)=\infty$ for some $(i,x)$ satisfying $\rho_i(x)>0$.
    \end{itemize}
    Moreover, the $\pi$-prioritized linearizer $u^\pi(x,v, u_f)$ in \eqref{eqn:prioritized_control_input} equals to $u^{\sf c}(x,v, u_f)$ in  \eqref{eqn:control_input_for_canonical_linear_form_closed}.
    $\hfill\square$
\end{theorem}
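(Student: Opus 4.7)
The plan is to verify the two items of Definition~\ref{def:proper_objective_function} directly for $\pi_i$ of the form \eqref{eqn:objective_function_for_canonical_form}, and then to read off the closed-form minimizer and compare it to $u^{\sf c}$ in \eqref{eqn:linearizing_control_input_u_i}.

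\textbf{Step 1 (Dependence reduction).} By Item~3 of Lemma~\ref{lem:orthogonal_projectors}, $J_i = J_i P_{1:i}$, so the residual $e_i(x,v,u) = v_i-\kappa_i(x)-J_i(x)u$ depends on $u$ only through $P_{1:i}(x)u$ and, in particular, only on $v_i$. Since $P_i$ is an orthogonal projector we also have $P_i u = P_i P_{1:i} u$, so \eqref{eqn:proper_objective_function_dependence} holds. To isolate the subproblem \eqref{eqn:proper_objective_function_uniqueness}, assume $u_{1:i-1}^\pi \in \mathcal{R}(P_{1:i-1})$ is fixed and set $w_i \coloneqq v_i - \kappa_i(x) - J_i(x)u_{1:i-1}^\pi$. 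Parametrize $u_i\in\mathcal{R}(P_i(x))$ by $u_i = Q_i^T(x)\alpha$ with $\alpha\in\mathbb{R}^{\rho_i(x)}$; since $P_i = Q_i^T Q_i$ is a rank-$\rho_i$ projector, $Q_iQ_i^T = I_{\rho_i}$, so $\|u_i\|^2=\|\alpha\|^2$ and $J_i u_i = (J_i P_i)u_i = L_{ii}Q_i Q_i^T\alpha = L_{ii}\alpha$. The subproblem therefore reduces to
\begin{equation*}
\min_{\alpha\in\mathbb{R}^{\rho_i}}\; \|w_i - L_{ii}(x)\alpha\|^2 + \lambda_i^2(x)\|\alpha\|^2,
\end{equation*}
i.e., a Tikhonov-regularized least-squares problem.

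\textbf{Step 2 (Existence, uniqueness, and closed form).} By Item~5 of Lemma~\ref{lem:orthogonal_projectors}, $L_{ii}$ has full column rank $\rho_i$, hence $L_{ii}^TL_{ii}+\lambda_i^2 I_{\rho_i}$ is positive definite for every $\lambda_i\in[0,\infty)$ and the minimizer is unique and equal to $\alpha^\ast = L_{ii}^{+(\lambda_i)}w_i$. For $\lambda_i=\infty$ the regularizer forces $\alpha^\ast=0$ (and $\rho_i=0$ trivially forces $u_i=0$); in both cases $\alpha^\ast = L_{ii}^{+(\lambda_i)} w_i$ by our convention $L_{ii}^{+(\infty)}=0$. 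Thus
\begin{equation*}
u_i^\pi \;=\; Q_i^T(x) L_{ii}^{+(\lambda_i)}(x)\bigg(v_i-\kappa_i(x)-\sum_{j=1}^{i-1}(L_{ij}Q_j)(x)u_j^\pi\bigg),
\end{equation*}
which is exactly \eqref{eqn:linearizing_control_input_u_i}. An induction on $i$ then yields $u_i^\pi = u_i^{\sf c}$, and adding the null-space term $Nu_f$ shows that the $\pi$-prioritized linearizer \eqref{eqn:prioritized_control_input} coincides with $u^{\sf c}$ in \eqref{eqn:control_input_for_canonical_linear_form_closed}. This also establishes the Dependence item of Definition~\ref{def:proper_objective_function} for every $\lambda_i\in[0,\infty]$.

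\textbf{Step 3 (Representation).} Fix $x$ and $u_{1:i-1}^\pi$; the map $v_i\mapsto u_i^\pi$ is affine with linear part $v_i\mapsto Q_i^T L_{ii}^{+(\lambda_i)} v_i$. Restrict it to $v_i\in\mathcal{R}(L_{ii}Q_i)=\mathcal{R}(L_{ii})$ (equality because $Q_i$ has full row rank). Injectivity on $\mathcal{R}(L_{ii})$ is equivalent to injectivity of $L_{ii}^{+(\lambda_i)}L_{ii}$ since $Q_i^T$ has full column rank. Using the thin SVD $L_{ii}=U\Sigma V^T$ with $\Sigma$ invertible, a direct computation gives $L_{ii}^{+(\lambda_i)}L_{ii} = V\operatorname{diag}\!\big(\sigma_j^2/(\sigma_j^2+\lambda_i^2)\big)V^T$, which is nonsingular iff $\lambda_i<\infty$. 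Hence the Representation item holds exactly when $\lambda_i(x)<\infty$ for all $(i,x)$, while for any $(i,x)$ with $\lambda_i(x)=\infty$ and $\rho_i(x)>0$ the restricted map collapses to a constant, failing injectivity. Combining Steps~2 and~3 yields strong properness in the first case and only weak properness in the second.

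\textbf{Anticipated obstacle.} The routine linear algebra is straightforward; the only subtle point is treating $\lambda_i=\infty$ consistently, in particular making sure the reduction $u_i=Q_i^T\alpha$ remains valid when $\rho_i(x)=0$ (in which case $Q_i$ and $L_{ii}$ vanish by convention \eqref{eqn:definitions_when_r_j_equals_0}, so $u_i=0$ is the unique feasible point and both items are vacuously satisfied for that $i$).
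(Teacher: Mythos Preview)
Your proof is correct and follows essentially the same approach as the paper: you use Lemma~\ref{lem:orthogonal_projectors} to establish the Dependence item, obtain the unique Tikhonov minimizer \eqref{eqn:linearizing_control_input_u_i} (the paper simply refers back to the canonical derivation for this), and then use an SVD of $L_{ii}$ to check the Representation item, distinguishing the cases $\lambda_i<\infty$ and $\lambda_i=\infty$. The only cosmetic difference is your explicit parametrization $u_i=Q_i^T\alpha$, whereas the paper quotes the already-derived solution \eqref{eqn:linearizing_control_input_u_i} and then argues injectivity by computing $\mathcal{N}((Q_i^TL_{ii}^{+(\lambda_i)})(x))$ directly.
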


\begin{proof}
    By Lemma \ref{lem:orthogonal_projectors}, $\pi$ satisfies \eqref{eqn:proper_objective_function_dependence}.
    Then, the optimization problem \eqref{eqn:proper_objective_function_uniqueness} becomes identical to \eqref{eqn:optimization_problem_for_u_i} and there is a unique solution $u_i^*$ of \eqref{eqn:proper_objective_function_uniqueness} for $1\le i\le k$.
    Thus, $\pi$ is at least weakly proper and Lemma \ref{lem:solution_of_proper_objective_function_equals_solution_of_multi_objective_optimization} gives $u_i(x,v_{1:i}) = u_i^*$.
    It follows that $\pi$-prioritized linearizer  is identical to \eqref{eqn:control_input_for_canonical_linear_form_recursive} and \eqref{eqn:control_input_for_canonical_linear_form_closed} by Lemma \ref{lem:control_input_for_canonical_prioritized_semilinear_form}.
    If $\rho_i(x) = 0$, then $\mathcal{R}((L_{ii}Q_i)(x)) = \{0\}$ and $\mathcal{R}(P_i(x)) = \{0\}$.
    Thus, the mapping $v_i\mapsto u_i^*$ of $\mathcal{R}((L_{ii}Q_i)(x))$ into $\mathcal{R}(P_i(x))$ is one-to-one for all $\lambda_i(x)\in[0,\infty]$.
    Assume $\rho_i(x)>0$ and let $v_i,\bar{v}_i\in\mathcal{R}((L_{ii}Q_i)(x))$ and $v_i\neq \bar{v}_i$.
    Let $u_i^*$ and $\bar{u}_i^*$ be the unique solutions of \eqref{eqn:optimization_problem_for_u_i} given $v_i$ and $\bar{v}_i$, respectively.
    Since $u_j^*$ for $1\le j<i$ do not depend on $v_i$ and $\bar{v}_i$, we have $u_i^* - \bar{u}_i^* = \left(Q_i^TL_{ii}^{+(\lambda_i)}\right)(x)(v_i - \bar{v}_i)$
    by \eqref{eqn:linearizing_control_input_u_i}.
    If $\lambda_i(x) = \infty$, then $L_{ii}^{+(\lambda_i)}(x) = 0$ and $u_i=\bar{u}_i$.
    Thus, the mapping $v_i\mapsto u_i^*$ is not one-to-one if $\lambda_i(x) = \infty$.
    Assume $\lambda_i(x)>0$ and let 
    \begin{equation*}
        L_{ii}(x) = \underbrace{\begin{bmatrix} U_1 & U_2 \end{bmatrix}}_{U\in\mathbb{R}^{p_i\times p_i}}\underbrace{\begin{bmatrix} \Sigma_1 \\ 0 \end{bmatrix}}_{\Sigma\in\mathbb{R}^{p_i\times \rho_i(x)}}V^T = U_1\Sigma_1V^T
    \end{equation*}
    be the singular value decomposition where $U_1\in\mathbb{R}^{p_i\times\rho_i(x)}$ and $\Sigma_1,V\in\mathbb{R}^{\rho_i(x)\times\rho_i(x)}$.
    Then,
    \begin{equation*}
        (Q_i^TL_{ii}^{+(\lambda_i)})(x) = Q_i^T(x)\underbrace{V\Sigma_1(\Sigma_1^2 + \lambda_i^2(x)I_{\rho_i(x)})^{-1}}_{M\in\mathbb{R}^{\rho_i(x)\times\rho_i(x)}}U_1^T.
    \end{equation*}
    Since $\mathrm{rank}(Q_i(x)) = \mathrm{rank}(M) = \mathrm{rank}(U_1) = \rho_1(x)$,
    \begin{equation*}
        \mathcal{R}((L_{ii}Q_i)(x)) = \mathcal{R}(U_1) \perp \mathcal{R}(U_2) = \mathcal{N}((Q_i^TL_{ii}^{+(\lambda_i)})(x)).
    \end{equation*}
    Thus, $v_i - \bar{v}_i\not\in\mathcal{N}((Q_i^TL_{ii}^{+(\lambda_i)})(x))$ and $u_i^*\neq \bar{u}_i^*$.
    Therefore, the mapping $v_i\mapsto u_i^*$ is one-to-one if $\lambda_i(x)>0$.
\end{proof}

For controller design and analysis to come, it may be desired to characterize a class of the prioritized linearizers $u^\pi$ under consideration.  
In the remainder, we assume that $u^\pi$ has the following form with a block lower-triangular matrix $\Gamma(x)$:
\begin{equation}
    \label{eqn:prioritized_control_input_with_gamma}
    u^{\pi}(x,v, u_f) = Q^T L_{\rm D}^T \underbrace{\begin{bmatrix} \Gamma_{11} & \cdots & 0 \\ \vdots & \ddots & \vdots \\ \Gamma_{k1} & \cdots & \Gamma_{kk} \end{bmatrix}}_{\Gamma(x)=[\Gamma_{ij}(x)]}(v - \kappa) + Nu_f.
\end{equation}
It should be emphasized that this requirement is satisfied for a large class of $\pi$. 
For instance, if $\pi_i$ is selected as in \eqref{eqn:objective_function_for_canonical_form} with $\lambda_i(x) \in [0, \infty]$, then the resulting prioritized linearizer \eqref{eqn:control_input_for_canonical_linear_form_closed} is the same as \eqref{eqn:prioritized_control_input_with_gamma} with block lower triangular matrix
\begin{equation*}
    \Gamma = {\rm diag}(L_{11}^{\star(\lambda_1)},\dots,L_{kk}^{\star(\lambda_k)})(I_p+L_{\rm L}L_{\rm D}^{\oplus(\lambda)})^{-1}
\end{equation*}
where $L_{ii}^{\star(\lambda_i)} = (L_{ii}L_{ii}^T+\lambda_i^2I_{p_i})^+$ if $\lambda_i(x)<\infty$ and $L_{ii}^{\star(\infty)} = 0 \in\mathbb{R}^{p_i\times p_i}$.
Notice that, with $u^\pi$ in \eqref{eqn:prioritized_control_input_with_gamma}, the residual $e^{\pi}$ in \eqref{eqn:residual} for \eqref{eqn:prioritized_control_input_with_gamma} can be rewritten as 
\begin{equation}
    e(x,v,u^\pi) = \underbrace{\begin{bmatrix} E_{11}(x) & \cdots & 0 \\ \vdots & \ddots & \vdots \\ E_{k1}(x) & \cdots & E_{kk}(x) \end{bmatrix}}_{E(x) = [E_{ij}(x)] = (I_p - LL_{\rm D}^T\Gamma)(x)}(v-\kappa(x))\label{eqn:residual_rewritten2}
\end{equation}
where the block lower-triangular matrix 
\begin{equation*}
E(x) = I_p - L(x) L_{\rm D}^T(x) \Gamma(x)
\end{equation*}
is termed the \textit{residual matrix}.
It is obvious that, the prioritized semilinear form \eqref{eqn:prioritized_semilinear_form} with $u^\pi$ in \eqref{eqn:prioritized_control_input_with_gamma} becomes
\begin{subequations}\label{eqn:prioritized_semilinear_form_rewritten}
\begin{align}
    \dot{\eta}_i &= f_i(x) + (G_i Q^TL_{\rm D}^T\Gamma)(v-\kappa(x)) + (G_i N)u_f \label{eqn:prioritized_semilinear_form_rewritten_1}\\
    \dot{\xi}_i &= A_i\xi_i + B_iv_i - B_i\sum_{j=1}^iE_{ij}(x)(v_j-\kappa_j(x)) \label{eqn:prioritized_semilinear_form_rewritten_2}\\
    y_i &= C_i\xi_i \label{eqn:prioritized_semilinear_form_rewritten_3}
\end{align}
\end{subequations}
which has a general structure of the canonical semilinear form \eqref{eqn:canonical_prioritized_semilinear_form} in a sense that $J^{\oplus(\lambda)}$ and $E^{\sf c}$ in \eqref{eqn:canonical_prioritized_semilinear_form} are replaced with $Q^T L_{\rm D}^T \Gamma$ and $E$, respectively. 

The following assumption is made for technical reasons.

\begin{assumption}\label{asm:properties_of_prioritized_linearizer}
For given $(x,v,u_f)\in \mathcal{D} \times \mathbb{R}^p \times \mathbb{R}^m$, the $\pi$-prioritized linearizer $u^\pi(x,v,u_f)$ in \eqref{eqn:prioritized_control_input} has the form \eqref{eqn:prioritized_control_input_with_gamma} and satisfies the following statements for all $1\leq i \leq k$:
\begin{itemize}
    \item[1)] ${\rm rank}(L_{ii}L_{ii}^T \Gamma_{ii}(x)) = \rho_i(x)$;
    \item[2)] $    (L_{ii}L_{ii}^T\Gamma_{ii})(x) + (L_{ii}L_{ii}^T\Gamma_{ii})^T(x) \ge 0$.
    $\hfill\square$
\end{itemize}
\end{assumption}

It is readily seen that, the first item of the assumption  is satisfied with any strongly proper $\pi$, while the second item holds for the canonical form of $\pi$ in \eqref{eqn:objective_function_for_canonical_form}. 
On the other hand, by the structural property of $\Gamma$, we have 
\begin{equation}\label{eqn:xi_subsystem_Lure_form}
    \dot{\xi}_i = A_i \xi_i + B_i L_{ii}L_{ii}^T \Gamma_{ii} v_i + B_iE_{ii}\kappa_i - B_i\sum_{j=1}^{i-1}E_{ij} (v_j - \kappa_j)
\end{equation}
where we use the fact that the $(i,j)$-th block of $E$ is given as $E_{ij} = 0$ for $i<j$, $E_{ij} = I_{p_i} - L_{ii}L_{ii}^T\Gamma_{ii}$ for $i=j$, and $E_{ij} = -\sum_{a=j}^iL_{ia}L_{aa}^T\Gamma_{aj}$ for $i>j$.%
\footnote{
    The residual matrix can be formulated as:
    \begin{align*}
        E &= I_p - LL_D^T\Gamma \\
            &= I_p - \begin{bmatrix} L_{11} & 0 & \cdots & 0 \\ L_{21} & L_{22} & \cdots & 0 \\ \vdots & \vdots & \ddots & \vdots \\ L_{k1} & L_{k2} & \cdots & L_{kk} \end{bmatrix}\begin{bmatrix} L_{11}^T & 0 & \cdots & 0 \\ 0 & L_{22}^T & \cdots & 0 \\ \vdots & \vdots & \ddots & \vdots \\ 0 & 0 & \cdots & L_{kk}^T \end{bmatrix}\begin{bmatrix} \Gamma_{11} & 0 & \cdots & 0 \\ \Gamma_{21} & \Gamma_{22} & \cdots & 0 \\ \vdots & \vdots & \ddots & \vdots \\ \Gamma_{k1} & \Gamma_{k2} & \cdots & \Gamma_{kk} \end{bmatrix} \\
            &= I_p - \begin{bmatrix} L_{11} & 0 & \cdots & 0 \\ L_{21} & L_{22} & \cdots & 0 \\ \vdots & \vdots & \ddots & \vdots \\ L_{k1} & L_{k2} & \cdots & L_{kk} \end{bmatrix}\begin{bmatrix} L_{11}^T\Gamma_{11} & 0 & \cdots & 0 \\ L_{22}^T\Gamma_{21} & L_{22}^T\Gamma_{22} & \cdots & 0 \\ \vdots & \vdots & \ddots & \vdots \\ L_{kk}^T\Gamma_{k1} & L_{kk}^T\Gamma_{k2} & \cdots & L_{kk}^T\Gamma_{kk} \end{bmatrix} \\
            &= I_p - \begin{bmatrix} L_{11}L_{11}^T\Gamma_{11} & 0 & \cdots & 0 \\ \sum_{a=1}^2L_{2a}L_{aa}^T\Gamma_{a1} & L_{22}L_{22}^T\Gamma_{22} & \cdots & 0 \\ \vdots & \vdots & \ddots & \vdots \\ \sum_{a=1}^kL_{ka}L_{aa}^T\Gamma_{a1} & \sum_{a=2}^kL_{ka}L_{aa}^T\Gamma_{a2} & \cdots & L_{kk}L_{kk}^T\Gamma_{kk} \end{bmatrix} \\
            &= \begin{bmatrix} I_{p_1} - L_{11}L_{11}^T\Gamma_{11} & 0 & \cdots & 0 \\ -\sum_{a=1}^2L_{2a}L_{aa}^T\Gamma_{a1} & I_{p_2} - L_{22}L_{22}^T\Gamma_{22} & \cdots & 0 \\ \vdots & \vdots & \ddots & \vdots \\ -\sum_{a=1}^kL_{ka}L_{aa}^T\Gamma_{a1} & -\sum_{a=2}^kL_{ka}L_{aa}^T\Gamma_{a2} & \cdots & I_{p_k}-L_{kk}L_{kk}^T\Gamma_{kk} \end{bmatrix}.
    \end{align*}
}
Then the two requirements in Assumption~\ref{asm:properties_of_prioritized_linearizer} can be interpreted as a condition for the nonlinear term $(L_{ii} L_{ii}^T \Gamma_{ii})(x)$ in \eqref{eqn:xi_subsystem_Lure_form}, which will be utilized in the analysis to come.

\begin{remark}
The discussions made so far (including Definition \ref{def:proper_objective_function}, Lemma \ref{lem:solution_of_proper_objective_function_equals_solution_of_multi_objective_optimization}, and Definition \ref{def:prioritized_semilinear_form}) introduce a systematic way of defining various prioritized semilinear forms \eqref{eqn:prioritized_semilinear_form}.
Note that $\pi$ determines how close the subsystem \eqref{eqn:prioritized_semilinear_form_2}--\eqref{eqn:prioritized_semilinear_form_3} is to the linear system $(A_i,B_i,C_i)$ in the sense of \eqref{eqn:proper_objective_function_uniqueness}, where $B_i e_i^{\pi}$ is regarded as the error for linearization.   
Thus, we have a possibility to obtain a prioritized semilinear form that gives a better performance for tasks by constructing a suitable $\pi$, which will be left for future works.
$\hfill\square$
\end{remark}

\section{Prioritized Output Tracking Control}
\label{sec:prioritized_tracking_control}

In this section, we address the output tracking problem of the system \eqref{eqn:dynamical_system_with_multiple_tasks} with the prioritized tasks $[{\mathfrak{T}}_1,\dots, {\mathfrak{T}}_k]$, on top of the results on the prioritized input-output linearization in Section~\ref{sec:prioritized_input_output_linearization}. 
In other words, we specify the $i$-th task ${\mathfrak{T}}_i$ as the tracking task for the $i$-th task variable $y_i$, with a reference signal $y_{i}^\star:[0,\infty)\to\mathbb{R}^{p_i}$ that is assumed to be sufficiently smooth. 

We begin by remarking that  under strong assumptions as in Theorem~\ref{thm:prioritized_normal_form_can_be_reduced_to_Byrnes_Isidori_normal_form}, the system \eqref{eqn:dynamical_system_with_single_task} associated with $({\mathfrak{T}}_1,\dots,{\mathfrak{T}}_k)$ has the conventional linear form \eqref{eqn:conventional_linear_form}, from which the output tracking problem is solved by applying a linear control law 
\begin{equation}
    \label{eqn:external_reference_input}
    v_i = -K_i (\xi_i - \xi_i^\star) + \kappa_i^\star,\quad i = 1,\cdots,k
\end{equation}
where $\xi_i^\star(t)$ and $\kappa_i^\star(t)$ are given by 
\begin{align*}
    \xi_i^\star &= \mathrm{col}(y_{i1}^{\star(0)},\dots,y_{i1}^{\star(r_{i1}-1)},\dots,y_{ip_i}^{\star(0)},\dots,y_{ip_i}^{\star(r_{ip_i}-1)}) \\
    \kappa_{i}^\star &= \mathrm{col}(y_{i1}^{\star(r_{i1})},\dots,y_{ip_i}^{\star(r_{ip_i})}),
\end{align*}
and are assumed to be bounded on $[0,\infty)$.
Then, with the error variable 
$$\tilde{\xi}_i = \xi_i - \xi_{i}^\star,\quad i=1,\dots,k,$$ 
and $u_f=0$, the closed-loop error dynamics is computed as
\begin{subequations}     	\label{eqn:closed_system_conventional_linear_form}
\begin{align} 
    \dot{\eta} &= ({f}_\eta - {G}_\eta J^+\kappa)[\eta,\tilde{\xi}+\xi^\star] + {G}_\eta J^+ [\eta,\tilde{\xi}+\xi^\star](-K\tilde\xi + \kappa^\star) \\
    \dot{\tilde\xi} &= (A-BK)\tilde\xi
\end{align}
\end{subequations}
in which all the tracking error variables $\tilde{\xi}_i(t)$ converge to $0$ as long as $A-BK$ is Hurwitz where $K = \mathrm{diag}(K_1,\dots,K_k)$.%
\footnote{
    \label{foot:convergnece_and_boundedness_of_single_task}
    Concerning the boundedness of $\eta(t)$, it is well known that if the origin of $\dot{\eta} = f_\eta^\circ[\eta,0]$ is asymptotically stable and $A-BK$ is Hurwitz, then for every sufficiently small $\eta(0)$, $\tilde{\xi}(0)$, $\xi^\star(t)$, and $\kappa^\star(t)$, there exists a unique classical solution $(\eta,\tilde{\xi}):[0,\infty)\to\Phi({\cal D})$ of \eqref{eqn:closed_system_conventional_linear_form} with the initial value $(\eta(0),\tilde{\xi}(0))$ such that $\eta(t)$ is bounded on $[0,\infty)$ and $\tilde{\xi}(t)\to0$ as $t\to\infty$.
    See Appendix \ref{app:proof_of_footnote_about_convergnece_and_boundedness_of_single_task} for the proof.
}

As discussed so far, if $J(x)$ becomes singular at some point on ${\cal D}$, to obtain \eqref{eqn:closed_system_conventional_linear_form} we should restrict ourselves to a neighborhood ${\cal D}_{1:k}\subset\Omega_{1:k}$ of 0 (that is possibly too small or empty); otherwise, it may be impossible to achieve all the tracking tasks simultaneously. 
As an alternative, we in this work propose the $\pi$-prioritized linearizer \eqref{eqn:prioritized_control_input_with_gamma} along with the linear control law \eqref{eqn:external_reference_input} as the controller that resolves the prioritized output tracking problem.  
It will be seen that, the proposed controller achieves higher-priority output tracking tasks (so that some of $\xi_i(t)$ approach the reference $\xi_i^\star(t)$) even in the presence of singularity of $J(x)$. 

The following assumption allows to enlarge the region of our interest.  
\begin{assumption}\label{asm:region_of_interest}
    There exist an integer $1\le i_0 \le k$ and a neighborhood ${\cal D}_{1:i_0}\subset\Omega_{1:i_0}$ of 0 such that a deffiomorphism $\Phi_{1:i_0}(x)$ is well defined on ${\cal D}_{1:i_0}$ and \eqref{eqn:prioritized_normal_form_on_D1} holds on $\Phi_{1:i_0}({\cal D}_{1:i_0})$.
    Let ${\cal C}_{1:i_0}\subset\Phi_{1:i_0}({\cal D}_{1:i_0})$ be a compact neighborhood of 0.
    $\hfill\square$
\end{assumption}

For simplicity, in what follows we frequently use $\bullet_0$ instead of $\bullet_{1:i_0}$ or $\bullet_{1:i_0,1:i_0}$, as discussed in Subsection \ref{subsec:notations}.
We also write $\bullet(x)$ or $\bullet[z_0] = \bullet[\eta_0,\xi_0] \coloneqq \bullet(\Phi_0^{-1}(\eta_0,\xi_0))$ interchangeably for a function $\bullet$ defined on ${\cal D}_0$.
If $\bullet[z_0]$ is bounded or Lipschitz on $\mathcal{C}_0$, ${\sf M}_\bullet>0$ or ${\sf L}_\bullet>0$ will be used to denote an upper bound of $\|\bullet[z_0]\|$ or a Lipschitz constant of $\bullet[z_0]$ on $\mathcal{C}_0$, respectively.
Additionally, ${\sf M}_{\xi_0^\star}>0$ and ${\sf M}_{\kappa_0^\star}>0$ will be used to denote upper bounds of $\xi_0^\star$ and $\kappa_0^\star$ on $[0,\infty)$, respectively:
\begin{align}\label{eqn:upper_bound_of_reference_signals}
    \|\xi_{0}^\star(t)\| \le {\sf M}_{\xi_0^\star},~~~~\|\kappa_{0}^\star(t)\| \le {\sf M}_{\kappa_0^\star} \qquad(t\in [0,\infty)).
\end{align}

It is emphasized that as long as $i_0 < k$, the set ${\cal D}_0$ can be viewed as an extension of the domain ${\cal D}_{1:k}$ for \eqref{eqn:closed_system_conventional_linear_form}, since ${\cal D}_0$ includes singular points on the lower-priority tasks ${\mathfrak{T}}_{i_0+1},\dots,{\mathfrak{T}}_k$.
It is easily seen that, as in  \eqref{eqn:prioritized_normal_form_on_D1} on the enlarged region ${\cal D}_0$, the error dynamics of the overall system with the proposed controller \eqref{eqn:prioritized_control_input_with_gamma} and \eqref{eqn:external_reference_input} on ${\cal D}_0$ is derived as
\begin{subequations}     \label{eqn:prioritized_closed_loop_system_on_omega_0}
\begin{align}
    \dot{\eta}_0 &= f_0^\circ + G_0^\circ(-K_0\tilde{\xi}_0 + \kappa_0^\star) + G_0u_{i_0+1:k+1}^\pi \label{eqn:prioritized_closed_loop_system_on_omega_0_eta} \\
    \dot{{\tilde{\xi}}}_i &= (A_i - B_iK_i){\tilde{\xi}}_i + B_i\sum_{j=1}^iE_{ij}(K_j{\tilde{\xi}}_j-\kappa_{j}^\star + \kappa_j) \label{eqn:prioritized_closed_loop_system_on_omega_0_xi}
\end{align}
\end{subequations}
for all $1\le i\le k$ where $G_0^\circ = G_0Q_0^TL_{D0}^T\Gamma_0$, $f_0^\circ = f_0 - G_0^\circ\kappa_0$, and $u^\pi = u_0^\pi + u_{i_0+1:k+1}^\pi = u_{1:i_0}^\pi + u_{i_0+1:k}^\pi + Nu_f$ with
\begin{equation*}
    u_{a:b}^\pi \coloneqq \sum_{i=a}^bQ_i^TL_{ii}^T\sum_{j=1}^i\Gamma_{ij}(-K_j{\tilde{\xi}}_j + \kappa_j^\star - \kappa_j), ~~ 1\leq a,b\leq k. 
\end{equation*}

\subsection{Lyapunov Analysis of Tracking Error Dynamics With Higher Priority Under Imperfect Linearization}
\label{subsec:lyapunov_analysis_of_tracking_error_dynamics_with_higher_priority_under_imperfect_linearization}

A remarkable point on \eqref{eqn:prioritized_closed_loop_system_on_omega_0} is that the $\tilde{\xi}$-dynamics in \eqref{eqn:prioritized_closed_loop_system_on_omega_0_xi}  is not a linear system due to the imperfect linearization (and also due to the singularity of $J(x)$). 
This implies that, simply letting $A-BK$ being Hurwitz as in \eqref{eqn:closed_system_conventional_linear_form} may be not enough for stabilizing the error dynamics. 
 
We here propose a new way of stabilizing $\tilde{\xi}_i(t)$ with higher priority (i.e., $i=1,\dots,i_0$) in the presence of perturbations due to the imperfect linearization. 
This subsection aims to provide the underlying principle and conditions for design parameters (i.e., $\Gamma$ and $K$), while the detailed proof can be found in Subsection~\ref{subsec:main_result}. 
The key idea is to represent the $\tilde{\xi}$-dynamics in \eqref{eqn:prioritized_closed_loop_system_on_omega_0_xi} as an interconnection of $k$ isolated systems. 
To proceed, we assume the following first. 
\begin{assumption}\label{asm:Gamma}
    The matrix $\Gamma$ in \eqref{eqn:prioritized_control_input_with_gamma} satisfies the following: 
    \begin{itemize}
        \item[(a)] For each $1\le i\le i_0$, there exists $\varsigma_i>0$ satisfying
        \begin{equation}
            \frac{1}{2}\left(L_{ii}L_{ii}^T\Gamma_{ii} + (L_{ii}L_{ii}^T\Gamma_{ii})^T\right)[\eta_0,\xi_0] \ge \varsigma_iI_{p_i}
            \label{eqn:diagonal_block_positive_definite}
        \end{equation}
        on ${\mathcal{C}}_0$;
        \item[(b)] $\Gamma_{ij}(x)$, $1\le i,j\le i_0$, is sufficiently smooth on ${\mathcal{D}}_0$;
        \item[(c)] $\|\Gamma(x)\|$ is bounded on every compact subset of ${\mathcal{D}}_0$.
        $\hfill\square$
    \end{itemize}
\end{assumption}

\begin{remark}
Assumption~\ref{asm:Gamma} holds for a large class of the $\pi$-prioritized linearizer $u^\pi$ in \eqref{eqn:prioritized_control_input_with_gamma}.
For instance, Assumption \ref{asm:properties_of_prioritized_linearizer} implies \eqref{eqn:diagonal_block_positive_definite} and the canonical prioritized control input $u^{\sf c}$ in  \eqref{eqn:control_input_for_canonical_linear_form_closed} with sufficiently smooth damping functions $\lambda_1(x),\dots,\lambda_{i_0}(x)$ and positive damping functions $\lambda_{i_0+1}(x),\dots,\lambda_k(x)$ on $\mathcal{C}_0$ satisfies the latter two items.
$\hfill\square$
\end{remark}

With $\varsigma_i$, we represent each $\tilde{\xi}_i$-dynamics as a linear system%
\footnote{
    The $\tilde{\xi}_i$-dynamics can be reformulated as:
    \begin{align*}
        \dot{{\tilde{\xi}}}_i &= (A_i - B_iK_i){\tilde{\xi}}_i + B_i\sum_{j=1}^iE_{ij}(K_j{\tilde{\xi}}_j-\kappa_{j}^\star + \kappa_j) \\
            &= (A_i - B_iK_i){\tilde{\xi}}_i + B_i(I_{p_i}-L_{ii}L_{ii}^T\Gamma_{ii})(K_i{\tilde{\xi}}_i-\kappa_i^\star + \kappa_i) + B_i\sum_{j=1}^{i-1}E_{ij}(K_j{\tilde{\xi}}_j-\kappa_{j}^\star + \kappa_j) \\
            &= (A_i - \varsigma_iB_iK_i)\tilde{\xi}_i + B_i\underbrace{(-1)(L_{ii}L_{ii}^T\Gamma_{ii} - \varsigma_iI_{p_i})K_i\tilde{\xi}_i}_{u_{\tilde{\xi}_i}} + B_i\underbrace{\bigg(E_{ii}(\kappa_i-\kappa_i^\star) + \sum_{j=1}^{i-1}E_{ij}(K_j{\tilde{\xi}}_j-\kappa_{j}^\star + \kappa_j)\bigg)}_{w_{\tilde{\xi}_i}}.
    \end{align*}
}
\begin{subequations}\label{eq:error_dynamics_linear_system}
\begin{align}
    \dot{\tilde{\xi}}_i & = \big(A_i- \varsigma_i B_i K_i \big) \tilde{\xi}_i + B_i u_{\tilde{\xi}_i} + B_i w_{\tilde{\xi}_i},\\
    y_{\tilde{\xi}_i} & = K_i \tilde{\xi}_i, 
\end{align}
\end{subequations}
with the nonlinear feedback term
\begin{align}
    u_{\tilde{\xi}_i} = - \big(  L_{ii} L_{ii}^T \Gamma_{ii}  - \varsigma_i I_{p_i}\big) y_{\tilde{\xi}_i}\label{eq:error_dynamics_nonlinearity}
\end{align}
and the interconnection term 
\begin{equation}
    w_{\tilde{\xi}_i} = E_{ii}(\kappa_i-\kappa_i^\star) + \sum_{j=1}^{i-1} E_{ij} (K_j \tilde{\xi}_j - \kappa_j^\star + \kappa_j).
\end{equation}
Note that the transfer function of \eqref{eq:error_dynamics_linear_system} from $u_{\tilde{\xi}_i}$ to $y_{\tilde{\xi}_i}$ is given by 
\begin{equation}
    \label{eqn:transfer_function_H_i}
    H_i(s) = K_i(sI_{r_i} - A_i + \varsigma_iB_iK_i)^{-1}B_i
\end{equation}
which satisfies the following \cite{An2023}.

\begin{lemma}
    \label{lem:strictly_positive_real_transfer_function}
    For any $\varsigma_i>0$, there exists $K_i\in\mathbb{R}^{p_i\times r_i}$ such that $(A_i-\varsigma_iB_iK_i,B_i)$ is controllable, $(A_i-\varsigma_iB_iK_i,K_i)$ is observable, and $H_i(s)$ in \eqref{eqn:transfer_function_H_i} is strictly positive real.
    $\hfill\square$
\end{lemma}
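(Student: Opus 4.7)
My plan is to reverse-engineer the Kalman-Yakubovich-Popov (KYP) lemma: rather than starting from $K_i$ and verifying strict positive realness, I construct $K_i$ from a positive definite matrix $P$ solving an algebraic Riccati equation, so that the KYP conditions hold by design. The key structural observation is that $(A_i, B_i)$ is controllable, since $A_i$ and $B_i$ are built from $p_i$ integrator chains in controllable canonical form; moreover, every eigenvalue of $A_i$ is zero (the blocks are nilpotent), which will be used in the observability step.

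First I would fix any $Q_0 > 0$ and invoke standard LQR existence to obtain the unique $P > 0$ satisfying
\begin{equation*}
    A_i^T P + P A_i - 2\varsigma_i P B_i B_i^T P + Q_0 = 0.
\end{equation*}
I then set $K_i := B_i^T P \in \mathbb{R}^{p_i \times r_i}$. Substituting this into the ARE yields immediately
\begin{equation*}
    (A_i - \varsigma_i B_i K_i)^T P + P(A_i - \varsigma_i B_i K_i) = -Q_0,
\end{equation*}
so $A_c := A_i - \varsigma_i B_i K_i$ is Hurwitz by the Lyapunov theorem. The relations $A_c^T P + P A_c = -Q_0 < 0$ and $B_i^T P = K_i$ are exactly the hypotheses of the MIMO positive real lemma for strictly proper transfer functions, which delivers that $H_i(s) = K_i(sI_{r_i} - A_c)^{-1}B_i$ is strictly positive real. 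Controllability of $(A_c, B_i)$ is preserved from $(A_i, B_i)$ under state feedback, so only observability of $(A_c, K_i)$ remains to be verified.

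This observability check is the only step with a genuine subtlety, and I expect it to be the main obstacle. I would resolve it via a PBH-type argument that exploits the nilpotency of $A_i$: suppose, toward a contradiction, that $A_c v = \lambda v$ and $K_i v = 0$ for some $v \neq 0$. Then $B_i^T P v = 0$ forces $A_c v = A_i v - \varsigma_i B_i(B_i^T P v) = A_i v$, so $v$ is also an eigenvector of $A_i$ at $\lambda$; since $A_i$ is a direct sum of nilpotent integrator chains, necessarily $\lambda = 0$. Sandwiching the ARE between $v^*$ and $v$ then yields $2\,\mathrm{Re}(\lambda)\,v^* P v = -v^* Q_0 v$, i.e.\ $0 = -v^* Q_0 v$, contradicting $Q_0 > 0$ and $v \neq 0$. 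Hence $(A_c, K_i)$ is observable, closing the proof. The remaining estimates and controllability claims are routine; the whole argument rests on the ARE construction plus this one PBH calculation.
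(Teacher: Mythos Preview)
Your argument is correct. The ARE-based construction $K_i = B_i^T P$ with $P>0$ solving $A_i^T P + P A_i - 2\varsigma_i P B_i B_i^T P + Q_0 = 0$ yields precisely the KYP relations $PB_i = K_i^T$ and $(A_i-\varsigma_i B_iK_i)^T P + P(A_i-\varsigma_i B_iK_i) = -Q_0$, from which strict positive realness of $H_i(s)$ follows, and your PBH step for observability is sound. One minor simplification: once you have shown that an unobservable mode $v$ satisfies $A_i v = \lambda v$ with $\lambda=0$ (nilpotency) while also $A_c v = \lambda v$ with $A_c$ Hurwitz, you already have the contradiction $\mathrm{Re}(\lambda)<0$ versus $\lambda=0$, so the final sandwiching of the ARE is not strictly needed.

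As for comparison with the paper: the paper does not give its own proof of this lemma but defers entirely to the companion work~\cite{An2023}, so there is no in-text argument to compare against. Your LQR/KYP construction is a standard and self-contained route to the result; whether it coincides with the approach in~\cite{An2023} cannot be determined from the present paper alone.
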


From now on, assume the following:

\begin{assumption}\label{asm:feedback_gain_matrix}
    For each $K_i$, $(A_i-\varsigma_iB_iK_i,B_i)$ is controllable, $(A_i-\varsigma_iB_iK_i,K_i)$ is observable, and $H_i(s)$ is strictly positive real.
    $\hfill\square$
\end{assumption}

Then the Kalman-Yakubovich-Popov lemma \cite[Lemma 6.3]{Khalil2015} says that, for each $1\le i\le i_0$, there exist a constant $\vartheta_i>0$ and  matrices $X_i=X_i^T>0$ and $R_i$ satisfying
\begin{align}    
    X_i (A_i - \varsigma_i B_i K_i) + (A_i - \varsigma_i B_i K_i)^TX_i &= -R_i^T R_i - 2\vartheta_iX_i \notag \\
    X_iB_i &= K_i^T. \label{eqn:Kalman_Yakubovichi_Popov_lemma}
\end{align}
With $X_i$ above, we employ a Lyapunov function candidate for each $\tilde{\xi}_i$ as follows:
\begin{equation}\label{eqn:Lyapunov_function_i}
    V_{\tilde{\xi}_i}({\tilde{\xi}}_i) \coloneqq \sqrt{{\tilde{\xi}}_i^TX_i{\tilde{\xi}}_i}.
\end{equation}
Then the time derivative $\dot{V}_{\tilde{\xi}_i}$ along with \eqref{eq:error_dynamics_linear_system} is computed as%
\footnote{
    If $\tilde{\xi}_i$ is differentiable at $t$ and $\tilde{\xi}_i(t)\neq0$, then $V_{\tilde{\xi}}$ is also differentiable at $t$ and we can compute an upper bound of $\dot{V}_{\tilde{\xi}_i}$ as
    \begin{align*}
        \dot{V}_{\tilde{\xi}_i} &= \frac{d(\tilde{\xi}_i^TX_i\tilde{\xi}_i)^{1/2}}{dt} = \frac{d(\tilde{\xi}_i^TX_i\tilde{\xi}_i)^{1/2}}{d(\tilde{\xi}_i^TX_i\tilde{\xi}_i)}\frac{d(\tilde{\xi}_i^TX_i\tilde{\xi}_i)}{dt} = \frac{1}{2\sqrt{\tilde{\xi}_i^TX_i\tilde{\xi}_i}}\left(\tilde{\xi}_i^TX_i\dot{\tilde{\xi}}_i + \dot{\tilde{\xi}}_i^TX_i\tilde{\xi}_i\right) \\
        &= \frac{1}{2\sqrt{\tilde{\xi}_i^TX_i\tilde{\xi}_i}}\bigg(\tilde{\xi}_i^TX_i\big((A_i - \varsigma_iB_iK_i)\tilde{\xi}_i + B_iu_{\tilde{\xi}_i} + B_iw_{\tilde{\xi}_i}\big) + \big((A_i - \varsigma_iB_iK_i)\tilde{\xi}_i + B_iu_{\tilde{\xi}_i} + B_iw_{\tilde{\xi}_i}\big)^TX_i\tilde{\xi}_i\bigg) \\
        &= \frac{1}{2\sqrt{\tilde{\xi}_i^TX_i\tilde{\xi}_i}}\bigg(\tilde{\xi}_i^T\underbrace{\big(X_i(A_i-\varsigma_iB_iK_i) + (A_i-\varsigma_iB_iK_i)^TX_i\big)}_{-R_i^TR_i-2\vartheta_iX_i}\tilde{\xi}_i + \tilde{\xi}_i^T\underbrace{(X_iB_i)}_{K_i^T}(u_{\tilde{\xi}_i} + w_{\tilde{\xi}_i}) + (u_{\tilde{\xi}_i} + w_{\tilde{\xi}_i})^T\underbrace{(X_iB_i)^T}_{K_i}\tilde{\xi}_i \bigg) \\
        &= \frac{1}{2\sqrt{\tilde{\xi}_i^TX_i\tilde{\xi}_i}}\bigg(\tilde{\xi}_i^T(-R_i^TR_i - 2\vartheta_iX_i)\tilde{\xi}_i + \tilde{\xi}_i^TK_i^Tu_{\tilde{\xi}_i} + u_{\tilde{\xi}_i}^TK_i\tilde{\xi}_i + \tilde{\xi}_i^TX_iB_iw_{\tilde{\xi}_i} + w_{\tilde{\xi}_i}^TB_i^TX_i^T\tilde{\xi}_i\bigg) \\
        &= \frac{1}{2\sqrt{\tilde{\xi}_i^TX_i\tilde{\xi}_i}}\bigg(\tilde{\xi}_i^T(-\underbrace{R_i^TR_i}_{\ge0} - 2\vartheta_iX_i)\tilde{\xi}_i - \tilde{\xi}_i^TK_i^T\underbrace{\big(L_{ii}L_{ii}^T\Gamma_{ii} + (L_{ii}L_{ii}^T\Gamma_{ii})^T - 2\varsigma_iI_{p_i}\big)}_{\ge0}K_i\tilde{\xi}_i + \tilde{\xi}_i^TX_iB_iw_{\tilde{\xi}_i} + w_{\tilde{\xi}_i}^TB_i^TX_i^T\tilde{\xi}_i\bigg) \\
        &\le -\vartheta_i\sqrt{\tilde{\xi}_i^TX_i\tilde{\xi}_i} + \frac{\|X_i\tilde{\xi}_i\|}{\sqrt{\tilde{\xi}_i^TX_i\tilde{\xi}_i}}\underbrace{\|B_i\|}_{=1}\|w_{\tilde{\xi}_i}\| \\
        &\le -\vartheta_i\sigma_{\min}^{1/2}(X_i)\|\tilde{\xi}_i\| + \frac{\sigma_{\max}(X_i)}{\sigma_{\min}^{1/2}(X_i)}\|w_{\tilde{\xi}_i}\|.
    \end{align*}
    Indeed, this upper bound holds almost everywhere if $\tilde{\xi}_i(t)$ is absolutely continuous.
    To see this, let $\tilde{\xi}_i(t)$ be absolutely continuous on some interval and observe that $V_{\tilde{\xi}_i}(\tilde{\xi}_i)$ is Lipschitz such that $V_{\tilde{\xi}_i}(\tilde{\xi}_i(t))$ is also absolutely continuous on the interval.
    It follows that $V_{\tilde{\xi}_i}(\tilde{\xi}_i(t))$ is differentiable almost everywhere.
    Let both $\tilde{\xi}(t)$ and $V_{\tilde{\xi}_i}(\tilde{\xi}_i(t))$ be differentiable at $t_0$ (it happens almost everywhere).
    If $\tilde{\xi}_i(t_0)\neq0$, then the upper bound holds at $t_0$ as discussed.
    Assume $\tilde{\xi}_i(t_0) = 0$.
    Since $V_{\tilde{\xi}_i}(\tilde{\xi}_i(t))\ge0$ for all $t$, it must be $\dot{V}_{\tilde{\xi}_i} = 0$ at $t_0$; otherwise, $V_{\tilde{\xi}_i}(\tilde{\xi}_i(t))<0$ for some $t$ in the vicinity of $t_0$.
    Since the upper bound of $\dot{V}_{\tilde{\xi}_i}$ is nonnegative when $\tilde{\xi}_i(t_0) = 0$, the upper bound holds at $t_0$.
}
\begin{equation}
    \dot{V}_{\tilde{\xi}_i} \le - \vartheta_i {\sigma_{\min}^{1/2}(X_i)} \|\tilde{\xi}_i\| + \frac{\sigma_{\rm max}(X_i)}{{\sigma_{\rm min}^{1/2}(X_i)}}  \|w_{\tilde{\xi}_i}\|
    \label{eqn:upper_bound_of_V_i_dot}
\end{equation} 
(as long as $\tilde{\xi}_i$ is differentiable at $t$ and $\tilde{\xi}_i(t)\neq0$) where we use  \eqref{eqn:diagonal_block_positive_definite} and \eqref{eqn:Kalman_Yakubovichi_Popov_lemma}.
 
In the analysis to come, we deal with the interconnection term $w_{\tilde{\xi}_i}$ in \eqref{eqn:upper_bound_of_V_i_dot} by selecting the weights ${\sf w}_i$, $i=1,\dots,i_0$, used in the sum of the Lyapunov functions
\begin{align}\label{eqn:Lyapunov_xi}
    V_{\tilde{\xi}_0}(\tilde{\xi}_0)\coloneqq \sum_{i=1}^{i_0}\frac{{\sf w}_i}{\vartheta_i {\sigma_{\min}^{1/2} (X_i)}} V_{\tilde{\xi}_i}(\tilde{\xi}_i).
\end{align}
To present a condition for ${\sf w}_i$, note first that $w_{\tilde{\xi}_i}$ in \eqref{eqn:upper_bound_of_V_i_dot} is bounded by
\begin{subequations}\label{eqn:bound_w_tilde_xi}
\begin{equation*}
    \|w_{\tilde{\xi}_i}\| \leq \sum_{j=1}^{i-1} \|E_{ij}\|\| K_j\| \|\tilde{\xi}_j\| + \sum_{j=1}^i \|E_{ij}\| \big( \|\kappa_j^\star\| + \|\kappa_j\| \big)
\end{equation*}
\end{subequations}
where $E_{ij}[z_0]$ and $\kappa_j[z_0]$ satisfy
\begin{align}
    \|E_{ij}[z_0]\| & \le {\sf M}_{E_{ij}},\label{eqn:bounds_E_kappa} \\
    \|\kappa_i[z_0]\| & \leq  {\sf L}_{\kappa_i} \| z_0\| \leq {\sf L}_{\kappa_i} \bigg( \|\eta_0\| + \sum_{a=1}^{i_0}\| \tilde{\xi}_a \| + \|\xi_0^\star\| \bigg) \notag
\end{align}
for any $1\leq i,j\leq i_0$ and $z_0 \in \mathcal{C}_0$.
Then one obtains, for each $1\leq i \leq i_0$, that%
\footnote{
    The upper bound is computed as
    \begin{align*}
        \frac{{\sf w}_i}{\vartheta_i {\sigma_{\min}^{1/2} (X_i)}} \dot{V}_{\tilde{\xi}_i}
            &\le \frac{{\sf w}_i}{\vartheta_i {\sigma_{\min}^{1/2} (X_i)}}\bigg( -\vartheta_i\sigma_{\min}^{1/2}(X_i)\|\tilde{\xi}_i\| + \frac{\sigma_{\max}(X_i)}{\sigma_{\min}^{1/2}(X_i)}\|w_{\tilde{\xi}_i}\| \bigg) \\
            &= -{\sf w}_i\|\tilde{\xi}_i\| + \frac{{\sf w}_i\tilde{\sigma}_i}{\vartheta_i}\bigg( \sum_{j=1}^{i-1}\|E_{ij}\|\|K_j\|\|\tilde{\xi}_j\| + \sum_{j=1}^i\|E_{ij}\|(\|\kappa_j^\star\|+\|\kappa_j\|) \bigg) \\
            &\le -{\sf w}_i\|\tilde{\xi}_i\| + \frac{{\sf w}_i\tilde{\sigma}_i}{\vartheta_i}\bigg( \sum_{j=1}^{i-1}{\sf M}_{E_{ij}}\|K_j\|\|\tilde{\xi}_j\| + \sum_{j=1}^i{\sf M}_{E_{ij}}\bigg({\sf M}_{\kappa_j^\star}+{\sf L}_{\kappa_j}\bigg(\|\eta_0\| + \sum_{a=1}^{i_0}\|\tilde{\xi}_a\| + {\sf M}_{\xi_0^\star}\bigg)\bigg) \bigg) \\
            &= -{\sf w}_i\|\tilde{\xi}_i\| - {\sf w}_i\sum_{j=1}^{i-1}\underbrace{\bigg(-\frac{\tilde{\sigma}_i}{\vartheta_i}{\sf M}_{E_{ij}}\|K_j\|\bigg)}_{{\sf y}_{ji}}\|\tilde{\xi}_j\| + {\sf w}_i\underbrace{\bigg(\frac{\tilde{\sigma}_i}{\vartheta_i}\sum_{j=1}^i{\sf M}_{E_{ij}}{\sf L}_{\kappa_j}\bigg)}_{{\sf z}_{ai}}\sum_{a=1}^{i_0}\|\tilde{\xi}_a\| \\
            &\quad + \frac{{\sf w}_i\tilde{\sigma}_i}{\vartheta_i}\sum_{j=1}^i{\sf M}_{E_{ij}}\big({\sf L}_{\kappa_j}\|\eta_0\| + {\sf L}_{\kappa_j}{\sf M_{\xi_0^\star}} + {\sf M}_{\kappa_j^\star}\big) \\
            &= -\bigg( {\sf w}_i \|{\tilde{\xi}}_i\| + {\sf w}_i \sum_{j=1}^{i-1}{\sf y}_{ji}\|{\tilde{\xi}}_j\| - {\sf w}_i\sum_{j=1}^{i_0}{\sf z}_{ji}\|{\tilde{\xi}}_j\| \bigg) + \frac{{\sf w}_i\tilde{\sigma}_i}{\vartheta_i} \sum_{j=1}^{i} {\sf M}_{E_{ij}} \big( {\sf L}_{\kappa_j} \|\eta_0\| + {\sf L}_{\kappa_j} {\sf M}_{\xi_0^\star} +  {\sf M}_{\kappa_j^\star} \big).
    \end{align*}
}
\begin{align}
    & \frac{{\sf w}_i}{\vartheta_i {\sigma_{\min}^{1/2} (X_i)}} \dot{V}_{\tilde{\xi}_i}\nonumber\\
    & \le  -\bigg( {\sf w}_i \|{\tilde{\xi}}_i\| + {\sf w}_i \sum_{j=1}^{i-1}{\sf y}_{ji}\|{\tilde{\xi}}_j\| - {\sf w}_i \sum_{j=1}^{i_0}{\sf z}_{ji}\|{\tilde{\xi}}_j\| \bigg) 	\label{eqn:upper_bound_of_V_i_dot_with_weight} \\
    &\quad + \frac{{\sf w}_i\tilde{\sigma}_i}{\vartheta_i} \sum_{j=1}^{i} {\sf M}_{E_{ij}} \big( {\sf L}_{\kappa_j} \|\eta_0\| + {\sf L}_{\kappa_j} {\sf M}_{\xi_0^\star} +  {\sf M}_{\kappa_j^\star} \big)\notag 
\end{align}
where $\tilde{\sigma}_i\coloneqq {\sigma_{\max}(X_i)}/{\sigma_{\min}(X_i)}$ and for $1\leq i,j\leq i_0$
\begin{align*}
    {\sf y}_{ij} \coloneqq \begin{dcases*} 0, & $i>j$ \\ 1, & $i=j$ \\ - \frac{\tilde{\sigma}_j}{\vartheta_j} {\sf M}_{E_{ji}}\|K_i\|, & $i<j$ \end{dcases*},~
    {\sf z}_{ij} \coloneqq \frac{\tilde{\sigma}_j}{\vartheta_j} \sum_{a=1}^j {\sf M}_{E_{ja}}{\sf L}_{\kappa_a}.
\end{align*}
Note here that the sum of the brace terms in \eqref{eqn:upper_bound_of_V_i_dot_with_weight} can be rewritten as%
\footnote{
    The sum can be formulated as
    \begin{align*}
        &\sum_{i=1}^{i_0} \bigg( {\sf w}_i \|\tilde{\xi}_i\| + {\sf w}_i \sum_{j=1}^{i-1} {\sf y}_{ji} \|\tilde{\xi}_j\| - {\sf w}_i\sum_{j=1}^{i_0}{\sf z}_{ji}\|{\tilde{\xi}}_j\| \bigg)
        = \sum_{i=1}^{i_0}\bigg({\sf w}_i\sum_{j=1}^{i_0}({\sf y}_{ji}-{\sf z}_{ji})\|\tilde{\xi}_j\|\bigg)
        = \sum_{i=1}^{i_0} \bigg(\sum_{j=1}^{i_0} ({\sf y}_{ij} - {\sf z}_{ij}){\sf w}_j\bigg) \|\tilde{\xi}_i\| \\
        &= \begin{bmatrix} \displaystyle\sum_{j=1}^{i_0}({\sf y}_{1j}-{\sf z}_{1j}){\sf w}_j & \cdots & \displaystyle\sum_{j=1}^{i_0}({\sf y}_{i_0j}-{\sf z}_{i_0j}){\sf w}_j \end{bmatrix}\begin{bmatrix} \|\tilde{\xi}_1\| \\ \vdots \\ \|\tilde{\xi}_{i_0}\| \end{bmatrix}
        = \big(({\sf Y} - {\sf Z}){\sf w} \big)^T \begin{bmatrix} \|\tilde{\xi}_1\| \\ \vdots \\ \|\tilde{\xi}_{i_0}\| \end{bmatrix}.
    \end{align*}
}
\begin{align*}
    & \sum_{i=1}^{i_0} \bigg( {\sf w}_i \|\tilde{\xi}_i\| + {\sf w}_i \sum_{j=1}^{i-1} {\sf y}_{ji} \|\tilde{\xi}_j\| - {\sf w}_i\sum_{j=1}^{i_0}{\sf z}_{ji}\|{\tilde{\xi}}_j\| \bigg)\\
    &  = \sum_{i=1}^{i_0} \bigg(\sum_{j=1}^{i_0} ({\sf y}_{ij} - {\sf z}_{ij}){\sf w}_j\bigg) \|  \tilde{\xi}_i\| = \big(({\sf Y} - {\sf Z}){\sf w} \big)^T \begin{bmatrix}
        \|\tilde{\xi}_1\|\\
        \vdots\\
        \|\tilde{\xi}_{i_0}\|
    \end{bmatrix}
\end{align*}
where ${\sf Y}\coloneqq [{\sf y}_{ij}] \in \mathbb{R}^{i_0 \times i_0}$ and ${\sf Z}\coloneqq [{\sf z}_{ij}]\in \mathbb{R}^{i_0 \times i_0}$. 
We finally derive the time derivative of $V_{\tilde{\xi}_0}$ as follows:
\begin{align}\label{eqn:upper_bound_of_V_xi}
    \dot{V}_{\tilde{\xi}_0} & \leq - \big(({\sf Y} - {\sf Z}){\sf w} \big)^T \begin{bmatrix}
        \|\tilde{\xi}_1\|\\
        \vdots\\
        \|\tilde{\xi}_{i_0}\|
    \end{bmatrix}\\
    &~~~  + \sum_{i=1}^{i_0}\frac{{\sf w}_i\tilde{\sigma}_i}{\vartheta_i} \sum_{j=1}^{i} {\sf M}_{E_{ij}} \big( {\sf L}_{\kappa_j} \|\eta_0\| + {\sf L}_{\kappa_j} {\sf M}_{\xi_0^\star} + {\sf M}_{\kappa_j^\star} \big).\notag
\end{align} 

The following assumption provides a sufficient condition for existence of ${\sf w}\in(0,\infty)^{i_0}$ satisfying $({\sf Y}-{\sf Z}){\sf w}\in(0,\infty)^{i_0}$.
\begin{assumption}\label{asm:M_matrix_condition_with_spectral_radius}
    $\mathrm{sr}({\sf Y}^{-1}{\sf Z}) < 1$.
    $\hfill\square$
\end{assumption}

Indeed, all the off-diagonal entries of ${\sf Y}-{\sf Z}$ is non-positive, and thus ${\sf Y}-{\sf Z}$ is a Z-matrix by definition.
Moreover, since ${\sf Y}^{-1}$ and ${\sf Z}$ are non-negative, ${\sf Y}-{\sf Z}$ is an M-matrix if and only if Assumption~\ref{asm:M_matrix_condition_with_spectral_radius} holds \cite{Plemmons1977}, under which one can always find ${\sf w}\in (0,\infty)^{i_0}$ satisfying $({\sf Y}-{\sf Z}){\sf w}\in(0,\infty)^{i_0}$. 

\begin{remark}
It should be also pointed out that the matrices $\sf Y$ and $\sf Z$ depend on the bounds ${\sf M}_{E_{ij}}$ for $\|E_{ij}\|$ and $\|K\|$. 
Assumption~\ref{asm:M_matrix_condition_with_spectral_radius} holds if the size of the residual matrix $\|E_0\|$ (cannot be zero identically but) is sufficiently small on ${ \mathcal{C}}_0$, in which ${\sf Y}^{-1}$ and ${\sf Z}$ are approximated as $I_{i_0}$ and $0$, respectively.
$\hfill\square$
\end{remark}

\subsection{Handling Issues Caused by Nonsmooth Orthogonalization}
\label{subsec:handling_issues_caused_by_nonsmooth_orthogonalization}

The main difficulty  in analyzing the overall system is that the differential equation in \eqref{eqn:prioritized_closed_loop_system_on_omega_0} may not be smooth. 
Conceptually, this problem stems from the fact that the orthogonalization $J_iP_j = L_{ij}Q_j$ is not smooth in general \cite{An2019a}\cite{An2022} and the nonsmoothness of $L_{ij}Q_j$ propagates to the optimal solution $u_i^\pi$ of the problem \eqref{eqn:proper_objective_function_uniqueness} by the representation property of Definition \ref{def:proper_objective_function}.
(We remark that a similar conclusion was reached in the authors' previous work on the prioritized inverse kinematic problem \cite{An2019a}.) 
Since the proof of this nonsmoothness property demands a lot, we do not go in detail for this work. 
Nonetheless, for a rigorous analysis it is still necessary to investigate the existence of a solution for this discontinuous dynamical system \eqref{eqn:prioritized_closed_loop_system_on_omega_0}. 
In the rest of this subsection, as an alternative we define a Krasovskii solution of \eqref{eqn:prioritized_closed_loop_system_on_omega_0} with the differential inclusion approach. 
Before going further, it is noted that the domain of interest ${\cal C}_0$ is defined as a set of $(\eta_0,\xi_0)$, not $(\eta_0, \tilde{\xi}_0)$. 
For this reason, it would be natural to define a solution of  the error dynamics \eqref{eqn:prioritized_closed_loop_system_on_omega_0} from that of 
\begin{align}
    \dot{\eta}_0 &= f_0^\circ + G_0^\circ (-K_0(\xi_0-\xi_{0}^\star) + \kappa_{0}^\star) + G_0 u_{i_0+1:k+1}^\pi \notag \\
    \dot{\xi}_i &= A_i\xi_i + B_i(-K_i(\xi_i-\xi_{i}^\star) + \kappa_{i}^\star) \notag \\
    &\quad + B_i\sum_{j=1}^iE_{ij}(K_j(\xi_j-\xi_{j}^\star) - \kappa_{j}^\star + \kappa_j). \label{eqn:prioritized_semilinear_form_with_tracking_control_input}
\end{align}
Throughout this paper, we will say that for given $(\eta_0(t), \xi_0(t))$ that satisfies \eqref{eqn:prioritized_semilinear_form_with_tracking_control_input}, the associated $(\eta_0(t), \xi_0(t) - \xi_0^\star(t))$ is a solution of the error dynamics \eqref{eqn:prioritized_closed_loop_system_on_omega_0}. 

\begin{remark}
    For brevity, we do not explicitly define the space of solutions here and do not distinguish the solutions unless otherwise noted.
    In addition, the $\xi_i$-dynamics for $i_0 + 1\leq i \leq k$ are ignored in defining a solution of \eqref{eqn:prioritized_closed_loop_system_on_omega_0}, as the differential equations \eqref{eqn:prioritized_semilinear_form_with_tracking_control_input} are overdetermined when $i_0<k$. 
    $\hfill\square$
\end{remark}

The following assumption is made for the free variable $u_f$.

\begin{assumption}\label{asm:free_variable_in_compact_set}
    There exists convex compact $\mathcal{U}_f\subset\mathbb{R}^m$ satisfying $u_f(t,x)\in\mathcal{U}_f$ for all $(t,x)\in[0,\infty)\times\Phi_0^{-1}(\mathcal{C}_0)$.
    $\hfill\square$
\end{assumption}

It should be clarified here that, only $u_{i_0+1:k+1}^\pi$ in the $\eta_0$-dynamics is the source of the nonsmoothness. 
Indeed, $\Omega_0$ is open and $L_{ij}Q_j$  ($1\le i,j\le i_0$) is smooth on $\Omega_0$ \cite{An2019a}.
It means that, among two components consisting of $u^\pi$ in \eqref{eqn:input_decomposition_2}, $u_{1:i_0}^\pi$ is smooth on $\Omega_0$ and discontinuity takes place in only $u_{i_0+1:k+1}^\pi$.
With this kept in mind, we introduce a differential inclusion
\begin{subequations}\label{eqn:differential_inclusion_of_prioritized_closed_system_on_omega_0}
\begin{align}
    \dot{\eta}_{0} &\in \mathcal{K} \big[f_{0}^\circ + G_0^\circ(-K_0(\xi_0 - \xi_{0}^\star)+\kappa_{0}^\star) \label{eqn:differential_inclusion_of_prioritized_closed_system_on_omega_0_eta} \\
    & \quad\quad\quad\quad\quad\quad\quad\quad  + G_0 u_{i_0+1:k}^\pi + G_0N u_f\big] \notag\\
    \dot{\xi}_0 &= A_0 \xi_0 + B_0(-K_0(\xi_0-\xi_0^\star) + \kappa_0^\star) \label{eqn:differential_inclusion_of_prioritized_closed_system_on_omega_0_xi} \\
    &\quad + B_0 E_0 (K_0(\xi_0-\xi_0^\star) - \kappa_0^\star + \kappa_0). \notag
\end{align}
\end{subequations}
which is the overall dynamics \eqref{eqn:prioritized_semilinear_form_with_tracking_control_input} on ${\cal D}_0$ with the $\eta_0$-dynamics rewritten in a differential inclusion form, where
\begin{equation}
    \label{eqn:Krasovskii_regularization}
    \mathcal{K}[\star](t,x) = \bigcap_{\delta>0}\overline{\mathrm{co}}(\star(t,x + \delta\mathrm{cl}(\mathcal{B}_n)))
\end{equation}
is the Krasovskii regularization of a function $\star(t,x)$ \cite{Smirnov2002}\cite{Cortes2008}.
In the remainder of this work, a trajectory $(\eta_0(t),\tilde{\xi}_0(t))$ is said to be  a Krasovskii solution of \eqref{eqn:prioritized_closed_loop_system_on_omega_0}, if $z_0(t)=(\eta_0(t),\xi_0(t))$ is the solution of the differential inclusion \eqref{eqn:differential_inclusion_of_prioritized_closed_system_on_omega_0}.%
\footnote{
    A solution of a differential inclusion $\dot{x}\in{\cal F}(t,x)$ is a function $x(t)$ defined on an interval $l$ such that $x(t)$ is absolutely continuous on $l$ and satisfies $\dot{x}(t)\in{\cal F}(t,x(t))$ almost everywhere.
}

From now on, we will observe that a local solution must exist if the state $z_0(t)$ is initiated in the interior of $\mathcal{C}_0$. 
\begin{lemma}\label{lem:solution_existence_of_differential_inclusion}
    Suppose that all assumptions hold.
    Then for each $z_0^\circ \in {\rm int}(\mathcal{C}_0)$, there exists a Krasovskii solution $z_0:[0,T)\rightarrow {\rm int}(\mathcal{C}_0)$ of \eqref{eqn:prioritized_semilinear_form_with_tracking_control_input} that is initiated at $z_0(0)=z_0^\circ$ and is defined over $[0,T)$ for some constant $T \in (0,\infty]$. 
    Moreover, $T$ satisfies either $T=\infty$ or $z_0(T)\in\mathrm{bd}({\mathcal{C}}_0)$.
    $\hfill\square$
\end{lemma}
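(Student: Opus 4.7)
The plan is to establish existence of a local Krasovskii solution via a classical existence theorem for upper semicontinuous differential inclusions, followed by a standard maximal-extension argument that confines the solution to $\mathrm{int}(\mathcal{C}_0)$ until it (possibly) reaches $\mathrm{bd}(\mathcal{C}_0)$. To this end, I would first observe that in \eqref{eqn:differential_inclusion_of_prioritized_closed_system_on_omega_0} the $\xi_0$-subsystem \eqref{eqn:differential_inclusion_of_prioritized_closed_system_on_omega_0_xi} is a genuine ODE whose right-hand side is smooth on $\mathcal{D}_0$, since $E_{ij}$ and $\kappa_j$ with $1\le i,j\le i_0$ are smooth on $\Omega_0\supset\mathcal{D}_0$ (the orthogonalization $L_{ij}Q_j$ is smooth on $\Omega_0$ as noted in the paragraph above the lemma). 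The only genuine discontinuity appears in the $\eta_0$-subsystem through the term $G_0 u_{i_0+1:k}^\pi$, which involves $L_{ij}Q_j$ for $j>i_0$ and thus may fail to be continuous at singular points of $J$.

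Next, I would verify that the Krasovskii regularization in \eqref{eqn:differential_inclusion_of_prioritized_closed_system_on_omega_0_eta} defines a set-valued map that is upper semicontinuous with nonempty, compact, convex values on $\Phi_0^{-1}(\mathcal{C}_0)$. Convexity and closedness follow directly from the definition \eqref{eqn:Krasovskii_regularization}. Nonemptiness and compactness reduce to local boundedness of the underlying vector field on $\Phi_0^{-1}(\mathcal{C}_0)$: boundedness of $\|\Gamma(x)\|$ comes from Assumption \ref{asm:Gamma}(c); orthogonality of $Q_e$ gives $\|Q_j(x)\|\le 1$; $L$, $\kappa$, $f_0$, $G_0$, $\kappa_0^\star$ are bounded on the compact set $\Phi_0^{-1}(\mathcal{C}_0)\times[0,\infty)$; and Assumption \ref{asm:free_variable_in_compact_set} bounds $u_f$. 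Upper semicontinuity then follows from the standard property of the Krasovskii operator $\mathcal{K}[\cdot]$ applied to a locally bounded map.

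With these regularity properties in hand, I would invoke the existence theorem for differential inclusions \cite[Theorem 2.3, \S4.2]{Smirnov2002} (which applies to upper semicontinuous set-valued maps with nonempty, compact, convex values) to obtain, for each $z_0^\circ\in\mathrm{int}(\mathcal{C}_0)$, an absolutely continuous solution $z_0:[0,\tau)\to\mathrm{int}(\mathcal{C}_0)$ of \eqref{eqn:differential_inclusion_of_prioritized_closed_system_on_omega_0} with $z_0(0)=z_0^\circ$ for some $\tau>0$. The smooth $\xi_0$-ODE is compatible with this framework as its right-hand side defines a singleton-valued map. A standard Zorn-type maximality argument then yields a maximal solution defined on some $[0,T)$ with $T\in(0,\infty]$.

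Finally, for the dichotomy on $T$, I would argue that if $T<\infty$, then local boundedness of the right-hand side forces $z_0(t)$ to be uniformly Lipschitz on $[0,T)$, so $z_0(\cdot)$ extends continuously to $T$ with limit $z_0(T)\in\mathcal{C}_0$. If $z_0(T)\in\mathrm{int}(\mathcal{C}_0)$, reapplying the local existence result at $z_0(T)$ would contradict the maximality of $T$; hence $z_0(T)\in\mathrm{bd}(\mathcal{C}_0)$. The main obstacle I anticipate is the rigorous verification of upper semicontinuity and local boundedness of $u_{i_0+1:k}^\pi$ near points where $\rho_j(x)$ drops for some $j>i_0$, since the orthogonalization $(L_{ij}Q_j)(x)$ is only guaranteed continuous on $\Omega_{1:j}$; this is precisely what the Krasovskii regularization is designed to absorb, but confirming the resulting set-valued map remains compact-valued (and not, say, unbounded near these singular points) relies crucially on the boundedness clause in Assumption \ref{asm:Gamma}(c) and on the orthogonality of $Q_e$.
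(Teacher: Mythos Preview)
Your approach is correct, but the paper takes a shorter route. Rather than establishing local existence and then running a Zorn-type maximal-extension argument, the paper first invokes an extension lemma \cite[Lemma 2.10]{Smirnov2002} to extend the bounded set-valued map $\mathcal{F}$ from $[0,\infty)\times\mathcal{C}_0$ to a globally bounded upper semicontinuous map $\widetilde{\mathcal{F}}$ on all of $\mathbb{R}\times\mathbb{R}^n$ with closed convex values. Global existence for $\widetilde{\mathcal{F}}$ then follows directly from \cite[Chapter 4, Corollary 1.12, Exercise 1.14]{Clarke1998}, yielding a solution $z_0:[0,\infty)\to\mathbb{R}^n$; since $\widetilde{\mathcal{F}}=\mathcal{F}$ on $\mathcal{C}_0$, this is a Krasovskii solution of the original system for as long as it remains in $\mathcal{C}_0$, and $T$ is simply the first exit time from $\mathrm{int}(\mathcal{C}_0)$. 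The dichotomy on $T$ is then immediate from continuity of the global trajectory, with no separate boundary-hitting argument needed.

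What the paper's approach buys is brevity: the extension trick converts a local problem into a global one in a single stroke, eliminating both the maximality step and the finite-time blow-up analysis. What your approach buys is self-containment and transparency: you work entirely within $\mathrm{int}(\mathcal{C}_0)$, make the boundary behavior explicit, and do not need the extension lemma. Your verification of the regularity properties (USC, compact convex values) is essentially the same content the paper compresses into the single phrase ``$\mathcal{F}(t,z_0)$ is bounded on $[0,\infty)\times\mathcal{C}_0$,'' and your identification of Assumption~\ref{asm:Gamma}(c) and the orthogonality of $Q_e$ as the ingredients guaranteeing boundedness near singular points is exactly right.
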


\begin{proof}
    See Appendix \ref{lem:solution_existence_of_differential_inclusion}.
\end{proof}

The following lemma introduces another form of differential inclusion for the internal dynamics, which makes the analysis to come easier. 

\begin{lemma}
    \label{lem:superset_of_differential_inclusion}
    Suppose that all assumptions hold. 
    Then there exists an upper semicontinuous%
\footnote{
    A set-valued map ${\cal F}:{\cal X}\to2^{\cal Y}$ is called upper semicontinuous at $x\in{\cal X}$ if for every open set ${\cal B}\subset{\cal Y}$ containing ${\cal F}(x)$ there exists a neighborhood ${\cal A}\subset{\cal X}$ of $x$ such that ${\cal F}({\cal A})\coloneqq \bigcup_{z\in{\cal A}}{\cal F}(z)\subset{\cal B}$.
    A set-valued map ${\cal F}:{\cal X}\to2^{\cal Y}$ is said to be upper semicontinuous if ${\cal F}$ is upper semincontinuous for each $x\in{\cal X}$.
}
    set-valued map $\mathcal{U}: \Phi_0^{-1}(\mathcal{C}_0) \rightarrow 2^{\mathbb{R}^m}$ such that the following holds:
    \begin{itemize}
    \item[(a)] for each $x \in \Phi_0^{-1}(\mathcal{C}_0)$, $\mathcal{U}(x)$ is a compact and convex neighborhood of $0$;
    \item[(b)] there exists a positive constant ${\sf L}_{\mathcal{U}}$ such that 
    \begin{equation}
        \mathcal{U}[\eta_0,\xi_0]\subset \mathcal{U}[\eta_0,0] + {\sf L}_{\mathcal{U}}\|\xi_0\|\mathrm{cl}(\mathcal{B}_m)
    \end{equation}
    for all $(\eta_0,\xi_0)\in{\mathcal{C}}_0$;
    \item[(c)] the addition of the set-valued map ${\mathcal{F}}_0^\circ \coloneqq f_0^\circ + G_0 N_0 \mathcal{U}$ and $G_0^\circ(-K_0\tilde{\xi}_0+\kappa_0^\star)$ is the superset of the right-hand side of \eqref{eqn:differential_inclusion_of_prioritized_closed_system_on_omega_0_eta} such that every Krasovskii solution of \eqref{eqn:prioritized_closed_loop_system_on_omega_0} is a solution of the differential inclusion
    \begin{subequations}\label{eqn:differential_inclusion_with_supper_set}
    \begin{align}
        \dot{\eta}_0 &\in \mathcal{F}_0^\circ + G_0^\circ(-K_0 \tilde{\xi}_0 + \kappa_0^\star) \label{eqn:differential_inclusion_with_supper_set_1} \\
        \dot{{\tilde{\xi}}}_i &= (A_i - B_iK_i){\tilde{\xi}}_i + B_i\sum_{j=1}^iE_{ij}(K_j{\tilde{\xi}}_j - \kappa_{j}^\star + \kappa_j) \label{eqn:differential_inclusion_with_supper_set_2}
    \end{align}
    \end{subequations}
    where \eqref{eqn:differential_inclusion_with_supper_set_1} and \eqref{eqn:differential_inclusion_with_supper_set_2} for $1\le i\le i_0$ hold on ${\mathcal{C}}_0$.
    $\hfill\square$
    \end{itemize} 
\end{lemma}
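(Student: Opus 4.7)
The plan is to construct $\mathcal{U}$ as a \emph{constant} (state-independent) compact convex set that absorbs every admissible value of the discontinuous control component, exploiting the geometric fact that this component lies in $\mathcal{R}(N_0(x))$ together with the smoothness of $G_0 N_0$ on $\Omega_0$. First, I would isolate the source of nonsmoothness by writing the right-hand side of \eqref{eqn:differential_inclusion_of_prioritized_closed_system_on_omega_0_eta} as $f_0^\circ + G_0^\circ(-K_0\tilde{\xi}_0 + \kappa_0^\star) + G_0\psi$, where $\psi(x,v,u_f) := u_{i_0+1:k}^\pi(x,v,u_f) + N(x)u_f$. The first two terms are smooth on $\Omega_0 \supset \mathcal{D}_0$ by Assumption~\ref{asm:Gamma}(b) and the smoothness of $Q_0, L_{\mathrm{D}0}$ on $\Omega_0$, so the Krasovskii operator leaves them unchanged. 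Only $\psi$ is discontinuous; but by construction $u_{i_0+1:k}^\pi \in \mathcal{R}(P_{i_0+1:k}(x))$ and $Nu_f \in \mathcal{R}(N(x))$ are each orthogonal to $\mathcal{R}(P_{1:i_0}(x))$, hence $\psi \in \mathcal{R}(N_0(x))$ for every admissible argument.

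Second, I would derive a uniform bound $\|\psi\|\le\mathsf{M}_\psi$ on $\Phi_0^{-1}(\mathcal{C}_0)$ by combining Assumption~\ref{asm:Gamma}(c) (boundedness of $\Gamma$), Assumption~\ref{asm:free_variable_in_compact_set} (compactness of $\mathcal{U}_f$), the reference bounds \eqref{eqn:upper_bound_of_reference_signals}, continuous dependence of $\xi_{i_0+1:k}$ on $(\eta_0,\xi_0)$ on $\mathcal{D}_0$ (so that the collection of admissible $v$ values forms a compact set $\mathcal{V}$), and the estimates $\|L_{ii}\|\le\|J_i\|$ and $\|Q_i\|\le 1$ inherited from the LQ factorization in Lemma~\ref{lem:orthogonal_projectors}. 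I then set $\mathcal{U}(x)\equiv\mathcal{U}_0$ with $\mathcal{U}_0 := \overline{\mathrm{co}}\{\psi(x,\bar v,\bar u_f) : x\in\Phi_0^{-1}(\mathcal{C}_0),\,\bar v\in\mathcal{V},\,\bar u_f\in\mathcal{U}_f\} + \varepsilon\,\mathrm{cl}(\mathcal{B}_m)$ for any fixed $\varepsilon>0$. The added $\varepsilon$-ball makes $\mathcal{U}_0$ a compact convex neighborhood of $0$, and constancy in $x$ immediately gives upper semicontinuity in (a) and the Lipschitz bound in (b): since $\mathcal{U}(\eta_0,\xi_0)=\mathcal{U}(\eta_0,0)$, any $\mathsf{L}_\mathcal{U}>0$ works.

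The main obstacle is (c): showing $\mathcal{K}[G_0\psi](t,x) \subset G_0(x) N_0(x) \mathcal{U}_0 = (G_0 N_0 \mathcal{U})(x)$ despite the discontinuity of $\psi$. The key identity is $G_0(x')\psi(x',\cdot,\cdot) = G_0(x') N_0(x')\psi(x',\cdot,\cdot)$ for every $x'$; combined with smoothness of $G_0 N_0$ on $\Omega_0$, for any $\epsilon>0$ there exists $\delta_0>0$ such that $\|G_0(x')N_0(x') - G_0(x)N_0(x)\|<\epsilon$ whenever $x' \in (x+\delta_0\,\mathrm{cl}(\mathcal{B}_n))\cap\Phi_0^{-1}(\mathcal{C}_0)$. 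Since $\psi(x',\bar v,\bar u_f)\in\mathcal{U}_0$ and $\|\psi\|\le\mathsf{M}_\psi$, this yields $G_0(x')\psi(x',\bar v,\bar u_f) \in G_0(x)N_0(x)\mathcal{U}_0 + \epsilon\mathsf{M}_\psi\,\mathrm{cl}(\mathcal{B}_m)$; taking closed convex hull, intersecting over $\delta<\delta_0$, and letting $\epsilon\downarrow 0$ gives the desired inclusion, since $G_0(x)N_0(x)\mathcal{U}_0$ is closed and convex. The $\tilde{\xi}_i$-equation \eqref{eqn:differential_inclusion_with_supper_set_2} is then obtained by substituting $\tilde{\xi}_i = \xi_i - \xi_i^\star$ and $v_i = -K_i\tilde{\xi}_i + \kappa_i^\star$ into \eqref{eqn:prioritized_closed_loop_system_on_omega_0_xi}, a direct algebraic step.
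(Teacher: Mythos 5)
Your proposal is correct and follows the same skeleton as the paper's proof — isolate the discontinuous component $u^\pi_{i_0+1:k+1}$, observe that it lives in $\mathcal{R}(N_0(x))$ so that $G_0 N_0\mathcal{U}$ can absorb it, bound it uniformly on the compact set, and show the Krasovskii regularization is contained in the continuous part plus $G_0 N_0\mathcal{U}$ — but it differs in two substantive ways. First, the paper builds a \emph{state-dependent} $\mathcal{U}(x)$ as an upper semicontinuous envelope of the pointwise Krasovskii regularizations $\mathcal{M}_{ij}(x)=\mathcal{K}[Q_i^T L_{ii}^T\Gamma_{ij}](x)$ acting on the bounded commands, whereas you take a single constant set $\mathcal{U}_0$ (global convex hull plus an $\varepsilon$-ball). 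Your choice makes items (a) and (b) trivial and is a legitimate proof of the lemma as stated, but it is exactly the crude option the paper's own proof warns against: Assumption~\ref{asm:zero_dynamics_lyapunov} requires the zero-dynamics inclusion $\dot{\eta}_0\in\mathcal{F}_0^\circ[\eta_0,0]=f_0^\circ+G_0N_0\mathcal{U}$ to admit a Lyapunov function, and a large constant $\mathcal{U}_0$ (whose size is dictated by the worst-case lower-priority commands over all of $\mathcal{C}_0$) may destroy that stability margin, so the tighter pointwise construction is what makes the downstream theory usable. Second, for the containment in (c) the paper invokes the Paden--Sastry limit-of-sequences calculus for $\mathcal{K}[\cdot]$ to pull the continuous factors out of the convex hull, while you use a direct uniform-continuity estimate on $G_0N_0$ combined with the identity $G_0\psi=G_0N_0\psi$ and the closedness of $G_0(x)N_0(x)\mathcal{U}_0$; both are valid, and your route is arguably more elementary. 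The remaining steps (uniform boundedness of $\psi$ via Assumptions~\ref{asm:Gamma}(c) and \ref{asm:free_variable_in_compact_set}, the trivial verification of \eqref{eqn:differential_inclusion_with_supper_set_2}) match the paper.
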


\begin{proof}
    See Appendix \ref{app:proof_of_lemma_about_superset_of_differential_inclusion}.
\end{proof}

We restrict ourselves to a certain class of minimum phase systems in a sense of the differential inclusion, as stated below.

\begin{assumption}\label{asm:zero_dynamics_lyapunov}
    There exist a positive constant $\bar{\sf r}_{\eta_0}$ and a Lipschitz function $V_{\eta_0}: \bar{\sf r}_{\eta_0}\mathcal{B}_{n-r_0}\to\mathbb{R}$ satisfying
    \begin{subequations}     \label{eqn:zero_dynamics_lyapunov}
    \begin{align}
        \bar{\alpha}_1(\|\eta_0\|) \le V_{\eta_0}(\eta_0) & \le \bar{\alpha}_2(\|\eta_0\|) \\
        D^+V_{\eta_0}(\eta_0,\tau_0) & \le -\bar{\alpha}_3(\|\eta_0\|)
    \end{align}
    \end{subequations}
    for all $\|\eta_0\|<\bar{\sf r}_{\eta_0}$ and $\tau_0\in \mathcal{F}_0^\circ [\eta_0,0]$ where $\bar{\alpha}_i$, $1\le i\le 3$,  are class $\mathcal{K}$ functions defined on $[0,\bar{\sf r}_{\eta_0}]$ and $D^+V_{\eta_0}(\eta_0,\tau_0)$ is the upper Dini derivative defined by 
    \begin{equation*}
        D^+V_{\eta_0}(\eta_0,\tau_0) = \limsup_{h\downarrow 0,\,\tau\to \tau_0}\frac{V_{\eta_0}(\eta_0 + h \tau) - V_{\eta_0}(\eta_0)}{h}.
    \end{equation*}
    $\hfill\square$
\end{assumption}
We remark that \eqref{eqn:zero_dynamics_lyapunov} implies that the equilibrium point $0\in\mathbb{R}^{n-r_0}$ of the differential inclusion 
\begin{equation}
    \dot{\eta}_0\in \mathcal{F}_0^\circ[\eta_0,0]
    \label{eqn:zero_dynamics_tracking}
\end{equation}
is asymptotically stable \cite[Theorem 8.2]{Smirnov2002}.
As similar to the case for differential equations, we call \eqref{eqn:zero_dynamics_tracking} the \textit{zero dynamics} of the differential inclusion \eqref{eqn:differential_inclusion_with_supper_set_1}--\eqref{eqn:differential_inclusion_with_supper_set_2}.

\begin{remark}
    Summarizing so far, we dealt with the discontinuity in \eqref{eqn:prioritized_closed_loop_system_on_omega_0} by employing its Krasovskii solution derived from the differential inclusion. 
    A consequence of this approach is that the Krasovskii solution of the differential inclusion does not satisfy \eqref{eqn:differential_inclusion_with_supper_set_2} almost everywhere for lower-priority tasks $\mathfrak{T}_{i_0+1},\dots,\mathfrak{T}_{k}$.
    Nonetheless, we will observe in the upcoming subsection that, at least up to $i_0$ higher-priority tasks can be achieved in a sense that the error dynamics converge near $0$ as time goes on.
    Therefore, we understand that the differential inclusion is a proper tool to handle discontinuity in {\bf Prioritized Control Problem}.

    On the other hand, the issue related to lower-priority tasks could be overcome with a Carath\`{e}odory solution of \eqref{eqn:prioritized_closed_loop_system_on_omega_0}, since it satisfies \eqref{eqn:differential_inclusion_with_supper_set_2} almost everywhere for all tasks.
    Nonetheless, it is not simple to derive the existence theorem of the Carath\`{e}odory solution, especially for a nonsmooth prioritized system because discontinuity occurs on both time and state (see \cite{Filippov1988} for details).
    Thus, we do not construct an existence theorem of a Carath\`{e}odory solution in this paper (our previous work in a simple case can be found in \cite{An2019a}).
    Instead, we will 
    focus on the result of the assumption that the Krasovskii solution satisfies \eqref{eqn:differential_inclusion_with_supper_set_2} almost everywhere for some of the lower-priority tasks.
    $\hfill\square$
\end{remark}

\subsection{Main Result}\label{subsec:main_result}

We present two theoretical results on the prioritized control framework.
The following theorem states that under certain conditions the overall system is ultimately bounded.

\begin{theorem}
    \label{thm:prioritized_output_tracking}
    Consider the prioritized closed-loop system \eqref{eqn:prioritized_closed_loop_system_on_omega_0} and suppose that all assumptions hold. 
    Then, for every sufficiently small $\xi_0^\star$, $\kappa_0^\star$, and $E_0$, there exist constants $0<\varepsilon<\delta$ ($\varepsilon$ can be chosen arbitrarily small by reducing upper bounds of $\xi_0^\star$, $\kappa_0^\star$, and $E_0$), class $\mathcal{K}$ functions $\alpha_1\le\alpha_2$, and a class $\mathcal{KL}$ function $\beta$ such that for every $\tilde{z}_0^\circ = (\eta_0^\circ, \tilde{\xi}_0^\circ)\in\mathbb{R}^n$ satisfying $\|\tilde{z}_0^\circ\|\le\alpha_2^{-1}(\alpha_1(\delta))$
    \begin{itemize}
        \item[(a)] a Krasovskii solution $z_0 = (\eta_0,\xi_0):[0,\infty)\to{\mathcal{C}}_0$ of \eqref{eqn:prioritized_closed_loop_system_on_omega_0} initiated at $z_0(0) = \tilde{z}_0^\circ + (0,\xi_{0}^\star(0))$ exists; and
        \item[(b)] there exists $T_0>0$ (dependent on $\tilde{z}_0^\circ$ and $\varepsilon$) such that every Krasovskii solution of \eqref{eqn:prioritized_closed_loop_system_on_omega_0} satisfies
        \begin{equation}
            \|\tilde{z}_0(t)\| \le \begin{dcases*} \beta(\|\tilde{z}_0^\circ\|,t), & $t\in[0,T_0]$ \\ \alpha_1^{-1}(\alpha_2(\varepsilon)), & $t\in[T_0,\infty)$. \end{dcases*}
            \label{eqn:ultimate_bound_of_eta_and_zeta_on_omega_0}
        \end{equation}
        $\hfill\square$
    \end{itemize}
\end{theorem}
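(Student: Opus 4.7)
The plan is to combine the differential-inclusion reformulation from Lemma~\ref{lem:superset_of_differential_inclusion} with a composite Lyapunov argument, and then invoke the local existence result of Lemma~\ref{lem:solution_existence_of_differential_inclusion} together with forward invariance of a sublevel set to extend the local solution to a global one and conclude ultimate boundedness.

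First, I would build the candidate Lyapunov function
\[
V(\eta_0,\tilde{\xi}_0) = V_{\eta_0}(\eta_0) + \mu\, V_{\tilde{\xi}_0}(\tilde{\xi}_0),
\]
with $V_{\eta_0}$ from Assumption~\ref{asm:zero_dynamics_lyapunov}, $V_{\tilde{\xi}_0}$ from \eqref{eqn:Lyapunov_xi}, a scalar $\mu>0$ to be fixed later, and a weight vector ${\sf w}\in(0,\infty)^{i_0}$ chosen so that $({\sf Y}-{\sf Z}){\sf w}\in(0,\infty)^{i_0}$, which Assumption~\ref{asm:M_matrix_condition_with_spectral_radius} permits since ${\sf Y}-{\sf Z}$ is then an M-matrix. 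Sandwiching $V_{\tilde{\xi}_0}$ between multiples of $\|\tilde{\xi}_0\|$ via the singular values of $X_i$, combined with Assumption~\ref{asm:zero_dynamics_lyapunov}, yields a pair of class-$\mathcal{K}$ bounds $\alpha_1(\|\tilde z_0\|)\leq V(\tilde z_0)\leq\alpha_2(\|\tilde z_0\|)$ on a suitable ball.

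Second, I would estimate $D^+V$ along trajectories of the differential inclusion \eqref{eqn:differential_inclusion_with_supper_set}. The $V_{\tilde{\xi}_0}$ part is already given by \eqref{eqn:upper_bound_of_V_xi}: a strictly negative term $-{\sf a}\|\tilde{\xi}_0\|$ with ${\sf a}=\min_i\bigl(({\sf Y}-{\sf Z}){\sf w}\bigr)_i>0$, plus a perturbation linear in $\|\eta_0\|$ scaled by $\sum{\sf M}_{E_{ij}}{\sf L}_{\kappa_j}$ and a constant absorbing $\|\xi_0^\star\|,\|\kappa_0^\star\|$. For $V_{\eta_0}$ I would use its Lipschitz constant together with Lemma~\ref{lem:superset_of_differential_inclusion}(b) to split
\[
\mathcal{F}_0^\circ[\eta_0,\tilde{\xi}_0+\xi_0^\star]\subset \mathcal{F}_0^\circ[\eta_0,0]+{\sf L}_{\mathcal{F}_0^\circ}\bigl(\|\tilde{\xi}_0\|+{\sf M}_{\xi_0^\star}\bigr)\mathrm{cl}(\mathcal{B}_{n-r_0}),
\]
so Assumption~\ref{asm:zero_dynamics_lyapunov} gives $D^+V_{\eta_0}\leq -\bar\alpha_3(\|\eta_0\|)+{\sf b}\bigl(\|\tilde{\xi}_0\|+{\sf M}_{\xi_0^\star}+{\sf M}_{\kappa_0^\star}\bigr)$ for a constant ${\sf b}$ that absorbs the Lipschitz constant of $V_{\eta_0}$, $\|G_0^\circ K_0\|$, and ${\sf L}_{\mathcal{F}_0^\circ}$. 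Choosing $\|E_0\|$ small enough that its $\|\eta_0\|$-coefficient in $\dot V_{\tilde\xi_0}$ is dominated on the $\bar{\sf r}_{\eta_0}$-ball by $\tfrac12\bar\alpha_3(\|\eta_0\|)$, and then $\mu$ large enough that $\mu{\sf a}>2{\sf b}$, I obtain $D^+V\leq -\alpha_3(\|\tilde z_0\|)+\delta_0$ with $\alpha_3$ class $\mathcal{K}$ and $\delta_0\to 0$ as the smallness bounds on $\xi_0^\star,\kappa_0^\star,E_0$ shrink.

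A standard ultimate-boundedness argument (cf.~\cite[Theorem 4.18]{Khalil2015}) adapted to Dini derivatives along differential inclusions then yields \eqref{eqn:ultimate_bound_of_eta_and_zeta_on_omega_0} with a class-$\mathcal{KL}$ function $\beta$ and $\varepsilon=\alpha_1^{-1}(\alpha_2(\alpha_3^{-1}(\delta_0/\theta)))$ for some $\theta\in(0,1)$, which is shrinkable by tightening the smallness bounds. For the existence claim (a), forward invariance of $\{V\leq\alpha_1(\delta)\}$, whose preimage under $\tilde z_0\mapsto(\eta_0,\tilde{\xi}_0+\xi_0^\star)$ can be arranged to lie in $\mathrm{int}(\mathcal{C}_0)$ for $\delta$ sufficiently small, together with $\|\tilde z_0^\circ\|\leq \alpha_2^{-1}(\alpha_1(\delta))$ places the initial condition inside the invariant sublevel set and rules out the escape alternative $z_0(T)\in\mathrm{bd}(\mathcal{C}_0)$ of Lemma~\ref{lem:solution_existence_of_differential_inclusion}, delivering $T=\infty$. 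The principal obstacle is synchronizing the quantitative choices of $\mu$, $\bar{\sf r}_{\eta_0}$, and the smallness of $\|E_0\|,\|\xi_0^\star\|,\|\kappa_0^\star\|$ without circular dependencies; because $V_{\eta_0}$ is only Lipschitz, no Taylor-type expansion is available, and the Dini-derivative estimate for $V_{\tilde{\xi}_i}$ at the origin must invoke the nonnegativity observation already indicated in the paper's footnoted computation.
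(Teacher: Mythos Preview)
Your proposal follows the paper's proof almost step for step: the same composite Lyapunov function $V=V_{\eta_0}+V_{\tilde\xi_0}$ (your $\mu$ is the paper's scaling ${\sf w}_0$ of the weight vector ${\sf w}={\sf w}_0{\sf w}^\star$), the same Dini-derivative estimate for $V_{\eta_0}$ via Lemma~\ref{lem:superset_of_differential_inclusion}(b), the same weighted-sum bound \eqref{eqn:upper_bound_of_V_xi} for $V_{\tilde\xi_0}$, and the same invocation of \cite[Theorem 4.18]{Khalil2015} plus forward invariance to rule out the escape alternative in Lemma~\ref{lem:solution_existence_of_differential_inclusion}.

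The one point where your write-up would actually fail is the order of parameter choices, which you correctly flag as the principal obstacle but resolve in the wrong direction. If you shrink $\|E_0\|$ first so that the $\|\eta_0\|$-coefficient $c_E$ in $\dot V_{\tilde\xi_0}$ satisfies $c_E\|\eta_0\|\le\tfrac12\bar\alpha_3(\|\eta_0\|)$, and \emph{then} take $\mu$ large, the relevant term in $D^+V$ is $\mu c_E\|\eta_0\|$, which is bounded only by $\tfrac{\mu}{2}\bar\alpha_3(\|\eta_0\|)$ and overwhelms the single $-\bar\alpha_3$ coming from $V_{\eta_0}$. The paper avoids this by fixing ${\sf w}_0$ (your $\mu$) large \emph{first}---the requirement ${\sf w}_0{\sf v}_0^\star>{\sf m}_1$ is independent of $\|E_0\|$---and only afterward taking ${\sf M}_{E_{ij}}$ small relative to the now-fixed ${\sf w}_0$. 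A second subtlety: pointwise domination $c\|\eta_0\|\le\tfrac12\bar\alpha_3(\|\eta_0\|)$ on a full ball can fail when $\bar\alpha_3$ is superlinear near zero; the paper sidesteps this by demanding the decay $\dot V\le -{\sf c}_1\bar\alpha_3(\|\eta_0\|)-{\sf c}_2\|\tilde\xi_0\|$ only on the annulus $\varepsilon\le\|\tilde z_0\|\le\delta$, absorbing the residual linear term into the constant that fixes $\varepsilon$.
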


\begin{proof}
    We will find a condition of ${\sf M}_{\xi_0^\star}$, ${\sf M}_{\kappa_0^\star}$, and ${\sf M}_{E_{ij}}$ for $1\le i,j\le i_0$ on which the statement of the theorem holds for all $\xi_0^\star$, $\kappa_0^\star$, and $E_0$ satisfying \eqref{eqn:upper_bound_of_reference_signals} and \eqref{eqn:bounds_E_kappa}.
    Let $0<{\sf r}_{\eta_0}\le \bar{\sf r}_{\eta_0}$ and ${\sf M}_{\xi_0^\star} < {\sf r}_{\tilde{\xi}_0} + {\sf M}_{\xi_0^\star} = {\sf r}_{\xi_0}$ be such that
    \begin{equation*}
        \tilde{\mathcal{E}}_0 \coloneqq {\sf r}_{\eta_0}\mathcal{B}_{n-r_0}\times {\sf r}_{\tilde{\xi}_0}\mathcal{B}_{r_0} \subset \mathcal{E}_0 \coloneqq {\sf r}_{\eta_0}\mathcal{B}_{n-r_0}\times {\sf r}_{\xi_0}\mathcal{B}_{r_0} \subset \mathcal{C}_0.
    \end{equation*}
    Note that $\mathrm{cl}(\mathcal{E}_0)$ is a compact subset of ${\mathcal{C}}_0$ and $\tilde{z}_0 = (\eta_0,\tilde{\xi}_0)\in\tilde{\mathcal{E}}_0$ implies $z_0=(\eta_0,\xi_0)\in\mathcal{E}_0$.
    Thus Lemma~\ref{lem:solution_existence_of_differential_inclusion} says that for every $\tilde{z}_0^\circ\in\tilde{\mathcal{E}}_0$ there exists a local Krasovskii solution $z_0:[0,T)\to\mathcal{E}_0$ of \eqref{eqn:prioritized_closed_loop_system_on_omega_0} initiated at $z_0(0)=\tilde{z}_0^\circ+(0,\xi_0^\star(0))$, with $T>0$, satisfying $\tilde{z}_0(t)\in\tilde{\mathcal{E}}_0$ for all $t\in[0,T)$ and either $T=\infty$ or $\tilde{z}_0(T)\in\mathrm{bd}(\tilde{\mathcal{E}}_0)$.
    We propose a Lyapunov function candidate $V:\tilde{\mathcal{E}}_0\to\mathbb{R}$ for the entire system 
    \eqref{eqn:differential_inclusion_with_supper_set} as 
    \begin{equation}\label{eqn:Lyapunov_entire}
        V(\eta_0,\tilde{\xi}_0) = V_{\eta_0}(\eta_0) + 	V_{\tilde{\xi}_0}(\tilde{\xi}_0)
    \end{equation}
    where $V_{\eta_0}$ is defined in Assumption~\ref{asm:zero_dynamics_lyapunov}, and $V_{\tilde{\xi}_0}$ is given in \eqref{eqn:Lyapunov_xi}. 
    Let $\delta_0 = \min\{{\sf r}_{\eta_0},\,{\sf r}_{\tilde{\xi}_0}\}$.
    Note that, since $V$ in \eqref{eqn:Lyapunov_entire} is continuous and positive definite, there exist class $\mathcal{K}$ functions $\alpha_1$ and $\alpha_2$ defined on $[0,\delta_0]$  such that
    \begin{equation*}
        \alpha_1(\|(\eta_0,{\tilde{\xi}}_0)\|) \le V(\eta_0,{\tilde{\xi}}_0) \le \alpha_2(\|(\eta_0,{\tilde{\xi}}_0)\|)
    \end{equation*}
    for all $(\eta_0,{\tilde{\xi}}_0)\in \delta_0 \mathcal{B}_n\subset\tilde{\mathcal{E}}_0$ \cite[Lemma 4.3]{Khalil2015}.

    We now compute the time derivative $\dot{V}_{\eta_0}$ of $V_{\eta_0}$ in \eqref{eqn:Lyapunov_entire}.
    It is noted first that $V_{\eta_0}(\eta_0)$ is Lipschitz on $\delta_0 \mathcal{B}_{n-r_0}$ with a Lipschitz constant ${\sf L}_{V_{\eta_0}}>0$ and $\eta_0(t)$ is absolutely continuous on $[0,T)$.  
    Thus, the mapping $t\mapsto V_{\eta_0}(\eta_0(t))$ is absolutely continuous and  differentiable almost everywhere on $[0,T)$.
    It is then readily seen that $\dot{V}_{\eta_0}(\eta_0(t))\le D^+V_{\eta_0}(\eta_0(t),\dot{\eta}_0(t))$ almost everywhere.%
\footnote{
    Since $\eta_0(t+h) = \eta_0(t) + \int_t^{t+h}\dot{\eta}_0(s)ds$ for all $t$ \cite[Theorem 7.20]{Rudin1987} and $\lim_{h\to0}\frac{1}{h}\int_t^{t+h}\dot{\eta}_0(s)ds = \dot{\eta}_0(t)$ for almost all $t$ \cite[Theorem 7.11]{Rudin1987}, the upper bound of $\dot{V}_{\eta_0}$ can be computed as
    \begin{align*}
        \dot{V}_{\eta_0}(\eta_0(t)) &= \lim_{h\to0}\frac{V_{\eta_0}(\eta_0(t+h)) - V_{\eta_0}(\eta_0(t))}{h}
            = \lim_{h\to0}\frac{1}{h}\bigg[V_{\eta_0}\bigg(\eta_0(t) + h\frac{1}{h}\int_t^{t+h}\dot{\eta}_0(s)ds\bigg) - V_{\eta_0}(\eta_0(t))\bigg] \\
            &\le \limsup_{h\downarrow0,\tau\to\dot{\eta}_0(t)}\frac{V_{\eta_0}(\eta_0(t) + h\tau) - V_{\eta_0}(\eta_0(t))}{h}
            = D^+V_{\eta_0}(\eta_0(t),\dot{\eta}_0(t))
    \end{align*}
    almost everywhere.
}
    On the other hand, by Lemma \ref{lem:superset_of_differential_inclusion}, there exists a constant ${\sf L}_{\mathcal{F}_0^\circ}>0$ satisfying%
\footnote{
    Since ${\cal U}[z_0]$ is upper semicontinuous on a compact set ${\cal C}_0$ and each ${\cal U}[z_0]\subset {\cal B}_m$ is compact, there exists a positive constant ${\sf M}_{\cal U}$ such that ${\cal U}[z_0] \subset {\sf M}_{\cal U}{\cal B}_m$ for all $z_0\in{\cal C}_0$.
    To see this, observe that for each $z_0\in{\cal C}_0$ there exists a compact set ${\cal M}_{z_0}\subset\mathbb{R}^m$ and a neighborhood ${\cal N}_{z_0}\subset\mathbb{R}^n$ of $z_0$ satisfying ${\cal U}[{\cal N}_{z_0}]\subset {\cal M}_{z_0}$.
    Since ${\cal M}_{z_0}$ is compact, for each $z_0\in{\cal C}_0$ there exists a positive constant ${\sf M}_{z_0}$ satisfying ${\cal M}_{z_0}\subset {\sf M}_{z_0}{\cal B}_m$.
    Since $\{{\cal N}_{z_0} : z_0\in{\cal C}_0\}$ is an open cover of the compact set ${\cal C}_0$, there exists a finite subcover $\{{\cal N}_1,\dots,{\cal N}_N\}$ along with the positive constants ${\sf M}_1,\dots,{\sf M}_N$ satisfying ${\cal C}_0\subset\bigcup_{i=1}^N{\cal N}_i$ and ${\cal U}[{\cal N}_i]\subset {\sf M}_i{\cal B}_m$ for all $1\le i\le N$.
    By letting ${\sf M}_{\cal U} = \max\{{\sf M}_1,\dots,{\sf M}_N\}$, we have ${\cal U}[{\cal C}_0]\subset{\cal U}[\bigcup_{i=1}^N{\cal N}_i]\subset{\sf M}_{\cal U}{\cal B}_m$.
    Then, the supperset of ${\cal F}_0^\circ[\eta_0,\xi_0]$ can be found as
    \begin{align*}
        {\cal F}_0^\circ[z_0]
            &\subset f_0^\circ[\eta_0,0] + {\sf L}_{f_0^\circ}\|\xi_0\|{\rm cl}({\cal B}_{n-r_0}) + (G_0N_0)[z_0]\big({\cal U}[\eta_0,0] + L_{\cal U}\|\xi_0\|{\rm cl}({\rm B}_m)\big) \\
            &\subset {\cal F}_0^\circ[\eta_0,0] + \big((G_0N_0)[z_0] - (G_0N_0)[\eta_0,0]\big){\cal U}[\eta_0,0] + ({\sf L}_{f_0^\circ} + {\sf M}_{G_0N_0}{\sf L}_{\cal U})\|\xi_0\|{\rm cl}({\cal B}_{n-r_0}) \\
            &\subset {\cal F}_0^\circ[\eta_0,0] + {\sf M}_{\cal U}\|(G_0N_0)[z_0] - (G_0N_0)[\eta_0,0]\|{\rm cl}({\cal B}_{n-r_0}) + ({\sf L}_{f_0^\circ} + {\sf M}_{G_0N_0}{\sf L}_{\cal U})\|\xi_0\|{\rm cl}({\cal B}_{n-r_0}) \\
            &\subset {\cal F}_0^\circ[\eta_0,0] + ({\sf L}_{f_0^\circ} + {\sf M}_{G_0N_0}{\sf L}_{\cal U} + {\sf M}_{\cal U}{\sf L}_{G_0N_0})\|\xi_0\|{\rm cl}({\cal B}_{n-r_0}).
    \end{align*}
}
    \begin{equation*}
        \mathcal{F}_0^\circ[\eta_0,\xi_0] \subset \mathcal{F}_0^\circ[\eta_0,0] + {\sf L}_{\mathcal{F}_0^\circ}\|\xi_0\|\mathrm{cl}(\mathcal{B}_{n-r_0})
    \end{equation*}
    for all $(\eta_0,\xi_0)\in{\mathcal{C}}_0$.
    It means that for every $t\in[0,T)$ at which  \eqref{eqn:differential_inclusion_of_prioritized_closed_system_on_omega_0} holds and for 
    $$\tau_3 \coloneqq G_0^\circ[z_0(t)](-K_0{\tilde{\xi}}_0(t) + \kappa_0^\star(t)),$$ the time derivative $\dot\eta_0(t)$ of $\eta_0(t)$ in \eqref{eqn:differential_inclusion_with_supper_set_1} satisfies
    \begin{equation*}
        \dot{\eta}_0(t) = \tau_1 + \tau_2 + \tau_3
    \end{equation*}
    where $\tau_1$ and $\tau_2$ are such that
    \begin{equation*}
        \tau_1\in \mathcal{F}_0^\circ[\eta_0(t),0],\quad \tau_2\in {\sf L}_{{\mathcal{F}}_0^\circ}\|\xi_0(t)\|\mathrm{cl}(\mathcal{B}_{n-r_0}).
    \end{equation*} 
    Since $\tau_1' \to \tau_1$ if and only if  $\tau_1' + \tau_2 + \tau_3 \to\dot{\eta}_0(t)$, we have%
\footnote{
    The upper bound of $\dot{V}_{\eta_0}$ can be found as
    \begin{align*}
        \dot{V}_{\eta_0} &\le D^+V_{\eta_0}(\eta_0(t),\dot{\eta}_0(t)) 
            = \limsup_{h\downarrow0,\tau\to\dot{\eta_0}(t)}\frac{V_{\eta_0}(\eta_0(t) + h\tau) - V_{\eta_0}(\eta_0(t))}{h}
            = \limsup_{h\downarrow0,\tau\to\tau_1+\tau_2+\tau_3}\frac{V_{\eta_0}(\eta_0(t) + h\tau) - V_{\eta_0}(\eta_0(t))}{h} \\
            &= \limsup_{h\downarrow0,\tau-\tau_2-\tau_3\to\tau_1}\frac{V_{\eta_0}(\eta_0(t) + h\tau) - V_{\eta_0}(\eta_0(t))}{h}
            = \limsup_{h\downarrow0,\tau_1'\to\tau_1}\frac{V_{\eta_0}(\eta_0(t) + h(\tau_1'+\tau_2+\tau_3)) - V_{\eta_0}(\eta_0(t))}{h} \\
            &\le \limsup_{h\downarrow0,\tau_1'\to\tau_1}\frac{V_{\eta_0}(\eta_0(t) + h(\tau_1'+\tau_2+\tau_3)) - V_{\eta_0}(\eta_0(t)+h\tau_1')}{h} + \limsup_{h\downarrow0,\tau_1'\to\tau_1}\frac{V_{\eta_0}(\eta_0(t) + h\tau_1') - V_{\eta_0}(\eta_0(t))}{h} \\
            &\le \limsup_{h\downarrow0,\tau_1'\to\tau_1}\frac{{\sf L}_{V_{\eta_0}}\|h(\tau_2+\tau_3)\|}{h} + D^+V_{\eta_0}(\eta_0(t),\tau_1)
            = D^+V_{\eta_0}(\eta_0(t),\tau_1) + {\sf L}_{V_{\eta_0}}(\|\tau_2\| + \|\tau_3\|) \\
            &\le -\bar{\alpha}_3(\|\eta_0(t)\|) + {\sf L}_{V_{\eta_0}}\big({\sf L}_{{\cal F}_0^\circ}(\|\tilde{\xi}_0\| + {\sf M}_{\xi_0^\star}) + {\sf M}_{G_0^\circ}(\|K_0\|\|\tilde{\xi}_0\| + {\sf M}_{\kappa_0^\star})\big) \\
            &= -\bar{\alpha}_3(\|\eta_0(t)\|) + \underbrace{{\sf L}_{V_{\eta_0}}({\sf L}_{{\cal F}_0^\circ} + {\sf M}_{G_0^\circ}\|K_0\|)}_{{\sf m}_1}\|\tilde{\xi}_0\| + \underbrace{{\sf L}_{V_{\eta_0}}{\sf L}_{{\cal F}_0^\circ}}_{{\sf m}_2}{\sf M}_{\xi_0^\star} + \underbrace{{\sf L}_{V_{\eta_0}}{\sf M}_{G_0^\circ}}_{{\sf m}_3}{\sf M}_{\kappa_0^\star}
    \end{align*}
}
    \begin{align}
        \dot{V}_{\eta_0}
        & \le D^+V_{\eta_0}(\eta_0(t),\dot{\eta}_0(t)) \nonumber\\
        &= \limsup_{h\downarrow0,\,\tau_1'\to\tau_1}\frac{V_{\eta_0}(\eta_0(t)+h(\tau_1'+\tau_2 + \tau_3)) - V_{\eta_0}(\eta_0(t))}{h} \nonumber\\
        &\le D^+V_{\eta_0}(\eta_0(t),\tau_1) + {\sf L}_{V_{\eta_0}}(\|\tau_2\| + \|\tau_3\|) \nonumber\\
        &\le -\bar{\alpha}_3(\|\eta_0(t)\|) + {\sf m}_1  \|{\tilde{\xi}}_0(t)\| + {\sf m}_2{\sf M}_{\xi_0^\star} + {\sf m}_3{\sf M}_{\kappa_0^\star} \label{eqn:upper_bound_of_derivative_of_V_eta_0}
    \end{align}
    where 
    \begin{align*}
        {\sf m}_1 = {\sf L}_{V_{\eta_0}}({\sf L}_{{\mathcal{F}}_0^\circ} + {\sf M}_{G_0^\circ}\|K_0\|),~{\sf m}_2 = {\sf L}_{V_{\eta_0}}{\sf L}_{{\cal F}_0^\circ},~ {\sf m}_3 = {\sf L}_{V_{\eta_0}}{\sf M}_{{G}_0^\circ}.
    \end{align*}
    
    We are ready to find an upper bound of the time derivative of the Lyapunov function candidate $V$.
    By Assumption \ref{asm:M_matrix_condition_with_spectral_radius}, there exists ${\sf w}^\star\in(0,\infty)^{i_0}$ satisfying ${\sf v}^\star \coloneqq ({\sf Y}-{\sf Z}){\sf w}^\star\in(0,\infty)^{i_0}$.
    Let ${\sf v}_0^\star = \min\{{\sf v}_i^\star\}$, ${\sf w}_0>0$, ${\sf w} = {\sf w}_0{\sf w}^\star$, ${\sf v} = ({\sf Y}-{\sf Z}){\sf w}$, and ${\sf v}_0 = \min\{{\sf v}_i\}$.
    Obviously, ${\sf w}, ({\sf Y}-{\sf Z}){\sf w}\in(0,\infty)^{i_0}$ and ${\sf v}_0 = {\sf w}_0{\sf v}_0^\star$ for all ${\sf w}_0>0$.
    Combining \eqref{eqn:upper_bound_of_derivative_of_V_eta_0} with \eqref{eqn:upper_bound_of_V_xi}, we have
    \begin{align*}
        \dot{V} &\le -\bar{\alpha}_3(\|\eta_0\|) + \bigg({\sf w_0}\sum_{i=1}^{i_0}\frac{{\sf w}_i^\star\tilde{\sigma}_i}{\vartheta_i}\sum_{j=1}^i{\sf M}_{E_{ij}}{\sf L}_{\kappa_j}\bigg)\|\eta_0\| \\
            &\quad~ - ({\sf w}_0{\sf v}_0^\star - {\sf m}_1)\|\tilde{\xi}_0\| \\
            &\quad~ + \bigg({\sf w}_0\sum_{i=1}^{i_0}\frac{{\sf w}_i^\star\tilde{\sigma}_i}{\vartheta_i}\sum_{j=1}^i{\sf M}_{E_{ij}}{\sf L}_{\kappa_j} + {\sf m}_2\bigg){\sf M}_{\xi_0^\star} \\
            &\quad~ + \bigg({\sf w}_0\sum_{i=1}^{i_0}\frac{{\sf w}_i^\star\tilde{\sigma}_i}{\vartheta_i}\sum_{j=1}^i{\sf M}_{E_{ij}} + {\sf m}_3\bigg){\sf M}_{\kappa_0^\star}
    \end{align*}
    for all ${\sf w}_0>0$ and almost all $t\in[0,T)$.
    Finally, for sufficiently large ${\sf w}_0>0$ and sufficiently small ${\sf M}_{E_{ij}}$,  ${\sf M}_{\kappa_0^\star}$, and ${\sf M}_{\xi_0^\star}$, one can always find ${\sf c}_1, {\sf c}_2>0$, $0 < \delta <\delta_0$, and $0<\varepsilon<\alpha_2^{-1}(\alpha_1(\delta))\le \delta$ satisfying
    \begin{equation*}
        \dot{V} \le -{\sf c}_1\bar{\alpha}_3(\|\eta_0\|) - {\sf c}_2\|\tilde{\xi}_0\| \quad (\varepsilon\le\|(\eta_0,\tilde{\xi}_0)\|\le\delta)
    \end{equation*}
    almost everywhere.
    A similar computation with \cite[Theorem 4.18]{Khalil2015}%
    \footnote{
        As a matter of fact, \cite[Theorem 4.18]{Khalil2015} requires $V(\eta_0,{\tilde{\xi}}_0)$ to be continuously differentiable, which may be violated at some $t$.  
        Yet this condition can be easily adapted for this proof by using the fundamental theorem of calculus \cite[Theorem 7.20]{Rudin1987}.
    }
    gives a class $\mathcal{KL}$ function $\beta$, so that for every $\|(\eta_0(0),{\tilde{\xi}}_0(0))\|<\alpha_2^{-1}(\alpha_1(\delta))$, there is $T_0>0$ (dependent on $(\eta_0(0),{\tilde{\xi}}_0(0))$ and $\varepsilon$) such that every Krasovskii solution of \eqref{eqn:prioritized_closed_loop_system_on_omega_0} satisfies \eqref{eqn:ultimate_bound_of_eta_and_zeta_on_omega_0}.%
\footnote{
    \label{fnt:find_ultimate_bound}
    See Appendix \ref{app:proof_of_footnote_about_find_ultimate_bound} for the proof.
}
    Obviously, \eqref{eqn:ultimate_bound_of_eta_and_zeta_on_omega_0} implies $T=\infty$.
\end{proof}

A direct consequence of the theorem is that, for all $t\in [T_0,\infty)$, $\kappa_i[z_0]$ are uniformly bounded on $\mathcal{C}_0$ as
\begin{align*}
    \|\kappa_i[z_0]\| & \leq  {\sf L}_{\kappa_i} \| z_0\| \leq {\sf L}_{\kappa_i} \alpha_1^{-1}(\alpha_2(\varepsilon)).
\end{align*}
Thus, the analysis of the $i$-th error dynamics \eqref{eqn:differential_inclusion_with_supper_set_2} can be decoupled from unpredictable behavior of the lower-priority tasks ${\frak T}_{i+1},\dots,{\frak T}_k$.
This leads to a different result on finding the upper bound of the tracking errors $\tilde{\xi}_i(t)$.

\begin{corollary}
    \label{cor:ultimate_bound_of_tracking_errors}
    Assume that all the hypothesis of Theorem~\ref{thm:prioritized_output_tracking} hold, $\|\tilde{z}_0^\circ\|\le\alpha_2^{-1}(\alpha_1(\delta))$, and there exist $i_0\le i_1 \le k$ and $T_1\in[T_0,\infty)$ such that for each $i_0 +1 \leq i\le i_1$,
    \begin{enumerate}
        \item a Krasovskii solution $z_0(t)$ of \eqref{eqn:prioritized_closed_loop_system_on_omega_0} with $z_0(0) = \tilde{z}_0^0 + (0,\xi_{0}^\star(0))$ satisfies  \eqref{eqn:prioritized_closed_loop_system_on_omega_0_xi} almost everywhere on $[T_1,\infty)$;
        \item there exist positive constants $\varsigma_i$ and $\vartheta_i$ and matrices $K_i$, $X_i=X_i^T>0$, and $R_i$ satisfying \eqref{eqn:diagonal_block_positive_definite} almost everywhere on $[T_1,\infty)$ along the trajectory $z_0(t)$ and \eqref{eqn:Kalman_Yakubovichi_Popov_lemma}.
    \end{enumerate}
    Then, there exist ${\sf a}_{ij},{\sf b}_{ij},{\sf c}_i>0$ for $1\le j\le i\le i_1$ such that
    \begin{align}\label{eqn:upper_bound_of_zeta_i}
        \|{\tilde{\xi}}_i(t)\| & \le \sum_{j=1}^i {\sf a}_{ij}\|{\tilde{\xi}}_j(T_1)\|e^{-{\sf b}_{ij}(t-T_1)} + {\sf c}_i
    \end{align}
    for all $1\le i\le i_1$ and $t\in[T_1,\infty)$.
    Additionally, if there exists $1\le i\le i_1$ such that $E_{ij}[z_0(t)] = 0$ for all $1\le j\le i$ and all $t\in[T_1,\infty)$, then \eqref{eqn:upper_bound_of_zeta_i} holds with ${\sf c}_i = 0$.
    $\hfill\square$
\end{corollary}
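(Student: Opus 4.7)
The plan is to leverage the fact that by Theorem \ref{thm:prioritized_output_tracking}, for every $t \geq T_0$ the trajectory $z_0(t)$ lies in the ultimate bound set $\{\|\tilde z_0\|\le \alpha_1^{-1}(\alpha_2(\varepsilon))\}$, so in particular for $t \geq T_1 \geq T_0$ one has a uniform a priori bound on $z_0(t)$ that, combined with the Lipschitz property $\|\kappa_j[z_0]\| \leq {\sf L}_{\kappa_j}\|z_0\|$, yields uniform pointwise bounds on $\|\kappa_j[z_0(t)]\|$ for every $j$. Moreover, \eqref{eqn:prioritized_closed_loop_system_on_omega_0_xi} holds almost everywhere on $[T_1,\infty)$ for every $1\le i \le i_1$ (by construction for $i \leq i_0$, and by hypothesis~1 for $i_0 < i \le i_1$), so the decomposition \eqref{eq:error_dynamics_linear_system}–\eqref{eq:error_dynamics_nonlinearity} and the KYP identity \eqref{eqn:Kalman_Yakubovichi_Popov_lemma} are available for all $i$ under consideration.

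For each $1\le i\le i_1$ I would take the Lyapunov function $V_{\tilde{\xi}_i}(\tilde{\xi}_i) = \sqrt{\tilde{\xi}_i^T X_i \tilde{\xi}_i}$ of \eqref{eqn:Lyapunov_function_i} and reproduce verbatim the computation leading to \eqref{eqn:upper_bound_of_V_i_dot}, but this time converting the bound into one for $V_{\tilde{\xi}_i}$ itself via the norm equivalence $\sigma_{\min}^{1/2}(X_i)\|\tilde{\xi}_i\| \le V_{\tilde{\xi}_i} \le \sigma_{\max}^{1/2}(X_i)\|\tilde{\xi}_i\|$. This produces the scalar differential inequality
\begin{equation*}
    \dot{V}_{\tilde{\xi}_i}(t) \;\le\; -\mu_i\, V_{\tilde{\xi}_i}(t) \;+\; \nu_i \|w_{\tilde{\xi}_i}(t)\|,\qquad \mu_i \coloneqq \vartheta_i/\sqrt{\tilde{\sigma}_i},\ \nu_i \coloneqq \sigma_{\max}(X_i)/\sigma_{\min}^{1/2}(X_i),
\end{equation*}
which holds almost everywhere on $[T_1,\infty)$. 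Since $V_{\tilde{\xi}_i}\circ\tilde{\xi}_i$ is absolutely continuous, the comparison lemma applied in integrated form yields $V_{\tilde{\xi}_i}(t) \le V_{\tilde{\xi}_i}(T_1) e^{-\mu_i(t-T_1)} + \nu_i \int_{T_1}^t e^{-\mu_i(t-s)}\|w_{\tilde{\xi}_i}(s)\|\,ds$.

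The argument then proceeds by induction on the priority index $i$. For the base case $i=1$, the interconnection reduces to $w_{\tilde{\xi}_1}=E_{11}(\kappa_1-\kappa_1^\star)$ which is uniformly bounded on $[T_1,\infty)$ by the first paragraph, so the convolution integral contributes at most a constant and \eqref{eqn:upper_bound_of_zeta_i} with $i=1$ follows after converting from $V_{\tilde{\xi}_1}$ back to $\|\tilde{\xi}_1\|$. In the inductive step, having established \eqref{eqn:upper_bound_of_zeta_i} for $1,\dots,i-1$, I would substitute the bounds
\begin{equation*}
    \|w_{\tilde{\xi}_i}(t)\| \;\le\; \sum_{j=1}^{i-1}{\sf M}_{E_{ij}}\|K_j\|\,\|\tilde{\xi}_j(t)\| \;+\; \sum_{j=1}^{i}{\sf M}_{E_{ij}}\big(\|\kappa_j[z_0(t)]\|+{\sf M}_{\kappa_j^\star}\big)
\end{equation*}
into the integrated comparison inequality. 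The resulting convolutions $\int_{T_1}^t e^{-\mu_i(t-s)}e^{-{\sf b}_{jk}(s-T_1)}\,ds$ evaluate to linear combinations of $e^{-\mu_i(t-T_1)}$ and $e^{-{\sf b}_{jk}(t-T_1)}$ (with the degenerate case $\mu_i={\sf b}_{jk}$ absorbed into possibly larger constants ${\sf a}_{ij}$ via an $\epsilon$-shift), while the constant terms in $\|w_{\tilde{\xi}_i}\|$ contribute at most $\nu_i\|w_{\tilde{\xi}_i}\|_\infty/\mu_i$ and are folded into ${\sf c}_i$. This yields \eqref{eqn:upper_bound_of_zeta_i} for $i$ with explicit coefficients built recursively from $\mu_i, \nu_i, {\sf M}_{E_{ij}}$, and the coefficients of the lower-priority bounds.

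Finally, for the refinement $c_i=0$: if $E_{ij}[z_0(t)]\equiv 0$ for $1\le j\le i$ and $t\ge T_1$, then $w_{\tilde{\xi}_i}(t)\equiv 0$ along the trajectory, so the comparison inequality collapses to pure exponential decay $V_{\tilde{\xi}_i}(t)\le V_{\tilde{\xi}_i}(T_1)e^{-\mu_i(t-T_1)}$, and \eqref{eqn:upper_bound_of_zeta_i} holds with ${\sf c}_i=0$. The main technical obstacle is bookkeeping in the induction: the constants ${\sf a}_{ij}, {\sf b}_{ij}, {\sf c}_i$ must be defined in a way that remains valid even when decay rates coincide and when $V_{\tilde{\xi}_i}$ fails to be differentiable at $\tilde{\xi}_i=0$, but both issues are standard and can be handled by using the absolutely continuous form of the comparison lemma and by uniformly perturbing coinciding rates.
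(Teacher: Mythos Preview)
Your proposal is correct and follows essentially the same route as the paper: use the ultimate bound from Theorem~\ref{thm:prioritized_output_tracking} to get uniform control on $\kappa_j[z_0(t)]$ for $t\ge T_1$, apply the Lyapunov estimate \eqref{eqn:upper_bound_of_V_i_dot} (now valid for all $1\le i\le i_1$ thanks to hypotheses~1 and~2), integrate the resulting scalar differential inequality via Gr\"onwall/comparison, and then run an induction on $i$ in which the convolutions $\int_{T_1}^t e^{-\mu_i(t-s)}e^{-{\sf b}_{ja}(s-T_1)}\,ds$ are resolved by the usual case split on whether the rates coincide. Two cosmetic differences: the paper keeps the sharper intermediate inequality $\dot V_{\tilde\xi_i}\le -\vartheta_i V_{\tilde\xi_i}+\nu_i\|w_{\tilde\xi_i}\|$ (available one line before \eqref{eqn:upper_bound_of_V_i_dot}), so its base decay rate is $\vartheta_i$ rather than your $\mu_i=\vartheta_i/\sqrt{\tilde\sigma_i}$; and it works with the trajectory-specific bounds ${\sf M}_{E_{ij}}^\circ\coloneqq\sup_{t\ge T_1}\|E_{ij}[z_0(t)]\|$ instead of the global ${\sf M}_{E_{ij}}$, which makes the ${\sf c}_i=0$ refinement fall out automatically from the recursion rather than requiring your separate argument. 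Neither point affects the validity of your plan.
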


\begin{proof}
    Define ${\sf M}_{E_{ij}}^\circ \coloneqq \sup\{\|E_{ij}[z_0(t)]\| : t\in[T_1,\infty)\}$.
    With Item~2) of the corollary, we take $V_{\tilde{\xi}_i}({\tilde{\xi}}_i)$ for the lower-priority tasks $i_0 +1\leq  i\le i_1$ as in \eqref{eqn:Lyapunov_function_i}.
    From \eqref{eqn:ultimate_bound_of_eta_and_zeta_on_omega_0}, Gr\"onwall's inequality \cite[Lemma 2]{An2019a}, and the fact that $\|\tilde{\xi}_i\|\le V_{\tilde{\xi}_i}(\tilde{\xi}_i)/\sigma_{\min}^{1/2}(X_i)$, we can find an upper bound of $\|\tilde{\xi}_i(t)\|$ as
    \begin{subequations}\label{eqn:upper_bound_of_norm_of_tracking_error}
        \begin{align}
            \|\tilde{\xi}_i(t)\| &\le \sqrt{\tilde{\sigma}_i}\|\tilde{\xi}_i(T_1)\|e^{-\vartheta_i(t-T_1)} \label{eqn:upper_bound_of_norm_of_tracking_error_1} \\
                &\quad + \tilde{\sigma}_i\sum_{j=1}^{i-1}{\sf M}_{E_{ij}}^\circ\|K_j\|\int_{T_1}^t\|\tilde{\xi}_j(s)\|e^{-\vartheta_i(t-s)}ds \label{eqn:upper_bound_of_norm_of_tracking_error_2} \\
                &\quad + \frac{\tilde{\sigma}_i}{\vartheta_i}\sum_{j=1}^i{\sf M}_{E_{ij}}^\circ\big({\sf L}_{\kappa_j}(\alpha_1^{-1}(\alpha_2(\varepsilon)) + {\sf M}_{\xi_0^\star}) + {\sf M}_{\kappa_j^\star}\big) \label{eqn:upper_bound_of_norm_of_tracking_error_3}
        \end{align}
    \end{subequations}
    for all $1\le i\le i_1$ and all $t\in[T_1,\infty)$.
    The proof of the remaining can be done by showing \eqref{eqn:upper_bound_of_zeta_i} via mathematical induction: its derivation is straightforward and thus is omitted here.%
\footnote{
    See Appendix \ref{app:proof_of_the_omitted_part_of_ultimate_bound_of_tracking_errors} for the omitted part of the proof.
}
    Instead, we note that  \eqref{eqn:upper_bound_of_zeta_i} holds for $i=1$ with ${\sf a}_{11} = \sqrt{\tilde{\sigma}_1}$, ${\sf b}_{11} = \vartheta_1$, and ${\sf c}_1$ equals \eqref{eqn:upper_bound_of_norm_of_tracking_error_3}.
    Finally, if $E_{ij}[z_0(t)] = 0$ for all $1\le j\le i$ and $t\in[T_1,\infty)$, then ${\sf c}_i = 0$ because \eqref{eqn:upper_bound_of_norm_of_tracking_error_2} and \eqref{eqn:upper_bound_of_norm_of_tracking_error_3} vanish.
\end{proof}

\begin{remark}
    Although the set-valued map $\mathcal{U}$ used in the definition of the zero dynamics \eqref{eqn:zero_dynamics_tracking} always exists, it is not guaranteed that there exists a Lipschitz function $V_{\eta_0}(\eta_0)$ satisfying \eqref{eqn:zero_dynamics_lyapunov}.
    As a matter of fact, existence of such $V_{\eta_0}$ is directly related to $\mathcal{U}$.
    To see this, let $u_{i_0+1:k+1}[t,z_0] = 0$ for all $(t,z_0)\in[0,\infty)\times{\mathcal{C}}_0$, which is obtained by selecting $\Gamma_{ij}[z_0] = 0$ and $u_f[t,z_0] = 0$ for all $i_0<i\le k$, $1\le j\le i$, and $z_0\in{\mathcal{C}}_0$.
    It is then possible to choose $\mathcal{U}$ as $\mathcal{U}[z_0]=\alpha\mathrm{cl}(\mathcal{B}_m)$ for all $z_0\in{\mathcal{C}}_0$ with an arbitrarily small constant $\alpha>0$.
    It follows that Assumption~\ref{asm:zero_dynamics_lyapunov} can be reduced to the asymptotic stability of the equilibrium point 0 of the differential equation $\dot{\eta}_0 = f_0^\circ[\eta_0,0]$.
    Notice that, as we increase the size of $\mathcal{U}[z_0]$, the margin of stability of the zero dynamics $\dot{\eta}_0 = {\mathcal{F}}_0^\circ[\eta_0,0]$ decreases, which implies that there could be a maximum size of $\mathcal{U}[z_0]$ that allows \eqref{eqn:zero_dynamics_lyapunov}.

    From this perspective, Theorem \ref{thm:prioritized_output_tracking} and Corollary \ref{cor:ultimate_bound_of_tracking_errors} suggest a design guideline for the proposed control framework to be utilized in practical applications.
    First, we design $u_{1:i_0}$ and find a set-valued map $\mathcal{U}$ that allows \eqref{eqn:zero_dynamics_lyapunov}.
    Then, one can design $u_{i_0+1:k+1}$ such that ${\mathcal{U}}_{i_0+1:k+1}[t,z_0]\subset \mathcal{U}[z_0]$ on $[0,\infty)\times{\mathcal{C}}_0$.
    It is important to note that not every reference output signal $y_{i}^\star$ for $i_0<i\le k$ is trackable by the output $y_i$ because of singularity.
    Thus, one of the best scenarios is that there are $i_0< i_1\le k$ and $T_1\ge T_0$ such that the assumptions on Corollary \ref{cor:ultimate_bound_of_tracking_errors} hold.
    To match this condition, we need to try various $\Gamma_{ij}$ and $y_{i}^\star$ for $i_0<i\le k$ under the condition ${\mathcal{U}}_{i_0+1:k+1}[t,z_0]\subset {\mathcal{U}}[z_0]$.
    This clearly shows that the prioritized output tracking provides a framework in which we can try the output tracking of the lower priority tasks ${\mathfrak
    {T}}_{i_0+1},\dots,{\mathfrak{T}}_k$ that can be singular, while guaranteeing the output tracking of the higher priority tasks ${\mathfrak{T}}_1,\dots,{\mathfrak{T}}_{i_0}$.
    $\hfill\square$
\end{remark}

\section{Conclusion}
\label{sec:conclusion}

In this article, we proposed an extension of the input-output linearization based on the task priority.
The main contribution of this work is two folds.
Firstly, it allows us to apply well-developed theories of the input-output linearization to the input affine multivariate system that is not input-output linearizable.
Secondly, it provides a framework to construct a complex control system that has both critical and noncritical control issues such that the critical part can be handled by higher priority tasks and the noncritical part by lower priority tasks.
As a result, we can apply the input-output linearization to more complicated practical problems.

As the first attempt to extend the input-output linearization to the nonlinearizable systems based on the task priority, there are some unsolved problems in this article:
1) the task priority-based approach can be extended further by using the nonlinear structure algorithm, 
2) nonsmoothness caused by priority has not been understood sufficiently, 
3) existence of a trajectory that satisfies nonsmooth error dynamics has not been studied well, 
4) an efficient algorithm to find $\Gamma_{ij}$ and $y_i^\star$ for $i_0<i\le k$ under the condition $\mathcal{U}_{i_0+1:k+1}[t,z_0]\subset \mathcal{U}[z_0]$ needs to be developed, 
5) the prioritized output tracking problem might be solved with the prioritized control input which is not in the form of \eqref{eqn:prioritized_control_input_with_gamma} and \eqref{eqn:external_reference_input}, and 
6) the developed theorems need to be applied for practical problems.
We leave those as future works.

\appendix

\section{Proof of Lemma \ref{lem:diffeomorphism}}
\label{app:proof_of_lemma_about_diffeomorphism}

We provide a summary of \cite[Lemma 5.1.1, Proposition 5.1.2, Remark 5.1.3]{Isidori1995} for those who are not used to the input-output feedback linearization of multivariate nonlinear systems.
We first prove that the row vectors
\begin{align*}
    &dh_1(x_0),dL_fh_1(x_0),\dots,dL_f^{{r}_1-1}h_1(x_0),\\
    &dh_2(x_0),dL_fh_2(x_0),\dots,dL_f^{{r}_2-1}h_2(x_0),\\
    &\ddots \\
    &dh_{p}(x_0),dL_fh_{p}(x_0),\dots,dL_f^{{r}_{p}-1}h_{p}(x_0)
\end{align*}
are linearly independent.
Then, it will also prove ${r}\le n$.
Assume, without loss of generality, ${r}_1\ge {r}_i$ for $2\le i\le {p}$.
Consider the matrices
\begin{align*}
    X_i &= \begin{bmatrix} dh_i \\ \vdots \\ dL_f^{{r}_i-1}h_i \end{bmatrix}(x_0) \in \mathbb{R}^{{r}_i\times n} &
    X &= \begin{bmatrix} X_1 \\ \vdots \\ X_{p} \end{bmatrix} = \begin{bmatrix} dh_1 \\ \vdots \\ dL_f^{{r}_1-1}h_1 \\ \vdots \\ dh_{p} \\ \vdots \\ dL_f^{{r}_{p}-1}h_{p} \end{bmatrix}(x_0) \in \mathbb{R}^{{r}\times n}
\end{align*}
and
\begin{align*}
    Y = \begin{bmatrix} g_1 & \cdots & g_{{m}} & \cdots & ad_f^{{r}_1-1}g_1 & \cdots & ad_f^{{r}_1-1}g_{{m}} \end{bmatrix}(x_0) \in \mathbb{R}^{n\times {m}{r}_1}.
\end{align*}
We need the next lemma:

\begin{lemma}
    \label{lem:recursive_relation_between_lie_derivative_and_adjoint}
    \cite[Lemma 4.1.2]{Isidori1995}
    Let $\phi$ be a real-valued function and $f,g$ vector fields, all defined in an open set $U$ of $\mathbb{R}^n$. Then, for any choice of $s,j,r\ge0$,
    \begin{equation*}
        L_{ad_f^{j+r}g}L_f^s\phi(x) = \sum_{i=0}^r(-1)^i\binom{ r }{i}L_f^{r-i}L_{ad_f^jg}L_f^{s+i}\phi(x).
    \end{equation*}
    As a consequence, the two sets of conditions
    \begin{enumerate}
        \item $L_g\phi(x) = L_gL_f\phi(x) = \cdots = L_gL_f^j\phi(x) = 0,\,\forall x\in U$
        \item $L_g\phi(x) = L_{ad_fg}\phi(x) = \cdots = L_{ad_f^j}\phi(x) = 0,\,\forall x\in U$
    \end{enumerate}
    are equivalent.
    $\hfill\square$
\end{lemma}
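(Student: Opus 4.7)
The plan is to prove the recurrence by a straightforward induction on $r$, reducing everything to the basic commutator identity $L_{ad_f g}\psi = L_f L_g \psi - L_g L_f \psi$, which is just the definition $ad_f g = [f,g]$ expressed via directional derivatives of an arbitrary smooth test function $\psi$. The base case $r=0$ is a tautology, as both sides equal $L_{ad_f^j g}L_f^s\phi$, so the real content lies in propagating the formula from $r$ to $r+1$.

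For the inductive step I would use $ad_f^{j+r+1}g = [f,ad_f^{j+r}g]$ together with the commutator identity applied to the test function $L_f^s\phi$:
\[
    L_{ad_f^{j+r+1}g}L_f^s\phi = L_f L_{ad_f^{j+r}g}L_f^s\phi - L_{ad_f^{j+r}g}L_f^{s+1}\phi.
\]
Substituting the inductive hypothesis into each term on the right (applied with exponent $s$ and with $s+1$, respectively), shifting the summation index in the second sum by one, and combining the overlapping terms via Pascal's identity $\binom{r}{i}+\binom{r}{i-1}=\binom{r+1}{i}$ should yield the desired formula with $r$ replaced by $r+1$. The boundary terms at $i=0$ and $i=r+1$ work out automatically since $\binom{r}{0}=\binom{r+1}{0}=1$ and $\binom{r}{r}=\binom{r+1}{r+1}=1$ with the correct signs.

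For the equivalence of the two sets of vanishing conditions, the direction (1)~$\Rightarrow$~(2) is immediate by specializing the identity to $j=0$ and $s=0$, which gives
\[
    L_{ad_f^r g}\phi = \sum_{i=0}^{r}(-1)^i\binom{r}{i}L_f^{r-i}L_g L_f^i\phi,
\]
where every summand vanishes by (1) once $0\le r\le j$. The converse (2)~$\Rightarrow$~(1) I would establish by a short induction on $s$: isolating the $i=s$ term in the above display yields
\[
    (-1)^s L_g L_f^s\phi = L_{ad_f^s g}\phi - \sum_{i=0}^{s-1}(-1)^i\binom{s}{i}L_f^{s-i}L_g L_f^i\phi,
\]
whose right-hand side vanishes by (2) (for the leading term) and by the inductive hypothesis (for the summation), so $L_g L_f^s\phi=0$ as required. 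There is no genuine obstacle here; the argument is essentially a Leibniz-type combinatorial manipulation of iterated Lie brackets, and the only care required is to keep the binomial indices straight during the shift-and-merge step inside the induction on $r$.
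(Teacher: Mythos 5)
Your proof is correct and is essentially the standard argument (the one in Isidori's book, which the paper simply cites without reproducing a proof): induction on $r$ via the commutator identity $L_{[f,g]}\psi = L_fL_g\psi - L_gL_f\psi$, an index shift, and Pascal's rule, followed by the specialization $j=0$, $s=0$ and an induction on $s$ for the converse implication. The only point worth making explicit is that in both directions of the equivalence you use that the vanishing hypotheses hold identically on the open set $U$, so that further Lie derivatives of the vanishing quantities are themselves zero; your write-up implicitly relies on this and it is exactly why the conditions are stated for all $x\in U$ rather than at a single point.
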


From Lemma \ref{lem:recursive_relation_between_lie_derivative_and_adjoint} and the relative degree, we have
\begin{equation*}
    L_{ad_f^ag_j}L_f^bh_i(x) = \begin{dcases*} 0, & $0\le a+b<{r}_i-1$ \\ (-1)^aL_{g_j}L_f^{{r}_i-1}h_i(x), & $a+b={r}_i-1$ \end{dcases*}
\end{equation*}
for all $i,j,a,b$.
Then, we can formulate%
\footnote{
    \begin{align*}
        X_iY &= \begin{bmatrix} dh_i(x) \\ \vdots \\ dL_f^{{r}_1-1}h_i(x) \end{bmatrix}\begin{bmatrix} g_1(x) & \cdots & g_{{m}}(x) & \cdots & ad_f^{{r}_1-1}g_1(x) & \cdots & ad_f^{{r}_1-1}g_{{m}}(x) \end{bmatrix} \\
            &= \begin{bmatrix} 
                dh_i(x)g_1(x) & \cdots & dh_i(x)g_{{m}}(x) & \cdots & dh_i(x)ad_f^{{r}_1-1}g_1(x) & \cdots & dh_i(x)ad_f^{{r}_1-1}g_{{m}}(x) \\
                \vdots & \ddots & \vdots & \ddots & \vdots &\ddots &\vdots \\
                dL_f^{{r}_1-1}h_i(x)g_1(x) & \cdots & dL_f^{{r}_1-1}h_i(x)g_{{m}}(x) & \cdots & dL_f^{{r}_1-1}h_i(x)ad_f^{{r}_1-1}g_1(x) & \cdots & dL_f^{{r}_1-1}h_i(x)ad_f^{{r}_1-1}g_{{m}}(x)
            \end{bmatrix} \\
            &= \begin{bmatrix} 
                L_{g_1}h_i(x) & \cdots & L_{g_{{m}}}h_i(x) & \cdots & L_{ad_f^{{r}_1-1}g_1}h_i(x) & \cdots & L_{ad_f^{{r}_1-1}g_{{m}}}h_i(x) \\
                \vdots & \ddots & \vdots & \ddots & \vdots &\ddots &\vdots \\
                L_{g_1}L_f^{{r}_1-1}h_i(x) & \cdots & L_{g_{{m}}}L_f^{{r}_1-1}h_i(x) & \cdots & L_{ad_f^{{r}_1-1}g_1}L_f^{{r}_1-1}h_i(x) & \cdots & L_{ad_f^{{r}_1-1}g_{{m}}}L_f^{{r}_1-1}h_i(x)
            \end{bmatrix} \\
            &= \begin{bmatrix} 
                0 & \cdots & 0 & \cdots & (-1)^{{r}_i-1}L_{g_1}L_f^{{r}_i-1}h_i(x) & \cdots & (-1)^{{r}_i-1}L_{g_{{m}}}L_f^{{r}_i-1}h_i(x) & * & \cdots & * \\
                \vdots & \ddots & \vdots & \ddots & \vdots &\ddots &\vdots & \vdots & \ddots & \vdots \\
                L_{g_1}L_f^{{r}_i-1}h_i(x) & \cdots & L_{g_{{m}}}L_f^{{r}_i-1}h_i(x) & \cdots & L_{ad_f^{{r}_i-1}g_1}L_f^{{r}_i-1}h_i(x) & \cdots & L_{ad_f^{{r}_i-1}g_{{m}}}L_f^{{r}_i-1}h_i(x) & * & \cdots & *
            \end{bmatrix} \\
            &= \begin{bmatrix} 
                0 & 0 & \cdots & 0 & (-1)^{{r}_i-1}L_GL_f^{{r}_i-1}h_i(x) & * & \cdots & * \\
                0 & 0 & \cdots & (-1)^{{r}_i-2}L_GL_f^{{r}_i-1}h_i(x) & * & * & \cdots & * \\
                \vdots & \vdots & \ddots & \vdots & \vdots & \vdots & \ddots & \vdots \\
                0 & (-1)L_GL_f^{{r}_i-1}h_i(x) & \cdots & * & * & * & \cdots & * \\
                L_GL_f^{{r}_i-1}h_i(x) & * & \cdots & * & * & * & \cdots & *
            \end{bmatrix}.
    \end{align*}
}
\begin{equation*}
    X_iY = \begin{bmatrix} 
        0 & 0 & \cdots & 0 & (-1)^{{r}_i-1}j_i(x) & * \\
        \vdots & \vdots & \ddots & \vdots & \vdots & \vdots \\
        0 & -j_i(x) & \cdots & * & * & * \\
        j_i(x) & * & \cdots & * & * & *
    \end{bmatrix}
\end{equation*}
where $j_i$ is the $i$-th row of $J(x)$.
Thus, it is easy to see that the matrix $XY$, after possibly a reordering of the rows, exhibits a block triangular structure in which the diagonal blocks consist of rows of the matrix \eqref{eqn:input_gain_matrix}.
This shows the linear independence of the rows of the matrix $XY$, i.e., that of the rows of the matrix $X$.

Since
\begin{equation*}
    J(x) = \begin{bmatrix} dL_f^{{r}_1-1}h_1(x) \\ \vdots \\ dL_f^{{r}_{p}-1}h_{p}(x) \end{bmatrix}G(x) \in \mathbb{R}^{{p}\times {m}}
\end{equation*}
is nonsingular at $x_0$ by the definition of the relative degree, there exists $g\in\{g_1,\dots,g_{{m}}\}$ such that the vectors $g(x_0)$ and
\begin{equation*}
    \begin{bmatrix} dL_f^{{r}_1-1}h_1(x_0) \\ \vdots \\ dL_f^{{r}_{p}-1}h_{p}(x_0) \end{bmatrix}g(x_0)
\end{equation*}
are nonzero, and, thus, the distribution $\Delta = \mathrm{span}\{g\}$ is nonsingular around $x_0$.
Being $1$-dimensional, this distribution is also involutive.
Therefore, by Frobenius' Theorem, we deduce the existence of $n-1$ real-valued functions, $\zeta_1(x),\dots,\zeta_{n-1}(x)$, defined in a neighborhood $\mathcal{D}$ of $x_0$, such that $d\zeta_1(x),\dots,d\zeta_{n-1}(x)$ are linearly independent for all $x\in \mathcal{D}$ and
\begin{align}
    \begin{split}
        \mathrm{span}\{d\zeta_1,\dots,d\zeta_{n-1}\} = \Delta^\perp 
        \iff d\zeta_i(x)g(x) = 0,\,\forall x\in \mathcal{D},\,\forall i\in\overline{1,n-1}.
    \end{split}
    \label{eqn:span_equal_G_perp}
\end{align}

\begin{lemma}
    \label{lem:orthogonal_complement_of_sum_of_subspaces}
    Let $A_1$ and $A_2$ be subspaces of $\mathbb{R}^n$.
    Then, $(A_1+A_2)^\perp = A_1^\perp\cap A_2^\perp$.
    $\hfill\square$
\end{lemma}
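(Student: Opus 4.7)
The plan is to establish the equality by proving the two inclusions $(A_1+A_2)^\perp \subseteq A_1^\perp\cap A_2^\perp$ and $A_1^\perp\cap A_2^\perp \subseteq (A_1+A_2)^\perp$ separately, working directly from the definition of the orthogonal complement with respect to the standard inner product on $\mathbb{R}^n$.

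For the forward inclusion, I would take any $v\in (A_1+A_2)^\perp$, so that $\langle v,w\rangle=0$ for every $w\in A_1+A_2$. Since $A_i\subseteq A_1+A_2$ (using $0\in A_j$ for $j\neq i$ to embed $A_i$ into the sum), each vector $w_i\in A_i$ also lies in $A_1+A_2$, which forces $\langle v,w_i\rangle=0$. Hence $v\in A_i^\perp$ for both $i=1,2$, and therefore $v\in A_1^\perp\cap A_2^\perp$.

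For the reverse inclusion, I would pick any $v\in A_1^\perp\cap A_2^\perp$. Any $w\in A_1+A_2$ admits a decomposition $w=w_1+w_2$ with $w_i\in A_i$, and bilinearity of the inner product gives $\langle v,w\rangle = \langle v,w_1\rangle+\langle v,w_2\rangle = 0+0 = 0$. Thus $v\in (A_1+A_2)^\perp$, completing the proof.

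There is no real obstacle here: the result is a textbook identity, and the whole argument reduces to unpacking the definitions and invoking the bilinearity of the inner product. The only minor care needed is to note that when we write $w_1\in A_1$ as an element of $A_1+A_2$, we use $w_1 = w_1 + 0$ with $0\in A_2$ (which holds because $A_2$ is a subspace), and symmetrically for $A_2$.
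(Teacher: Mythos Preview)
Your proof is correct and follows essentially the same approach as the paper: both arguments establish the two inclusions directly from the definition of the orthogonal complement, using bilinearity of the inner product and the observation that each $A_i$ sits inside $A_1+A_2$ via $w_i = w_i + 0$. The only cosmetic difference is that the paper proves the inclusion $(A_1+A_2)^\perp \subset A_1^\perp\cap A_2^\perp$ by contrapositive rather than directly, but the content is identical.
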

\begin{proof}
    Let $v\in A_1^\perp\cap A_2^\perp$.
    Then, $v\cdot a_1 = v\cdot a_2 = 0$ for all $a_1\in A_1$ and all $a_2\in A_2$.
    It follows that $v\cdot(a_1+a_2) = v\cdot a_1 + v\cdot a_2 = 0$ for all $a_1\in A_1$ and all $a_2\in A_2$.
    Thus, $v\in (A_1+A_2)^\perp$ and $A_1^\perp\cap A_2^\perp \subset (A_1+A_2)^\perp$.
    Let $v\not\in A_1^\perp\cap A_2^\perp$.
    Without loss of generality, $v\not\in A_1^\perp$.
    Then, there is $a_1\in A_1$ satisfying $v\cdot a_1 = v\cdot(a_1+0) \neq 0$.
    Since $a_1 + 0 \in A_1 + A_2$, we have $v\not\in (A_1+A_2)^\perp$.
    Thus, $(A_1+A_2)^\perp \subset A_1^\perp\cap A_2^\perp$.
    Therefore, $(A_1+A_2)^\perp = A_1^\perp\cap A_2^\perp$.
\end{proof}

Let $S = \{dh_1,\dots,dL_f^{{r}_1-1}h_1,\dots,dh_{p},\dots,dL_f^{{r}_{p}-1}h_{p}\}$ and $S(x) = \{dh_1(x),\dots,dL_f^{{r}_{p}-1}h_{p}(x)\}$.
It is easy to show that
\begin{equation}
    \mathrm{dim}(\Delta^\perp + \mathrm{span}S) = n
    \label{eqn:dim_G_perp_plus_span_equal_n}
\end{equation}
at $x_0$.
For, suppose this is false.
Then,
\begin{align*}
    &\mathrm{dim}(\Delta^\perp(x_0) + \mathrm{span}S(x_0)) < n \\
    \iff & \Delta^\perp(x_0) + \mathrm{span}S(x_0) \neq \mathbb{R}^n \\
    \iff & (\Delta^\perp(x_0)+\mathrm{span}S(x_0))^\perp \neq \{0\} \\
    &\downarrow\quad \text{Lemma \ref{lem:orthogonal_complement_of_sum_of_subspaces}} \\
    \iff & \Delta(x_0) \cap \mathrm{span}S(x_0)^\perp \neq \{0\}
\end{align*}
i.e., the vector $g(x_0)$ annihilates all the covectors in
\begin{equation*}
    \mathrm{span}\left\{\begin{array}{l}
        dh_1(x_0),\dots,dL_f^{{r}_1-1}h_1(x_0),\\
        \dots,\\
        dh_{p}(x_0),\dots,dL_f^{{r}_{p}-1}h_{p}(x_0)
    \end{array}\right\}.
\end{equation*}
But this is a contradiction, because $g$ satisfies
\begin{equation*}
    \begin{bmatrix} dL_f^{{r}_1-1}h_1(x_0) \\ \vdots \\ dL_f^{{r}_{p}-1}h_{p}(x_0) \end{bmatrix}g(x_0) \neq 0.
\end{equation*}

From \eqref{eqn:span_equal_G_perp}, \eqref{eqn:dim_G_perp_plus_span_equal_n}, and from the fact that $\mathrm{span}S$ has dimension ${r}$, by the first part of the proof, we see that
\begin{equation*}
    \left(
    \begin{split}
    &\Delta^\perp = \mathrm{span}\{d\zeta_1,\dots,d\zeta_{n-1}\} \\
    &S = \{dh_1,\dots,dL_f^{{r}_1-1}h_1,\dots,dh_{p},\dots,dL_f^{{r}_{p}-1}h_{p}\} \\
    &\mathrm{dim}(\Delta^\perp) = n-1 \\
    &\mathrm{dim}(\mathrm{span}S) = r = {p}{r} \\
    &\mathrm{dim}(\Delta^\perp + \mathrm{span}S) = n
    \end{split}
    \right)
\end{equation*}
implies
\begin{equation*}
    \mathrm{span}
    \left\{
    \begin{split}
    &dh_1,\dots,dL_f^{{r}_1-1}h_1,\\
    &\dots,\\
    &dh_{p},\dots,dL_f^{{r}_{p}-1}h_{p},\\
    &d\zeta_1,\dots,d\zeta_{n-1}
    \end{split}
    \right\}
    = \mathbb{R}^n.
\end{equation*}
Thus, we conclude that in the set $\{\zeta_1,\dots,\zeta_{n-1}\}$ it is possible to find $n-{r}$ functions, without loss of generality $\zeta_1,\dots,\zeta_{n-{r}}$, with the property that the $n$ differentials
\begin{align*}
    &d\zeta_1,\dots,d\zeta_{n-{r}}, \\
    &dh_1,\dots,dL_f^{{r}_1-1}h_1,\\
    &\dots,\\
    &dh_{p},\dots,dL_f^{{r}_{p}-1}h_{p}
\end{align*}
are linearly independent at $x_0$.
Since any other set of functions of the form $\zeta_i'(x) = \zeta_i(x) + c_i$, where $c_i$ is a constant, satisfies the same conditions:
\begin{align*}
    &\zeta_i'\in C^\bullet \iff \zeta_i\in C^\bullet \\
    &d\zeta_i'(x) = d\zeta_i(x) \\
    &\mathrm{dim}\{d\zeta_1'(x),\dots,d\zeta_{n-1}'(x)\} = \mathrm{dim}\{d\zeta_1(x),\dots,d\zeta_{n-1}(x)\} \\
    &L_g\zeta_i'(x) = d\zeta_i'(x)g(x) = d\zeta_i(x)g(x) = L_g\zeta_i(x) = 0,
\end{align*}
the value of these functions at the point $x_0$ can be chosen arbitrarily.

Define $\Phi:\mathbb{R}^n\to\mathbb{R}^n$ as
\begin{equation*}
    \Phi(x) = \mathrm{col}\left(\begin{array}{l}
        \zeta_1(x),\dots,\zeta_{n-{r}}(x),\\
        h_1(x),\dots,L_f^{{r}_1-1}h_1(x),\\
        \dots,\\
        h_{p}(x),\dots,L_f^{{r}_{p}-1}h_{p}(x)
    \end{array}\right)
\end{equation*}
where $\zeta_1,\dots,\zeta_{n-{r}}$ are ommited when ${r} = n$.
Since $\Phi$ is continuously differentiable and $\Phi(x_0)$ is nonsingular, $\Phi$ is a diffeomorphism on a neighborhood of $x_0$.

Let ${p} = {m}$, ${r}<n$, and the distribution $\Delta = \{g_1,\dots,g_{{m}}\}$ be involutive near $x_0$.
Since $J(x)$ is nonsingular by the definition of the relative degree, we have
\begin{equation*}
    {p}\le \mathrm{rank}(G(x_0)) \le {m} \le n.
\end{equation*}
Thus, $G(x_0)$ is nonsingular and so is $\Delta$.
Since $\Delta$ is nonsingular and involutive, by Frobenius' Theorem, there exist $n-{m}$ real-valued functions $\zeta_1(x),\dots,\zeta_{n-{m}}(x)$, defined in a neighborhood $\mathcal{D}$ of $x_0$, such that $d\zeta_1(x),\dots,d\zeta_{n-{m}}(x)$ are linearly independent for all $x\in \mathcal{D}$ and
\begin{align*}
    &\mathrm{span}\{d\zeta_1,\dots,d\zeta_{n-{m}}\} = \Delta^\perp 
    \iff L_{g_j}\zeta_i(x) = 0,\,\forall x\in \mathcal{D},\,\forall i\in\overline{1,n-{m}},\,\forall j\in\overline{1,{m}}.
\end{align*}
Consider now the codistribution
\begin{equation*}
    \Omega = \mathrm{span}\{dL_f^jh_i : 1\le i\le {p},\,0\le j\le {r}_i-1\}
\end{equation*}
which has dimension ${r}$.

Note that
\begin{equation}
    \begin{split}
        &\Delta(x_0)\cap\Omega^\perp(x_0) = \{0\} \\
        \iff &(\Delta(x_0)\cap\Omega^\perp(x_0))^\perp = \mathbb{R}^n \\
        \iff &\Delta^\perp(x_0) + \Omega(x_0) = \mathbb{R}^n \\
        \iff &\mathrm{dim}(\Delta^\perp(x_0)+\Omega(x_0)) = n.
    \end{split}
    \label{eqn:delta_perp_plus_omega_equals_r_n}
\end{equation}
For, if this were not true, there would exist a nonzero vector in $\Delta(x_0)$, i.e., a vector of the form
\begin{equation*}
    g = \sum_{i=1}^{p}c_ig_i(x_0) = G(x_0)\begin{bmatrix} c_1 \\ \vdots \\ c_{p} \end{bmatrix}
\end{equation*}
that would annihilate all vectors of $\Omega(x_0)$, but this is a contradiction, because
\begin{equation*}
    \begin{bmatrix} dL_f^{{r}_1-1}h_1(x_0) \\ \vdots \\ dL_f^{{r}_{p}-1}h_{p}(x_0) \end{bmatrix}g = J(x)\begin{bmatrix} c_1 \\ \vdots \\ c_{p} \end{bmatrix} = 0
\end{equation*}
implies $c_1=c_2=\cdots=c_{p}=0$, by the nonsingularity of $J(x_0)$.
Since 
\begin{align*}
    \mathrm{span}\{d\zeta_1,\dots,d\zeta_{n-{m}}\} &= \Delta^\perp \\
    \mathrm{dim}(\Delta^\perp(x_0) + \Omega(x_0)) &= n \\
    \mathrm{dim}(\Omega(x_0)) &= {r},
\end{align*}
we conclude that in the set $\{\zeta_1,\dots,\zeta_{n-{m}}\}$ it is possible to find $n-{r}$ functions, without loss of generality $\zeta_1,\dots,\zeta_{n-{r}}$, with the property that the $n$ differentials 
\begin{equation*}
    d\zeta_1,\dots,d\zeta_{n-{r}},dh_1,\dots,dL_f^{{r}_1-1}h_1,\dots,dh_{p},\dots,dL_f^{{r}_{p}-1}h_{p}
\end{equation*}
are linearly independent at $x_0$.
By construction, the functions $\zeta_1,\dots,\zeta_{n-{r}}$ are such that
\begin{equation*}
    d\zeta_i(x)g_j(x) = L_{g_j}\zeta_i(x) = 0
\end{equation*}
for all $x$ near $x_0$, all $1\le i\le n-{r}$, and all $1\le j\le {m}$.

\section{Proof of Lemma \ref{lem:condition_for_GN_equal_zero}}
\label{app:proof_of_lemma_about_condition_for_GN_equal_zero}

By the definition of the relative degree and $N(x)$, we have $(d\phi_i^jGN)(x) = (L_GL_f^{j-1}h_iN)(x) = 0$ for all $1\le i\le {p}$ and $1\le j\le {r}_i-1$ and 
\begin{equation*}
    \big(\mathrm{col}(d\phi_1^{{r}_1},\dots,d\phi_{p}^{{r}_{p}})GN\big)(x) = (JN)(x) = 0.
\end{equation*}
Thus, $(d\phi_{n-r+1:n}GN)(x) = 0$.
If $L_G\phi_{1:n-r}(x) = 0$, then
\begin{equation*}
    (d\Phi GN)(x)
    = \begin{bmatrix} L_G\phi_{1:n-r}N \\ d\phi_{n-r+1:n}GN \end{bmatrix}(x) = 0
\end{equation*}
and $(GN)(x) = 0$ becuase $d\Phi(x)$ is invertible.
If $(GN)(x) = 0$, then $(G_\eta N)(x) = (d\phi_{1:n-r}GN)(x) = 0$.
If $(GN)(x) \neq 0$, then $(d\Phi GN)(x) \neq 0$ and $(d\phi_{1:n-r}GN)(x) \neq 0$ because $d\Phi(x)$ is invertible and $(d\phi_{n-r+1:n}GN)(x) = 0$.

Assume that $G(x)$ is nonsingular on $\mathcal{D}$.
If ${p}={m}$, then $N(x) = 0$ and $(GN)(x) = 0$.
If ${p}<{m}$, then $(G^+GN)(x) = N(x) \neq 0$ and $(GN)(x) \neq 0$.
Thus, $(GN)(x)=0$ if and only if $p = m$.
Let $r = n$.
Then, the manifold \eqref{eqn:manifold_of_zero_dynamics} becomes $\mathcal{Z} = \{0\}$.
Thus, every solution of \eqref{eqn:zero_dynamics_original_coordinate} starting from 0 should stay at 0.
Therefore, $f^\circ(0) + (GN)(0)u_f(t) = 0$ for arbtrary $u_f(t)\in\mathbb{R}^m$, and it implies $(GN)(0) = 0$.

\section{Proof of Lemma \ref{lem:orthogonal_projectors}}
\label{app:proof_of_lemma_about_orthogonal_projectors}

1)
Denote $M_i = J_iN_{1:i-1}$ and let $1\le j<i\le {k}$.
Since
\begin{equation*}
    \mathcal{R}(J_j^T) \subset \mathcal{R}(J_{1:i-1}^T)\perp \mathcal{N}(J_{1:i-1})\subset\mathcal{N}(J_{1:j-1}),
\end{equation*}
we have $M_iM_j^T = J_iN_{1:i-1}J_j^T = 0$ and
\begin{equation*}
    P_iP_j = M_i^+M_iM_j^T(M_jM_j^T)^+M_j = 0.
\end{equation*}
Then, $P_iP_j=0$ for $1\le i\neq j\le {k}+1$ because orthogonal projectors are symmetric and idempotent.

2--6)
By \cite{Baerlocher2004}, $N_{1:i} = N_{1:i-1} - P_i$.
Then, $N_{1:i} = I_{{m}} - P_{1:i}$ and $J_i - J_iP_{1:i} = J_iN_{1:i} = 0$.
Next, $M_i = J_i - J_iP_{1:i-1} = J_iP_{1:i} - J_iP_{1:i-1} = J_iP_i$ and $\mathrm{rank}(P_i) = \mathrm{rank}(M_i^+M_i) = \mathrm{rank}(M_i) = \mathrm{rank}(J_iP_i) \le \mathrm{rank}(J_i) \le {p}_i$.
Finally, ${\rho}_{1:i} = \mathrm{rank}(P_{1:i}) = {m} - \mathrm{rank}(N_{1:i}) = \mathrm{rank}(J_{1:i}) \le {p}_{1:i}$ by the first property and the rank-nullity theorem.

\begin{center}
\begin{minipage}{0.52\textwidth}
\begin{algorithm}[H]
    \caption{Find $\begin{bmatrix} J(x) & 0 \end{bmatrix} = \bar{L}\bar{Q}$.}
    \begin{algorithmic}[1]
        \State $V = \mathrm{col}(v_1,\dots,v_{p}) \gets \begin{bmatrix} J(x) & 0 \end{bmatrix}\in\mathbb{R}^{{p}\times\max\{{p},{m}\}}$
        \State $\bar{Q} = \mathrm{col}(\bar{q}_1,\dots,\bar{q}_{p}) \gets 0 \in \mathbb{R}^{{p}\times\max\{{p},{m}\}}$
        \State $\bar{L} = [l_{ij}] \gets 0\in\mathbb{R}^{{p}\times {p}}$
        \For{$j = 1$ \textbf{to} ${p}$}
        \State $l_{jj} = \|v_j\|$
        \If{$l_{jj} > 0$}
        \State $\bar{q}_j = v_j / l_{jj}$
        \If{$j<{p}$}
        \For{$i = j + 1$ \textbf{to} ${p}$}
        \State $l_{ij} = \langle v_i,\bar{q}_j\rangle$
        \State $v_i = v_i - l_{ij}\bar{q}_j$
        \EndFor
        \EndIf
        \EndIf
        \EndFor
    \end{algorithmic}
    \label{alg:find_qr_decomposition}
\end{algorithm}
\end{minipage}
\end{center}

7) Run Algorithm \ref{alg:find_qr_decomposition}.
Then, we get $\begin{bmatrix} J(x) & 0 \end{bmatrix} = \bar{L}\bar{Q}$ where $\bar{L}$ is lower triangular; $l_{ii}\ge0$ for $1\le i\le {p}$; $l_{ij} = 0$ for $1\le i\le {p}$ if $l_{jj} = 0$; $\bar{q}_i = 0$ if and only if $l_{ii} = 0$; and nonzero rows of $\bar{Q}$ are orthonormal.
Observe that if ${p}>{m}$, then, by the 7-th line of Algorithm \ref{alg:find_qr_decomposition}, we have $\bar{Q} = \begin{bmatrix} \hat{Q} & 0 \end{bmatrix}$ where $\hat{Q}\in\mathbb{R}^{{p}\times {m}}$.
Also, the total number of nonzero diagonal entries of $\bar{L}$ is equal to ${\rho} = \mathrm{rank}(J(x)) \le \min\{{p},{m}\}$.
Let $a_1,\dots,a_{\rho}$ be such that $l_{a_ia_i} \neq 0$ for $1\le i\le {\rho}$.

Let $l_i$ and $q_i$ be the $i$-th column of $\bar{L}$ and the $i$-th row of $\hat{Q}$, respectively.
Define
\begin{align*}
    \tilde{L} &= \mathrm{row}\left(l_{a_1},\dots,l_{a_{\rho}},0_{m\times({m}-{\rho})}\right)\in\mathbb{R}^{{p}\times {m}} \\
    \check{Q} &= \mathrm{col}\left(q_{a_1},\dots,q_{a_{\rho}},0_{({m}-{\rho})\times {m}}\right) \in \mathbb{R}^{{m}\times {m}}.
\end{align*}
Then, we have
\begin{equation*}
    \begin{bmatrix} J(x) & 0 \end{bmatrix} 
    = \bar{L}\bar{Q}
    = \sum_{i=1}^{\rho} l_{a_i}\begin{bmatrix} q_{a_i} & 0 \end{bmatrix}
    = \begin{bmatrix} \tilde{L}\check{Q} & 0 \end{bmatrix}
\end{equation*}
and $J_i(x) = \sum_{j=1}^i\tilde{L}_{ij}\check{Q}_i$ for $1\le i\le k$ where $\tilde{L}_{ij}\in\mathbb{R}^{{p}_i\times {\rho}_i}$ is the $(i,j)$-th block of $\tilde{L}$; $\check{Q}_i\in\mathbb{R}^{{\rho}_i\times {m}}$ is the $i$-th row block of $\check{Q}$; $\mathrm{rank}(\tilde{L}_{ii}) = {\rho}_i$; and $\check{Q}_{{k}+1} = 0$.

If ${\rho}<{m}$, then let $\{p_1,\dots,p_{{m}-{\rho}}\} \subset\mathbb{R}^{{m}}$ be an orthonormal basis of $\mathcal{N}(J(x))$.
Construct $\tilde{Q}\in\mathbb{R}^{{m}\times {m}}$ by replacing zero rows of $\check{Q}$ with $p_1,\dots,p_{{m}-{\rho}}$.
Since $\mathcal{R}(\check{Q}^T) = \mathcal{R}(J^T(x)) \perp \mathcal{N}(J(x))$, $\tilde{Q}$ is orthogonal.
By letting $L_e(x) = \tilde{L}$ and $Q_e(x) = \tilde{Q}$, we get the full LQ decomposition \eqref{eqn:orthogonalization_of_J}.

Since $L_{ii}$ has full rank and $Q_i$ has orthonormal rows, we have $(L_{ii}Q_i)^+(L_{ii}Q_i) = Q_i^TL_{ii}^+L_{ii}Q_i^T = Q_i^TQ_i$ for $1\le i\le {k}$.
By the decomposition \eqref{eqn:orthogonalization_of_J} and the definition \eqref{eqn:definitions_when_r_j_equals_0}, $J_i(x) = \sum_{j=1}^iL_{ij}Q_j$ for $1\le i\le k$.
Thus, $J_1P_1 = J_1 = L_{11}Q_1$ and $P_1 = (L_{11}Q_1)^+(L_{11}Q_1) = Q_1^TQ_1$.
Assume that $J_jP_j = L_{jj}Q_j$ and $P_j = Q_j^TQ_j$ for $1\le j\le i-1$.
Then, 
\begin{equation*}
    J_iP_i = J_i - (L_{i1}Q_1 + \cdots + L_{i,i-1}Q_{i-1}) = L_{ii}Q_i
\end{equation*}
and $P_i = (L_{ii}Q_i)^+(L_{ii}Q_i) = Q_i^TQ_i$.
Since $P_{{k}+1} = I_{{m}} - P_{1:{k}} = I_{{m}} - \sum_{i=1}^{k}Q_i^TQ_i = Q_{{k}+1}^TQ_{{k}+1}$, we have $J_iP_j = (L_{i1}Q_1 + \cdots + L_{ii}Q_i)Q_j^TQ_j = L_{ij}Q_j$ for all $1\le i\le {k}$ and $1\le j\le {k}+1$.

\section{Proof of Lemma \ref{lem:control_input_for_canonical_prioritized_semilinear_form}}
\label{app:proof_of_lemma_about_control_input_for_canonical_prioritized_semilinear_form}

Since $\sum_{j=1}^{i-1}L_{ij}Q_ju_j^{\sf c} = J_iu_{1:i-1}^{\sf c}$, we have $u^{\sf c}(x,v,u_f)$ in the recursive form \eqref{eqn:control_input_for_canonical_linear_form_recursive} from \eqref{eqn:linearizing_control_input_u_i} and \eqref{eqn:canonical_prioritized_linearizer}.
Since $L_{ij}Q_ju_j^{\sf c} = L_{ij}Q_ju_{1:k}^{\sf c}$, we can formulate
\begin{align*}
    u_{1:k}^{\sf c} &= u_1^{\sf c} + \cdots + u_k^{\sf c} 
        = \sum_{i=1}^kQ_i^TL_{ii}^{+(\lambda_i)}\left(v_i - \kappa_i - \sum_{j=1}^{i-1}L_{ij}Q_ju_j^{\sf c}\right) \\
        &= Q^T\begin{bmatrix} L_{11}^{+(\lambda_1)}(v_1-\kappa_1) \\ L_{22}^{+(\lambda_2)}(v_2 - \kappa_2 - L_{21}Q_1u_1^{\sf c}) \\ \vdots \\ L_{kk}^{+(\lambda_k)}\left(v_k - \kappa_k - \sum_{i=1}^{k-1}L_{ki}Q_iu_i^{\sf c}\right) \end{bmatrix} 
        = Q^TL_D^{+(\lambda)}\begin{bmatrix} v_1-\kappa_1 \\ v_2-\kappa_2-L_{21}Q_1u_1^{\sf c} \\ \vdots \\ v_k - \kappa_k - \sum_{i=1}^{k-1}L_{ki}Q_iu_i^{\sf c} \end{bmatrix} \\
        &= Q^TL_D^{+(\lambda)}\begin{bmatrix} v_1-\kappa_1 \\ v_2-\kappa_2-L_{21}Q_1u_{1:k}^{\sf c} \\ \vdots \\ v_k - \kappa_k - \sum_{i=1}^{k-1}L_{ki}Q_iu_{1:k}^{\sf c} \end{bmatrix} 
        = Q^TL_D^{+(\lambda)}\left(v - \kappa - L_LQu_{1:k}^{\sf c}\right)
\end{align*}
and
\begin{equation*}
    Qu_{1:k}^{\sf c} = (I_\rho + L_D^{+(\lambda)} L_L)^{-1}L_D^{+(\lambda)} (v - \kappa).
\end{equation*}
Since $L_D^{+(\lambda)} L_L$ and $L_LL_D^{+(\lambda)}$ are strictly block lower triangular, we have
\begin{equation*}
    \begin{split}
        (I_\rho + L_D^{+(\lambda)} L_L)^{-1}L_D^{+(\lambda)} 
            &= \left[ I_\rho - L_D^{+(\lambda)} L_L + \cdots + (-L_D^{+(\lambda)} L_L)^{k-1}\right]L_D^{+(\lambda)} \\
            &= L_D^{+(\lambda)}\left[I_p - L_LL_D^{+(\lambda)} + \cdots + (-L_LL_D^{+(\lambda)})^{k-1}\right] \\
            &= L_D^{+(\lambda)}(I_p + L_LL_D^{+(\lambda)})^{-1}
    \end{split}
    \label{eqn:change_order_of_L_D_circledast}
\end{equation*}
and we find $u^{\sf c}(x,v,u_f)$ in the closed form \eqref{eqn:control_input_for_canonical_linear_form_closed}.

\section{Proof of Lemma \ref{lem:solution_of_proper_objective_function_equals_solution_of_multi_objective_optimization}}
\label{app:proof_of_lemma_about_solution_of_proper_objective_function_equals_solution_of_multi_objective_optimization}

Fix $x\in{\cal D}$ and $v\in\mathbb{R}^{p}$ and let $\phi_i(u) = \pi_i(x,v_i,u)$ for $1\le i\le k$.
By the dependence property of Definition \ref{def:proper_objective_function}, there exist unique $u_1^*,\dots,u_k^*\in\mathbb{R}^m$ satisfying
\begin{equation*}
    \{u_i^*\} = \argmin_{u\in\mathcal{R}(P_i(\mathbf{x}))}\phi_i(u_{1:i-1}^* + u) \quad (1\le i\le k)
\end{equation*}
where $u_{i:j}^* = u_i^* + \cdots + u_j^*$.
Then, by \eqref{eqn:proper_objective_function_dependence}, we have
\begin{subequations}\label{eqn:multi_objective_optimization}
\begin{align}
    \phi_1(u_{1:k}^*) &= \min\{\phi_1(u): u\in\mathbb{R}^m\} \label{eqn:multi_objective_optimization_phi_1}\\
    \phi_2(u_{1:k}^*) &= \min\{\phi_2(u): u\in u_1^* + \mathcal{R}(I_m-P_1(x)) \} \nonumber\\
    &= \min\{\phi_2(u): u\in\mathbb{R}^m \text{ and } \phi_1(u) = \phi_1(u_1^*)\} \label{eqn:multi_objective_optimization_phi_2}\\
    &\vdotswithin{=} \nonumber\\
    \phi_k(u_{1:k}^*) &= \min\{\phi_k(u) : u\in u_{1:k-1}^* + \mathcal{R}(I_m - P_{1:k-1}(x))\} \nonumber\\
    &= \min\{\phi_k(u) : u\in\mathbb{R}^m \text{ and } \phi_i(u) = \phi_i(u_{1:k}^*) \text{ for } 1\le i\le k-1 \}. \label{eqn:multi_objective_optimization_phi_k}
\end{align}
\end{subequations}
Thus, $u_{1:k}^*$ is a unique solution of \eqref{eqn:multi_objective_optimization_with_lexicographical_ordering}.

\section{Proof of Footnote \ref{foot:convergnece_and_boundedness_of_single_task}}
\label{app:proof_of_footnote_about_convergnece_and_boundedness_of_single_task}

While the assertion made in Footnote \ref{foot:convergnece_and_boundedness_of_single_task} is widely known and can be found in references such as \cite[\S 13.4.2]{Khalil2015} and \cite[\S 8]{Isidori1995}, we provide a proof here to offer both the background of this study and an introduction to the main result of this paper, Theorem \ref{thm:prioritized_output_tracking} and Corollary \ref{cor:ultimate_bound_of_tracking_errors}.
Moreover, the proof for the statement in Footnote \ref{foot:convergnece_and_boundedness_of_single_task} shares a similar logical flow with the proof of Theorem \ref{thm:prioritized_output_tracking}, so readers can gain insight into the proof of Theorem \ref{thm:prioritized_output_tracking} from this proof before we introduce various analysis tools in order to handle the imperfect inversion and the discontinuous orthogonalization for the prioritized output tracking control.

Since we assumed that $\dot{\eta} = f_\eta^\circ[\eta,0]$ is asymptotically stable, there exist a class ${\cal KL}$ function $\beta_\eta$ and a positive constant $\hat{\sf r}_\eta$ such that for every $\|\eta(0)\|<\hat{\sf r}_\eta$ there exists asolution $\eta:[0,\infty)\to\mathbb{R}^{n-r}$ of $\dot{\eta} = f_\eta^\circ[\eta,0]$ with the initial value $\eta(0)$ satisfying $\|\eta(t)\|\le \beta_\eta(\|\eta(0)\|,t)$ for all $t\ge0$ \cite[Lemma 4.5]{Khalil2015}.
Let $0<\bar{\sf r}_\eta<\hat{\sf r}_\eta$ be such that $\beta_\eta(\bar{\sf r}_\eta,0)<\hat{\sf r}_\eta$.
By the converse Lyapunov theorem \cite[Theorem 4.16]{Khalil2015}, there is a continuously differentiable function $V_\eta:\bar{\sf r}_\eta{\cal B}_{n-r}\to \mathbb{R}$ satisfying
\begin{gather*}
    \bar{\alpha}_1(\|\eta\|)\le V_\eta(\eta) \le \bar{\alpha}_2(\|\eta\|) \\
    \frac{\partial V_\eta}{\partial \eta}f_\eta[\eta,0] \le -\bar{\alpha}_3(\|\eta\|) \\
    \left\|\frac{\partial V_\eta}{\partial \eta}\right\| \le \bar{\alpha}_4(\|\eta\|)
\end{gather*}
for all $\eta\in \bar{\sf r}_\eta{\cal B}_{n-r}$ where $\bar{\alpha}_i$ for $1\le i\le 4$ are class ${\cal K}$ functions defined on $[0,\bar{\sf r}_\eta]$.
Since $A-BK$ is assumed to be Hurwitz, there exists a solution $X = X^T>0$ of the Lyapunov equation $$X(A-BK) + (A-BK)^TX = -2I_r.$$
Define $V_{\tilde{\xi}}(\tilde{\xi}) = \sqrt{\tilde{\xi}^TX\tilde{\xi}}$ for $\tilde{\xi}\in\mathbb{R}^r$.

Since we assumed that $\xi^\star(t)$ and $\kappa^\star(t)$ are bounded, there exist positive constants ${\sf M}_{\xi^\star}$ and ${\sf M}_{\kappa^\star}$ satisfying $\|\xi^\star(t)\|<{\sf M}_{\xi^\star}$ and $\|\kappa^\star(t)\|<{\sf M}_{\kappa^\star}$ for all $t\in[0,\infty)$.
Since the domain ${\cal D}$ of the diffeomorphism $\Phi$ is open and both $\Phi$ and $\Phi^{-1}$ are continuous, $\Phi({\cal D})\subset\mathbb{R}^n$ is open.
Let ${\cal C}\subset\Phi({\cal D})$ be a compact neighborhood of 0, $0<{\sf r}_\eta \le \bar{\sf r}_\eta$, and ${\sf M}_{\xi^\star}<{\sf r}_{\tilde{\xi}} + {\sf M}_{\xi^\star} = {\sf r}_{\xi}$ be such that
\begin{equation*}
    \tilde{\cal E} \coloneqq {\sf r}_\eta{\cal B}_{n-r}\times {\sf r}_{\tilde{\xi}}{\cal B}_r \subset {\cal E} \coloneqq {\sf r}_\eta{\cal B}_{n-r}\times{\sf r}_{\xi}{\cal B}_r\subset{\cal C}.
\end{equation*}
Note that ${\rm cl}({\cal E})$ is a compact subset of ${\cal C}$ and $\tilde{z} = (\eta,\tilde{\xi})\in\tilde{\cal E}$ implies $z = (\eta,\xi)\in{\cal E}$.
We propose a Lyapunov function candidate $V:\tilde{\cal E}\to\mathbb{R}$ for the entire system \eqref{eqn:closed_system_conventional_linear_form} as
\begin{equation*}
    V(\eta,\tilde{\xi}) = V_\eta(\eta) + {\sf w}V_{\tilde{\xi}}(\tilde{\xi})
\end{equation*}
with a weight ${\sf w}>0$.
Let $\delta_0 = \min\{{\sf r}_\eta,{\sf r}_{\tilde{\xi}}\}$.
Since $V$ is continuous and positive definite, there exist class ${\cal K}$ functions $\alpha_1$ and $\alpha_2$ defined on $[0,\delta_0]$ satisfying
\begin{equation*}
    \alpha_1(\|(\eta,\tilde{\xi})\|) \le V(\eta,\tilde{\xi}) \le \alpha_2(\|(\eta,\tilde{\xi})\|)
\end{equation*}
for all $(\eta,\tilde{\xi})\in\delta_0{\cal B}_n\subset\tilde{\cal E}$ \cite[Lemma 4.3]{Khalil2015}.

Assume that $f_{\eta}[z]$, $G_\eta[z]$, $J[z]$, and $\kappa[z]$ are continuously differentiable on $\Phi({\cal D})$.
Since $J[z]$ has full rank on $\Phi({\cal D})$, $J^+[z]$ keeps smoothness of $J[z]$ on $\Phi({\cal D})$.
Thus, $f_\eta^\circ = f_\eta - G_\eta J^+\kappa$ and $G_\eta J^+$ are continuously differentiable on $\Phi({\cal D})$.
Since ${\rm cl}({\cal E})$ is a compact subset of $\Phi({\cal D})$, there exist positive constants ${\sf M}_{GJ^+}$ and ${\sf L}_{f_\eta^\circ}$ satisfying
\begin{align*}
    \|(G_\eta J^+)[z]\| &\le {\sf M}_{G_\eta J^+} \\
    \|f_\eta^\circ[z_1] - f_\eta^\circ[z_2]\| &\le {\sf L}_{f_\eta^\circ}\|z_1-z_2\|
\end{align*}
for all $z,z_1,z_2\in{\cal E}$.
Also, $${\sf M}_{\bar{\alpha}_4} = \max_{\|\eta\|<{\sf r}_\eta}\bar{\alpha}_4(\|\eta\|)<\infty.$$
Then, we can compute the time derivative of $V$ along the trajectories of the system \eqref{eqn:closed_system_conventional_linear_form} as
\begin{align*}
    \dot{V} &= \frac{\partial V_\eta}{\partial \eta}f_\eta^\circ[\eta,0] + \frac{\partial V_\eta}{\partial\eta}(f_\eta^\circ[\eta,\xi]-f_\eta^\circ[\eta,0]) + \frac{\partial V_\eta}{\partial \eta}G_\eta J^+(-K\tilde{\xi}+\kappa^\star) \\
        &\quad + {\sf w}\frac{\tilde{\xi}^T(X(A-BK) + (A-BK)^TX)\tilde{\xi}}{2\sqrt{\tilde{\xi}^TX\tilde{\xi}}} \\
        &\le -\bar{\alpha}_3(\|\eta\|) + \bar{\alpha}_4(\|\eta\|){\sf L}_{f_\eta^\circ}(\|\tilde{\xi}\|+{\sf M}_{\xi^\star}) + \bar{\alpha}_4(\|\eta\|){\sf M}_{G_\eta J^+}(\|K\|\|\tilde{\xi}\|+{\sf M}_{\kappa^\star}) - {\sf w}\frac{\|\tilde{\xi}\|}{\sigma_{\max}^{1/2}(X)} \\
        &= -\bar{\alpha}_3(\|\eta\|) + ({\sf L}_{f_\eta^\circ}{\sf M}_{\xi^\star} + {\sf M}_{G_\eta J^+}{\sf M}_{\kappa^\star})\bar{\alpha}_4(\|\eta\|) - \left(\frac{{\sf w}}{\sigma_{\max}^{1/2}(X)} - \bar{\alpha}_4(\|\eta\|)({\sf L}_{f_\eta^\circ} + {\sf M}_{G_\eta J^+}\|K\|)\right)\|\tilde{\xi}\|
\end{align*}
for all ${\sf w}>0$.
Thus, for sufficiently large ${\sf w}>0$ and sufficiently small ${\sf M}_{\xi^\star}, {\sf M}_{\kappa^\star}>0$, one can always find ${\sf c}_1,{\sf c}_2>0$, $0<\delta<\delta_0$, and $0<\varepsilon<\alpha_2^{-1}(\alpha_1(\delta))$ satisfying
\begin{equation*}
    \dot{V} \le -W(\eta,\tilde{\xi}) \coloneqq -{\sf c}_1\bar{\alpha}_3(\|\eta\|) - {\sf c}_2\|\tilde{\xi}\|
\end{equation*}
for all $\varepsilon\le \|(\eta,\tilde{\xi})\|\le\delta$.
Therefore, by \cite[Theorem 4.18]{Khalil2015}, there exist class ${\cal KL}$ function $\beta$ and for every initial state $(\eta(0),\tilde{\xi}(0))$ satisfying $\|(\eta(0),\tilde{\xi}(0))\|<\alpha_2^{-1}(\alpha_1(\delta))$, there is $T_0>0$ (dependent on $(\eta(0),\tilde{\xi}(0))$ and $\varepsilon$) such that the solution of \eqref{eqn:closed_system_conventional_linear_form} satisfies
\begin{equation*}
    \|(\eta(t),\tilde{\xi}(t))\| \le \begin{dcases*} \beta(\|(\eta(0),\tilde{\xi}(0))\|,t), & $t\in[0,T_0]$ \\ \alpha_1^{-1}(\alpha_2(\varepsilon)), & $t\in[T_0,\infty)$. \end{dcases*}
\end{equation*}

\section{Proof of Lemma \ref{lem:solution_existence_of_differential_inclusion}}
\label{app:proof_of_lemma_about_solutionlem:solution_existence_of_differential_inclusion}

For simplicity, we write \eqref{eqn:differential_inclusion_of_prioritized_closed_system_on_omega_0} as $\dot{z}_0\in \mathcal{F}(t,z_0)$.
Since $\xi^\star(t)$ and $\kappa^\star(t)$ are bounded on $[0,\infty)$ and $(LQ)[z_0]$ and $\Gamma[z_0]$ are bounded on the compact set ${\mathcal{C}}_0$, $\mathcal{F}(t,z_0)$ is bounded on $[0,\infty)\times{\mathcal{C}}_0$.
Then, there exists a bounded upper semicontinuous set-valued map $\widetilde{\mathcal{F}}:\mathbb{R}\times\mathbb{R}^n\to2^{\mathbb{R}^n}$ with closed convex values satisfying $\widetilde{\mathcal{F}}(t,z_0) = \mathcal{F}(t,z_0)$ for all $(t,z_0)\in[0,\infty)\times {\mathcal{C}}_0$ \cite[Lemma 2.10]{Smirnov2002}.
It follows that for each $z_0^\circ \in \mathrm{int}({\mathcal{C}}_0)$, there exists an absolutely continuous function $z_0:[0,\infty)\to\mathbb{R}^n$ satisfying $z_0(0) = z_0^\circ$ and $\dot{z}_0(t)\in \widetilde{\mathcal{F}}(t,z_0(t))$ almost everywhere \cite[Chapter 4, Corollary 1.12, Exercise 1.14]{Clarke1998}, which concludes the proof. 

\section{Proof of Lemma \ref{lem:superset_of_differential_inclusion}}
\label{app:proof_of_lemma_about_superset_of_differential_inclusion}

First, define set-valued maps $\mathcal{M}_{ij}:{\Phi_0^{-1}({\cal C}_0)} \to 2^{\mathbb{R}^{m\times p_j}}$ for $1\le i,j\le k$ and $\mathcal{U}_{i_0+1:k+1}:[0,\infty)\times{\Phi_0^{-1}({\cal C}_0)} \to2^{\mathbb{R}^m}$ as:
\begin{align*}
    \mathcal{M}_{ij}(x) &\coloneqq \mathcal{K}[Q_i^TL_{ii}^T\Gamma_{ij}](x),\\
    \mathcal{U}_{i_0+1:k+1}(t,x) &\coloneqq \sum_{i=i_0+1}^k\sum_{j=1}^i\mathcal{M}_{ij}(x)(-K_j\tilde{\xi}_j(t,x) + \kappa_j^\star(t) + \kappa_j(x)) + N(x)\mathcal{U}_f.
\end{align*}
Since $N_0 = I_m - J_0^+J_0$ is smooth on ${\Phi_0^{-1}({\cal C}_0)}$ and $N_0 u_{i_0+1:k+1}^\pi = u_{i_0+1:k+1}^\pi$ for all $(t,x)$, we have $N_0 {\mathcal{U}}_{i_0+1:k+1} = {\mathcal{U}}_{i_0+1:k+1}$ on $[0,\infty)\times{\Phi_0^{-1}({\cal C}_0)}$.
Note also that ${\mathcal{U}}_{i_0+1:k+1}(t,x)$ is bounded on $[0,\infty)\times\Phi_0^{-1}({\mathcal{C}}_0)$ so that there exists a positive constant ${\sf M}_{\cal U}$ satisfying ${\cal U}_{i_0+1:k+1}(t,{\Phi_0^{-1}({\cal C}_0)})\subset{\sf M}_{\cal U}{\cal B}_m$ for all $t\in[0,\infty)$.
Thus, one can find  an upper semicontinuous set-valued map $\mathcal{U}:\Phi_0^{-1}({\mathcal{C}}_0)\to2^{\mathbb{R}^m}$ satisfying the following properties:
\begin{itemize}
    \item $\mathcal{U}(x)$ is a compact and convex neighborhood of 0 containing $\mathcal{U}_{i_0+1:k+1}(t,x)$ for all $(t,x)\in[0,\infty)\times\Phi_0^{-1}({\mathcal{C}}_0)$ (so that Item~(a) of the lemma holds); and
    \item Item~(b) of Lemma~\ref{lem:superset_of_differential_inclusion} is satisfied. 
\end{itemize}
While it is possible to define ${\cal U}$ as ${\cal U}(x) = {\sf M}_{\cal U}{\cal B}_m$ for all $x\in{\Phi_0^{-1}({\cal C}_0)}$, it should be noted that this choice may not satisfy Assumption \ref{asm:zero_dynamics_lyapunov}. Hence, careful consideration is required when selecting ${\cal U}$.
The proof will be completed if we show
\begin{equation*}
    \mathcal{K}[f_0^\circ + G_0^\circ(-K_0\tilde{\xi}_0 + \kappa_{0}^\star) + G_0 u_{i_0+1:k+1}^\pi] \subset f_0^\circ + G_0^\circ(-K_0 \tilde{\xi}_0+ \kappa_{0}^\star) + G_0\mathcal{U}_{i_0+1:k+1}
\end{equation*}
for all $(t,x)\in[0,\infty)\times\Phi_0^{-1}({\cal C}_0)$, by which Item~(c) of the lemma is derived.

Denote $f^* = f_0^\circ + G_0^\circ(-K_0\tilde{\xi}_0 + \kappa_0^\star) + G_0u_{i_0+1:k+1}^\pi$.
By observing
\begin{equation*}
    \overline{\mathrm{co}}\left(f^*\left(t,x+\frac{1}{a}\mathrm{cl}(\mathcal{B}_n)\right)\right)
        = \begin{dcases*} \bigcap_{1\le \delta<\infty}\overline{\mathrm{co}}(f^*(t,x+\delta\mathrm{cl}(\mathcal{B}_n))), & $a=1$ \\ \bigcap_{\frac{1}{a}\le \delta<\frac{1}{a-1}}\overline{\mathrm{co}}(f^*(t,x+\delta\mathrm{cl}(\mathcal{B}_n))), & $a\ge2$ \end{dcases*}
\end{equation*}
for $a\in\mathbb{N}$, we can rewrite the Krasovskii regularization of $f^*(t,x)$ as
\begin{equation*}
    {\cal K}[f^*](t,x) 
        = \bigcap_{\delta>0}\overline{\mathrm{co}}(f^*(t,x+\delta\mathrm{cl}(\mathcal{B}_n)))
        = \bigcap_{a\in\mathbb{N}}\overline{\mathrm{co}}\left(f^*\left(t,x+\frac{1}{a}\mathrm{cl}(\mathcal{B}_n)\right)\right).
\end{equation*}
Since ${\cal C}_0$ is compact and $\Phi_0^{-1}$ is continuous on $\Phi_0({\cal D}_0)$, $\Phi_0^{-1}({\cal C}_0)$ is compact \cite[Theorem 2.10]{Rudin1987}.
Since ${\cal D}_0$ is open and $\Phi_0^{-1}({\cal C}_0)\subset{\cal D}_0$, for every $x\in\Phi_0^{-1}({\cal C}_0)$ there exists $a_x\in\mathbb{N}$ such that $x+(1/a)\mathrm{cl}(\mathcal{B}_n) \subset {\cal D}_0$ for all $a\ge a_x$.
Since $\Gamma(x)$ is assumed to be bounded on every compact subset of ${\cal D}_0$ in Assumption \ref{asm:Gamma} and $x+(1/a)\mathrm{cl}(\mathcal{B}_n)\subset{\cal D}_0$ is compact, $f^*(t,x+(1/a)\mathrm{cl}(\mathcal{B}_n))$ is bounded for all $x\in\Phi_0^{-1}({\cal C}_0)$ and $a\ge a_x$.

\begin{lemma}
    \label{lem:convex_closure_equals_closure_of_convex_hull_for_bounded_set}
    If $A\subset\mathbb{R}^n$ is bounded, then $\overline{\rm co}(A) \coloneqq {\rm cl}({\rm co}(A)) = {\rm co}({\rm cl}(A))$.
    $\hfill\square$
\end{lemma}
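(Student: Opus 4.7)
The plan is to establish the two inclusions separately, relying on Carath\'eodory's theorem together with compactness of $\mathrm{cl}(A)$.

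First I would show $\mathrm{co}(\mathrm{cl}(A)) \subset \mathrm{cl}(\mathrm{co}(A))$. Since $A \subset \mathrm{co}(A)$, taking closures yields $\mathrm{cl}(A) \subset \mathrm{cl}(\mathrm{co}(A))$. The closure of a convex set in $\mathbb{R}^n$ is convex, so $\mathrm{cl}(\mathrm{co}(A))$ is a convex set containing $\mathrm{cl}(A)$; by definition of the convex hull as the smallest convex superset, $\mathrm{co}(\mathrm{cl}(A)) \subset \mathrm{cl}(\mathrm{co}(A))$. This direction does not require boundedness.

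Next I would establish the reverse inclusion $\mathrm{cl}(\mathrm{co}(A)) \subset \mathrm{co}(\mathrm{cl}(A))$, which is where the boundedness hypothesis is essential. The chain of reasoning is: $A \subset \mathrm{cl}(A)$ implies $\mathrm{co}(A) \subset \mathrm{co}(\mathrm{cl}(A))$, and then taking closures gives $\mathrm{cl}(\mathrm{co}(A)) \subset \mathrm{cl}(\mathrm{co}(\mathrm{cl}(A)))$. Thus it suffices to show that $\mathrm{co}(\mathrm{cl}(A))$ is itself closed, for then the outer closure can be dropped.

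The key step, and the only technical part, is closedness of $\mathrm{co}(\mathrm{cl}(A))$. Since $A$ is bounded, $\mathrm{cl}(A)$ is closed and bounded in $\mathbb{R}^n$, hence compact. By Carath\'eodory's theorem in $\mathbb{R}^n$, every point of $\mathrm{co}(\mathrm{cl}(A))$ is expressible as $\sum_{i=0}^{n}\lambda_i x_i$ with $x_i \in \mathrm{cl}(A)$ and $(\lambda_0,\dots,\lambda_n)$ in the standard simplex $\Delta_n \coloneqq \{\lambda \in [0,1]^{n+1} : \sum_i \lambda_i = 1\}$. Consider the map
\begin{equation*}
    \Psi: \Delta_n \times \mathrm{cl}(A)^{n+1} \to \mathbb{R}^n, \qquad \Psi(\lambda, x_0,\dots,x_n) = \sum_{i=0}^{n}\lambda_i x_i,
\end{equation*}
which is continuous. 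Its domain is a finite product of compact sets, hence compact, so its image $\mathrm{co}(\mathrm{cl}(A))$ is compact and in particular closed. This completes the proof, since combining the two inclusions gives the claimed equality.

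The only subtle point is the invocation of Carath\'eodory's theorem to reduce an arbitrary convex combination to one of at most $n+1$ terms; without this reduction the domain of $\Psi$ would not be compact and the argument would fail. No other obstacle is anticipated, as boundedness of $A$ directly delivers the compactness needed to push the continuous-image argument through.
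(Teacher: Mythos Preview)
Your proof is correct and follows essentially the same approach as the paper: both establish the two inclusions by (i) using that $\overline{\mathrm{co}}(A)$ is a closed convex set containing $\mathrm{cl}(A)$, and (ii) showing that $\mathrm{co}(\mathrm{cl}(A))$ is compact so that $\mathrm{cl}(\mathrm{co}(A))\subset\mathrm{cl}(\mathrm{co}(\mathrm{cl}(A)))=\mathrm{co}(\mathrm{cl}(A))$. The only difference is that the paper cites a reference for the fact that the convex hull of a compact set in $\mathbb{R}^n$ is compact, whereas you supply the standard Carath\'eodory-plus-continuous-image argument explicitly.
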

\begin{proof}
    If $A\subset\mathbb{R}^n$ is bounded, then $\mathrm{cl}(A)$ is compact and so is $\mathrm{co}(\mathrm{cl}(A))$ \cite[Corollary 1.1]{Smirnov2002}.
    Then, $\overline{\mathrm{co}}(A)\subset{\rm cl}({\rm co}(\mathrm{cl}(A))) = \mathrm{co}(\mathrm{cl}(A))$.
    Since $\overline{\mathrm{co}}(A)$ is convex, we have $$A\subset\mathrm{co}(A) \implies \mathrm{cl}(A)\subset\overline{\mathrm{co}}(A) \implies \mathrm{co}(\mathrm{cl}(A))\subset\overline{\mathrm{co}}(A).$$
    Thus, $\overline{\mathrm{co}}(A) = \mathrm{co}(\mathrm{cl}(A))$ for every bounded $A\subset\mathbb{R}^n$
\end{proof}

By Lemma \ref{lem:convex_closure_equals_closure_of_convex_hull_for_bounded_set}, we have
\begin{align*}
    {\cal K}[f^*](t,x) 
        &= \bigcap_{a\in\mathbb{N}}\overline{\mathrm{co}}\left(f^*\left(t,x+\frac{1}{a}\mathrm{cl}(\mathcal{B}_n)\right)\right) \\
        &= \bigcap_{a\in\mathbb{N}}\mathrm{co}\left(\mathrm{cl}\left(f^*\left(t,x+\frac{1}{a}\mathrm{cl}(\mathcal{B}_n)\right)\right)\right) \\
        &= \bigcap_{a\in\mathbb{N}}\mathrm{co}\left(\left\{\lim_{b\to\infty}f^*(t,x_b) : x_b \in x+\frac{1}{a}\mathrm{cl}(\mathcal{B}_n)\right\}\right) \\
        &\downarrow\quad \text{\cite[Lemma A.1]{Paden1987}} \\
        &= \mathrm{co}\left(\bigcap_{a\in\mathbb{N}}\left\{\lim_{b\to\infty}f^*(t,x_b) : x_b \in x+\frac{1}{a}\mathrm{cl}(\mathcal{B}_n)\right\}\right) \\
        &= \mathrm{co}\left(\left\{\lim_{b\to\infty}f^*(t,x_b) : x_b\to x\right\}\right).
\end{align*}
Since $Q_i^TL_{ii}^T\Gamma_{ij}$ for $1\le i,j\le k$ is locally bounded on ${\cal D}_0$ and $f_0$, $G_0$, $\kappa$, $\tilde{\xi}$, $L_{ij}Q_j$, and $\Gamma_{ij}$ for $1\le i,j\le i_0$ are continuous on ${\cal D}_0$, for each sequence $x_b\to x$ satisfying that $\lim_{b\to\infty}f^*(t,x_b)$ exists, there exists a subsequence $x_{b_c}\to x$ satisfying that $\lim_{c\to\infty}(Q_i^TL_{ii}^T\Gamma_{ij})(x_{b_c})$ and $\lim_{c\to\infty}u_f(t,x_{b_c})$ exist for all $1\le i\le k$ and $1\le j\le i$ such that
\begin{align*}
    \lim_{b\to\infty}f^*(t,x_b)
        &= f_0(x) + G_0(x)\sum_{i=1}^k\sum_{j=1}^i\lim_{c\to\infty}(Q_i^TL_{ii}^T\Gamma_{ij})(x_{b_c})(v_j(t,x)-\kappa_j(x)) \\
        &\quad + G_0(x)N(x)\lim_{c\to\infty}u_f(t,x_{b_c}) \\
        &= f_0^\circ(x) + G_0^\circ(x)v_0(t,x) + G_0(x)\sum_{i=i_0+1}^k\sum_{j=1}^i\lim_{c\to\infty}(Q_i^TL_{ii}^T\Gamma_{ij})(x_{b_c})(v_j(t,x) - \kappa_j(x)) \\
        &\quad + G_0(x)N(x)\lim_{c\to\infty}u_f(t,x_{b_c})
\end{align*}
where $v_i = -K_i\tilde{\xi}_i + \kappa_i^\star$ and $v_0 = {\rm col}(v_1,\dots,v_{i_0}) = -K_0\tilde{\xi}_0 + \kappa_0^\star$.
It follows that
\begin{align*}
    {\cal K}[f^*](t,x)
        &= \mathrm{co}\left(\left\{\lim_{b\to\infty}f^*(t,x_b) : x_b\to x\right\}\right) \\
        &= \mathrm{co}\bigg(\bigg\{f_0^\circ(x) + G_0^\circ(x)v_0(t,x) + \sum_{i=i_0+1}^k\sum_{j=1}^iG_0(x)\lim_{c\to\infty}(Q_i^TL_{ii}^T\Gamma_{ij})(x_{b_c})(v_j(t,x)-\kappa_j(x)) \\
        &~~~~~~~~~~~~+ G_0(x)N(x)\lim_{c\to\infty}u_f(t,x_{b_c}) : x_b\to x,\,\{x_{b_c}\}\subset\{x_b\}\bigg\}\bigg) \\
        &\subset \mathrm{co}\bigg(\bigg\{f_0^\circ(x) + G_0^\circ(x)v_0(t,x) + \sum_{i=i_0+1}^k\sum_{j=1}^iG_0(x)\lim_{b\to\infty}(Q_i^TL_{ii}^T\Gamma_{ij})(x_b)(v_j(t,x)-\kappa_j(x)) \\
        &~~~~~~~~~~~~+ G_0(x)N(x)\lim_{b\to\infty}u_f(t,x_b) : x_b\to x\bigg\}\bigg) \\
        &\subset \mathrm{co}\bigg(f_0^\circ(x) + G_0^\circ(x)v_0(t,x) + \bigg\{\sum_{i=i_0+1}^k\sum_{j=1}^iG_0(x)\lim_{b\to\infty}(Q_i^TL_{ii}^T\Gamma_{ij})(x_b)(v_j(t,x)-\kappa_j(x)) \\
        &~~~~~~~~~~~~~~~~~~~~~~~~~~~~~~~~~~~~~~~~~~+ G_0(x)N(x)\lim_{b\to\infty}u_f(t,x_b) : x_b\to x\bigg\}\bigg) \\
        &= f_0^\circ(x) + G_0^\circ(x)v_0(t,x) + G_0(x)N(x){\rm co}\left(\left\{\lim_{b\to\infty}u_f(t,x_b) : x_b\to x\right\}\right) \\
        &~~~~~+ \sum_{i=i_0+1}^k\sum_{j=1}^iG_0(x)\mathrm{co}\left(\left\{\lim_{b\to\infty}(Q_i^TL_{ii}^T\Gamma_{ij})(x_b) : x_b\to x\right\}\right)(v_j(t,x)-\kappa_j(x))
\end{align*}
where the last equality stems from the fact that the convex hull of the sum of two sets is equal to the sum of convex hulls of sets.
By applying the same prosedure that shows
\begin{equation*}
    {\cal K}[f^*](t,x) = {\rm co}\left(\left\{\lim_{b\to\infty}f^*(t,x_b) : x_b\to x\right\}\right),
\end{equation*}
we can derive
\begin{align*}
    {\cal K}[Q_i^TL_{ii}^T\Gamma_{ij}](x) &= \mathrm{co}\left(\left\{\lim_{b\to\infty}(Q_i^TL_{ii}^T\Gamma_{ij})(x_b) : x_b\to x\right\}\right) = {\cal M}_{ij}(x) \\
    {\cal K}[u_f](t,x) &= {\rm co}\left(\left\{\lim_{b\to\infty}u_f(t,x_b) : x_b\to x\right\}\right) \subset {\cal U}_f.
\end{align*}
Therefore, we have
\begin{align*}
    {\cal K}[f^*](t,x)
        &\subset f_0^\circ(x) + G_0^\circ(x)v_0(t,x) + \sum_{i=i_0+1}^k\sum_{j=1}^iG_0(x){\cal M}_{ij}(x)(v_j(t,x)-\kappa_j(x)) + G_0(x)N(x){\cal U}_f \\
        &= f_0^\circ(x) + G_0^\circ(x)(-K_0\tilde{\xi}_0(t,x) + \kappa_0^\star(t)) + G_0(x){\cal U}_{i_0+1:k+1}(t,x)
\end{align*}
for all $(t,x)\in[0,\infty)\times\Phi_0^{-1}({\cal C}_0)$.

\section{Proof of the Last Step of Theorem \ref{thm:prioritized_output_tracking}}
\label{app:proof_of_footnote_about_find_ultimate_bound}

The proof of the last step of Theorem \ref{thm:prioritized_output_tracking} shares the same idea of the proof of \cite[Theorem 4.18]{Khalil2015}.
Nevertheless, we present it here in order to show how \cite[Theorem 4.18]{Khalil2015} can be adapted to a Lyapunov function candidate $V(\eta_0,\tilde{\xi}_0)$ whose time derivative exists only almost everywhere by using the fundamental theorem of calculus \cite[Theorem 7.20]{Rudin1987}.
Since 
\begin{equation*}
    0<\varepsilon,\,\|(\eta_0(0),\tilde{\xi}_0(0))\|<\alpha_2^{-1}(\alpha_1({\delta}))\le \delta,
\end{equation*}
we have
\begin{equation*}
    0 < \sigma \coloneqq \alpha_2(\varepsilon),\,\alpha_2(\|(\eta_0(0),\tilde{\xi}_0(0))\|) < \rho \coloneqq \alpha_1({\delta}) \le \alpha_2(\delta).
\end{equation*}
Define
\begin{equation*}
    \Omega_\bullet = \{(\eta_0,\tilde{\xi}_0)\in{\delta}\mathcal{B}_n : V(\eta_0,\tilde{\xi}_0)<\bullet\}
\end{equation*}
for $\bullet\in\{\sigma,\rho\}$.
Since $V(\eta_0,\tilde{\xi}_0)$ is continuous on $\delta{\cal B}_n$, $\Omega_\bullet$ is open and we have
\begin{align*}
    {\rm bd}(\Omega_\bullet) &= \{(\eta_0,\tilde{\xi}_0)\in\delta{\cal B}_n : V(\eta_0,\tilde{\xi}_0) = \bullet\} \\
    {\rm cl}(\Omega_\bullet) &= \{(\eta_0,\tilde{\xi}_0)\in\delta{\cal B}_n : V(\eta_0,\tilde{\xi}_0) \le \bullet\}
\end{align*}
for $\bullet\in\{\sigma,\rho\}$.
Then,
\begin{equation*}
    \varepsilon{\cal B}_n\subset\Omega_\sigma\subset{\rm cl}(\Omega_\sigma)\subset\Omega_\rho\subset\{\alpha_1(\|(\eta_0,\tilde{\xi}_0)\|)<\rho\}=\delta{\cal B}_n.
\end{equation*}
Define 
\begin{equation*}
    W(\eta_0,\tilde{\xi}_0) \coloneqq {\sf c}_1\bar{\alpha}_3(\|\eta_0\|) + {\sf c}_2\|\tilde{\xi}_0\|.
\end{equation*}
Since $W$ is continuous and positive definite on $\delta{\cal B}_n$, we have
\begin{equation*}
    c \coloneqq \min_{\varepsilon\le\|(\eta_0,\tilde{\xi}_0)\|\le{\delta}}W(\eta_0,\tilde{\xi}_0) > 0.
\end{equation*}

The sets $\Omega_\sigma$ and $\Omega_\rho$ have the property that every Krasovskii solution $(\eta_0(t),\tilde{\xi}_0(t))$ starting in ${\rm cl}(\Omega_\sigma)$ or $\Omega_\rho$ cannot leave ${\rm cl}(\Omega_\sigma)$ or $\Omega_\rho$, respectively.
To see this, firstly let $(\eta_0(0),\tilde{\xi}_0(0))\in{\rm cl}(\Omega_\sigma)$ and suppose that there exists $T>0$ satisfying $(\eta_0(T),\tilde{\xi}_0(T))\in\Omega_\rho\setminus{\rm cl}(\Omega_\sigma)$.
Then, there exist $0<t_1<t_2\le T$ such that $(\eta_0(t_1),\tilde{\xi}_0(t_1))\in{\rm bd}(\Omega_\sigma)$ and $(\eta_0(t),\tilde{\xi}_0(t))\in\Omega_\rho\setminus{\rm cl}(\Omega_\sigma)$ for all $t\in(t_1,t_2]$.
It follows that
\begin{equation*}
    \dot{V}(t,\eta_0(t),\tilde{\xi}_0(t)) \le -W(\eta_0(t),\tilde{\xi}_0(t)) \le -c
\end{equation*}
for almost all $t\in[t_1,t_2]$.
Thus, we have
\begin{align*}
    V(\eta_0(t),\tilde{\xi}_0(t)) &= V(\eta_0(t_1),\tilde{\xi}_0(t_1)) + \int_{t_1}^t\dot{V}(s,\eta_0(s),\tilde{\xi}_0(s))ds \\
        &\le V(\eta_0(t_1),\tilde{\xi}_0(t_1)) - c(t-t_1) \\
        &\le V(\eta_0(t_1),\tilde{\xi}_0(t_1)) \\
        &= \sigma
\end{align*}
and $(\eta_0(t),\tilde{\xi}_0(t))\in{\rm cl}(\Omega_\sigma)$ for all $t\in[t_1,t_2]$, a contradiction.
Therefore, $(\eta_0(t),\tilde{\xi}_0(t))\in{\rm cl}(\Omega_\sigma)$ for all $t\in[0,\infty)$.
Secondly, let $(\eta_0(0),\tilde{\xi}_0(0))\in\Omega_\rho$ and suppose that there exists $T>0$ satisfying $(\eta_0(T),\tilde{\xi}_0(T))\in{\rm bd}(\Omega_\rho)$.
Then, there exist $0<t_1<t_2\le T$ such that $(\eta_0(t),\tilde{\xi}_0(t))\in\Omega_\rho\setminus{\rm cl}(\Omega_\sigma)$ for all $t\in[t_1,t_2)$ and $(\eta_0(t_2),\tilde{\xi}_0(t_2))\in{\rm bd}(\Omega_\rho)$.
Similarly as above, we have
\begin{align*}
    \rho &= V(\eta(t_2),\tilde{\xi}(t_2)) \\
        &= V(\eta(t_1),\tilde{\xi}(t_1)) + \int_{t_1}^{t_2}\dot{V}(s,\eta(s),\tilde{\xi}(s))ds \\
        &\le V(\eta(t_1),\tilde{\xi}(t_1)) - c(t_2-t_1) \\
        &< V(\eta(t_1),\tilde{\xi}(t_1)) \\
        &<\rho,
\end{align*}
a contradiction.
Therefore, $(\eta_0(t),\tilde{\xi}_0(t))\in\Omega_\rho$ for all $t\in[0,\infty)$.

Since
\begin{equation*}
    \|(\eta_0(0),\tilde{\xi}_0(0))\| < \alpha_2^{-1}(\alpha_1({\delta})) \implies (\eta_0(0),\tilde{\xi}_0(0))\in\Omega_\rho,
\end{equation*}
we conclude that $(\eta_0(t),\tilde{\xi}_0(t))\in\Omega_\rho$ for all $t\ge0$.
Every Krasovskii solution starting in $\Omega_\rho\setminus{\rm cl}(\Omega_\sigma)$ must enter ${\rm cl}(\Omega_\sigma)$ in finite time because $\dot{V}\le -c<0$ on $\Omega_\rho\setminus{\rm cl}(\Omega_\sigma)$, so that
\begin{equation*}
    V(\eta_0(t),\tilde{\xi}_0(t)) \le V(\eta_0(0),\tilde{\xi}_0(0)) - ct \le \rho - ct
\end{equation*}
which shows that $V(\eta_0(t),\tilde{\xi}_0(t))$ reduces to $\sigma$ within the time interval $[0,(\rho-\sigma)/c]$.
For a Krasovskii solution starting inside ${\rm cl}(\Omega_\sigma)$, inequality
\begin{equation*}
    \|(\eta_0(t),\tilde{\xi}_0(t))\| \le \alpha_1^{-1}(\alpha_2(\varepsilon))
\end{equation*}
is satisfied for all $t\ge0$, because $\mathrm{cl}(\Omega_\sigma)\subset\{\alpha_1(\|(\eta_0,\tilde{\xi}_0)\|)\le\sigma = \alpha_2(\varepsilon)\}$.
For a Krasovskii solution starting inside $\Omega_\rho$, but outside $\mathrm{cl}(\Omega_\sigma)$, let $T_0$ be the first time it enters $\mathrm{cl}(\Omega_\sigma)$.
For almost all $t\in[0,T_0]$,
\begin{equation*}
    \dot{V} \le -W(\eta_0,\tilde{\xi}_0) \le -\alpha_3(\|(\eta_0,\tilde{\xi}_0)\|) \le -\alpha_3(\alpha_2^{-1}(V)) \coloneqq -\alpha(V)
\end{equation*}
where $\alpha_3$ and $\alpha$ are class $\mathcal{K}$ functions.
The existence of $\alpha_3$ follows from \cite[Lemma 4.3]{Khalil2015}.
Then, we have
\begin{align*}
    D^+V(\eta_0(t),\tilde{\xi}_0(t)) 
    &= \limsup_{h\downarrow0}\frac{V(\eta_0(t+h),\tilde{\xi}_0(t+h)) - V(\eta_0(t),\tilde{\xi}_0(t))}{h} \\
    &= \limsup_{h\downarrow0}\frac{1}{h}\int_t^{t+h}\dot{V}(s,\eta_0(s),\tilde{\xi}_0(s))ds \\
    &\le \limsup_{h\downarrow0}\frac{1}{h}\int_t^{t+h}(-\alpha(V(\eta_0(s),\tilde{\xi}_0(s))))ds \\
    &\downarrow\quad \dot{V}<0 \implies V(\eta_0(s),\tilde{\xi}_0(s)) \ge V(\eta_0(t+h),\tilde{\xi}_0(t+h)) \\
    &\le \limsup_{h\downarrow0}\frac{1}{h}\int_t^{t+h}(-\alpha(V(\eta_0(t+h),\tilde{\xi}_0(t+h))))ds \\
    &= \limsup_{h\downarrow0}\frac{1}{h}h(-\alpha(V(\eta_0(t+h),\tilde{\xi}_0(t+h)))) \\
    &= -\alpha(V(\eta_0(t),\tilde{\xi}_0(t)))
\end{align*}
for all $t\in[0,T_0)$.
Without loss of generality, we assume that $\alpha(\cdot)$ is locally Lipschitz.
Then, there exists a unique solution $v(t)$ satisfying
\begin{equation*}
    \dot{v} = -\alpha(v),\quad v(0) = V(\eta_0(0),\tilde{\xi}_0(0)).
\end{equation*}
By the comparison lemma \cite[Lemma 3.4]{Khalil2015},
\begin{equation*}
    V(\eta_0(t),\tilde{\xi}_0(t)) \le v(t),\quad \forall t\in[0,T_0].
\end{equation*}
By \cite[Lemma 4.4]{Khalil2015}, there exists a class $\mathcal{KL}$ function $\beta_0(r,s)$ defined on $[0,{\delta}]\times[0,\infty)$ such that
\begin{equation*}
    V(\eta_0(t),\tilde{\xi}_0(t)) \le \beta_0(V(\eta_0(0),\tilde{\xi}_0(0)),t),\quad\forall t\in[0,T_0].
\end{equation*}
Defining $\beta(r,s) = \alpha_1^{-1}(\beta_0(\alpha_2(r),s))$, we obtain
\begin{equation*}
    \|(\eta_0(t),\tilde{\xi}_0(t))\| \le \beta(\|(\eta_0(0),\tilde{\xi}_0(0))\|,t),\quad \forall t\in[0,T_0].
\end{equation*}

\section{Proof of the Omitted Part of Corollary \ref{cor:ultimate_bound_of_tracking_errors}}
\label{app:proof_of_the_omitted_part_of_ultimate_bound_of_tracking_errors}

Let
\begin{equation*}
    \tilde{\sf c}_i = \frac{\tilde{\sigma}_i}{\vartheta_i}\sum_{j=1}^i{\sf M}_{E_{ij}}^\circ\big({\sf L}_{\kappa_j}(\alpha_1^{-1}(\alpha_2(\varepsilon)) + {\sf M}_{\xi_0^\star}) + {\sf M}_{\kappa_j^\star}\big)
\end{equation*}
for $1\le i\le i_1$.
From \eqref{eqn:upper_bound_of_norm_of_tracking_error}, we have
\begin{equation*}
    \|\tilde{\xi}_1(t)\| \le \sqrt{\tilde{\sigma}_1}\|\tilde{\xi}_1(t_0)\|e^{-\vartheta_1(t-T_1)} + \tilde{\sf c}_1.
\end{equation*}
Thus, \eqref{eqn:upper_bound_of_zeta_i} holds for $i=1$ with ${\sf a}_{11} = \sqrt{\tilde{\sigma}_1}$, ${\sf b}_{11} = \vartheta_1$, and ${\sf c}_1 = \tilde{\sf c}_1$.
Assume that
\begin{equation*}
    \|\tilde{\xi}_j(t)\| \le \sum_{a=1}^j{\sf a}_{ja}\|\tilde{\xi}_a(T_1)\|e^{-{\sf b}_{ja}(t-T_1)} + {\sf c}_j
\end{equation*}
holds for all $t\in[T_1,\infty)$ and $1\le j < i$.
Let
\begin{equation*}
    {\sf c}_i = \frac{\tilde{\sigma}_i}{\vartheta_i}\sum_{j=1}^{i-1}{\sf M}_{E_{ij}}^\circ\|K_j\|{\sf c}_j + \tilde{\sf c}_i.
\end{equation*}
Then, we have from \eqref{eqn:upper_bound_of_norm_of_tracking_error}
\begin{align*}
    \|\tilde{\xi}_i(t)\| 
        &\le \sqrt{\tilde{\sigma}_i}\|\tilde{\xi}_i(T_1)\|e^{-\vartheta_i(t-T_1)} + \tilde{\sigma}_i\sum_{j=1}^{i-1}{\sf M}_{E_{ij}}^\circ\|K_j\|\int_{T_1}^t\left(\sum_{a=1}^j{\sf a}_{ja}\|\tilde{\xi}_a(T_1)\|e^{-{\sf b}_{ja}(s-T_1)} + {\sf c}_j\right)e^{-\vartheta_i(t-s)}ds + \tilde{\sf c}_i \\
        &\le \sqrt{\tilde{\sigma}_i}\|\tilde{\xi}_i(T_1)\|e^{-\vartheta_i(t-T_1)} + \tilde{\sigma}_i\sum_{j=1}^{i-1}{\sf M}_{E_{ij}}^\circ\|K_j\|\left(\sum_{a=1}^j{\sf a}_{ja}\|\tilde{\xi}_a(T_1)\|e^{{\sf b}_{ja}T_1-\vartheta_it}\int_{T_1}^te^{(\vartheta_i-{\sf b}_{ja})s}ds + \frac{{\sf c}_j}{\vartheta_i}\right) + \tilde{\sf c}_i \\
        &= \sqrt{\tilde{\sigma}_i}\|\tilde{\xi}_i(T_1)\|e^{-\vartheta_i(t-T_1)} + \tilde{\sigma}_i\sum_{j=1}^{i-1}{\sf M}_{E_{ij}}^\circ\|K_j\|\sum_{a=1}^j{\sf a}_{ja}\|\tilde{\xi}_a(T_1)\|e^{{\sf b}_{ja}T_1-\vartheta_it}\int_{T_1}^te^{(\vartheta_i-{\sf b}_{ja})s}ds + {\sf c}_i \\
        &= \sqrt{\tilde{\sigma}_i}\|\tilde{\xi}_i(T_1)\|e^{-\vartheta_i(t-T_1)} + \sum_{a=1}^{i-1}\sum_{j=1}^a\tilde{\sigma}_i{\sf M}_{E_{ia}}^\circ\|K_a\|{\sf a}_{aj}\|\tilde{\xi}_j(T_1)\|e^{{\sf b}_{aj}T_1-\vartheta_it}\int_{T_1}^te^{(\vartheta_i-{\sf b}_{aj})s}ds + {\sf c}_i \\
        &= \sqrt{\tilde{\sigma}_i}\|\tilde{\xi}_i(T_1)\|e^{-\vartheta_i(t-T_1)} + \sum_{j=1}^{i-1}\sum_{a=j}^{i-1}\tilde{\sigma}_i{\sf M}_{E_{ia}}^\circ\|K_a\|{\sf a}_{aj}\|\tilde{\xi}_j(T_1)\|e^{{\sf b}_{aj}T_1-\vartheta_it}\int_{T_1}^te^{(\vartheta_i-{\sf b}_{aj})s}ds + {\sf c}_i \\
\end{align*}
for all $t\in[T_1,\infty)$.
We can compute
\begin{equation*}
    e^{{\sf b}_{aj}T_1-\vartheta_it}\int_{T_1}^te^{(\vartheta_i-{\sf b}_{aj})s}ds = (t-T_1)e^{-\vartheta_i(t-T_1)} \le \left(\max_{t\in[T_1,\infty)}(t-T_1)e^{-(\vartheta_i/2)(t-T_1)}\right)e^{-(\vartheta_i/2)(t-T_1)}
\end{equation*}
if $\vartheta_i = {\sf b}_{aj}$ and
\begin{equation*}
    e^{{\sf b}_{aj}T_1-\vartheta_it}\int_{T_1}^te^{(\vartheta_i-{\sf b}_{aj})s}ds = \frac{e^{-{\sf b}_{aj}(t-T_1)} - e^{-\vartheta_i(t-T_1)}}{\vartheta_i-{\sf b}_{aj}} < \begin{dcases*} \frac{e^{-{\sf b}_{aj}(t-T_1)}}{\vartheta_i-{\sf b}_{aj}}, & $\vartheta_i>{\sf b}_{aj}$ \\ \frac{e^{-\vartheta_i(t-T_1)}}{{\sf b}_{aj}-\vartheta_i}, & $\vartheta_i < {\sf b}_{aj}$ \end{dcases*}
\end{equation*}
if $\vartheta_i \neq {\sf b}_{aj}$.
Therefore, there exist ${\sf a}_{ija}, {\sf b}_{ija}>0$ satisfying
\begin{equation*}
    e^{{\sf b}_{aj}T_1-\vartheta_it}\int_{T_1}^te^{(\vartheta_i-{\sf b}_{aj})s}ds \le {\sf a}_{ija}e^{-{\sf b}_{ija(t-T_1)}}
\end{equation*}
for all $t\in[T_1,\infty)$ and $1\le j\le a\le i-1$.
Let
\begin{align*}
    {\sf a}_{ij} &= \begin{dcases*} \sqrt{\tilde{\sigma}_i}, & $i = j$ \\ \tilde{\sigma}_i\sum_{a=j}^{i-1}{\sf M}_{E_{ia}}^\circ\|K_a\|{\sf a}_{aj}{\sf a}_{ija}, & $j<i$ \end{dcases*} \\
    {\sf b}_{ij} &= \begin{dcases*} \vartheta_i, & $i=j$ \\ \min\{{\sf b}_{ija} : j\le a\le i-1\}, & $j<i$ \end{dcases*}
\end{align*}
for $1\le j\le i$.
Then, we have
\begin{align*}
    \|\tilde{\xi}_i(t)\| 
        &\le \sqrt{\tilde{\sigma}_i}\|\tilde{\xi}_i(T_1)\|e^{-\vartheta_i(t-T_1)} + \sum_{j=1}^{i-1}\sum_{a=j}^{i-1}\tilde{\sigma}_i{\sf M}_{E_{ia}}^\circ\|K_a\|{\sf a}_{aj}\|\tilde{\xi}_j(T_1)\|{\sf a}_{ija}e^{-{\sf b}_{ija}(t-T_1)} + {\sf c}_i \\
        &\le {\sf a}_{ii}\|\tilde{\xi}_i(T_1)\|e^{-{\sf b}_{ii}(t-T_1)} + \sum_{j=1}^{i-1}\left(\tilde{\sigma}_i\sum_{a=j}^{i-1}{\sf M}_{E_{ia}}^\circ\|K_a\|{\sf a}_{aj}{\sf a}_{ija}\right)\|\tilde{\xi}_j(T_1)\|e^{-{\sf b}_{ij}(t-T_1)} + {\sf c}_i \\
        &= \sum_{j=1}^i{\sf a}_{ij}\|\tilde{\xi}_j(T_1)\|e^{-{\sf b}_{ij}(t-T_1)} + {\sf c}_i
\end{align*}
for all $t\in[T_1,\infty)$.

\bibliographystyle{IEEEtran}

\begin{thebibliography}{10}
\providecommand{\url}[1]{#1}
\csname url@samestyle\endcsname
\providecommand{\newblock}{\relax}
\providecommand{\bibinfo}[2]{#2}
\providecommand{\BIBentrySTDinterwordspacing}{\spaceskip=0pt\relax}
\providecommand{\BIBentryALTinterwordstretchfactor}{4}
\providecommand{\BIBentryALTinterwordspacing}{\spaceskip=\fontdimen2\font plus
\BIBentryALTinterwordstretchfactor\fontdimen3\font minus
  \fontdimen4\font\relax}
\providecommand{\BIBforeignlanguage}[2]{{%
\expandafter\ifx\csname l@#1\endcsname\relax
\typeout{** WARNING: IEEEtran.bst: No hyphenation pattern has been}%
\typeout{** loaded for the language `#1'. Using the pattern for}%
\typeout{** the default language instead.}%
\else
\language=\csname l@#1\endcsname
\fi
#2}}
\providecommand{\BIBdecl}{\relax}
\BIBdecl

\bibitem{An2022}
S.-i. An, G.~Park, and D.~Lee, ``When prioritization meets input-output
  linearization: A prioritized multi-objective control and case study with
  two-link manipulator,'' in \emph{Proc. IEEE Conf. Decision Contr.}, 2022, pp.
  2560--2565.

\bibitem{An2023}
------, ``Strictly positive realness-based feedback gain design under imperfect
  input-output feedback linearization in prioritized control problem,''
  {submitted for publication}.

\bibitem{Brockett1978}
R.~W. Brockett, ``Feedback invariants for nonlinear systems,'' in \emph{Proc.
  IFAC World Congress}, 1978, pp. 1115--1120.

\bibitem{Isidori1981}
A.~Isidori, A.~Krener, C.~Gori-Giorgi, and S.~Monaco, ``Nonlinear decoupling
  via feedback: a differential geometric approach,'' \emph{IEEE Trans. Automat.
  Contr.}, vol.~26, no.~2, pp. 331--345, 1981.

\bibitem{Byrnes1984}
C.~I. Byrnes and A.~Isidori, ``A frequency domain philosophy for nonlinear
  systems, with applications to stabilization and to adaptive control,'' in
  \emph{Proc. IEEE Conf. Decision Contr.}, 1984, pp. 1569--1573.

\bibitem{Byrnes1988}
------, ``Local stabilization of minimum-phase nonlinear systems,'' \emph{Syst.
  Contr. Lett.}, vol.~11, no.~1, pp. 9--17, 1988.

\bibitem{Khalil2015}
H.~K. Khalil, \emph{{Nonlinear Systems}}.\hskip 1em plus 0.5em minus
  0.4em\relax Pearson New York, 2015.

\bibitem{Isidori1995}
A.~Isidori, \emph{{Nonlinear Control Systems}}.\hskip 1em plus 0.5em minus
  0.4em\relax Springer-Verlag, London, 1995.

\bibitem{Silverman1969}
L.~M. Silverman, ``Inversion of multivariable linear systems,'' \emph{IEEE
  Trans. Automat. Contr.}, vol.~14, no.~3, pp. 270--276, 1969.

\bibitem{Hirschorn1979}
R.~M. Hirschorn, ``Invertibility of multivariable nonlinear control systems,''
  \emph{IEEE Trans. Automat. Contr.}, vol.~24, no.~6, pp. 855--865, 1979.

\bibitem{Singh1981}
S.~N. Singh, ``A modified algorithm for invertibility in nonlinear systems,''
  \emph{IEEE Trans. Automat. Contr.}, vol.~26, no.~2, pp. 595--598, 1981.

\bibitem{Isidori2017}
A.~Isidori, \emph{{Lectures in Feedback Design for Multivariable
  Systems}}.\hskip 1em plus 0.5em minus 0.4em\relax Springer, 2017.

\bibitem{Nakamura1987}
Y.~Nakamura, H.~Hanafusa, and T.~Yoshikawa, ``Task-priority based redundancy
  control of robot manipulators,'' \emph{Int. J. Robot. Res.}, vol.~6, no.~2,
  pp. 3--15, 1987.

\bibitem{Chiaverini1997}
S.~Chiaverini, ``Singularity-robust task-priority redundancy resolution for
  real-time kinematic control of robot manipulators,'' \emph{IEEE Trans. Robot.
  Autom.}, vol.~13, no.~3, pp. 398--410, 1997.

\bibitem{An2019}
S.-i. An and D.~Lee, ``Prioritized inverse kinematics : generalization,''
  \emph{IEEE Robot. Autom. Lett.}, vol.~4, no.~4, pp. 3537--3544, 2019.

\bibitem{Mansard2009}
N.~Mansard, O.~Khatib, and A.~Kheddar, ``A unified approach to integrate
  unilateral constraints in the stack of tasks,'' \emph{IEEE Trans. Robot.},
  vol.~25, no.~3, pp. 670--685, 2009.

\bibitem{Kanoun2012}
O.~Kanoun, ``Real-time prioritized kinematic control under inequality
  constraints for redundant manipulators,'' in \emph{Proc. Robot., Sci. Syst.},
  2012, pp. 145--152.

\bibitem{Lee2012}
J.~Lee, N.~Mansard, and J.~Park, ``Intermediate desired value approach for task
  transition of robots in kinematic control,'' \emph{IEEE Trans. Robot.},
  vol.~28, no.~6, pp. 1260--1277, 2012.

\bibitem{An2015}
S.-i. An and D.~Lee, ``Prioritized inverse kinematics with multiple task
  definitions,'' in \emph{Proc. IEEE Int. Conf. Robot. Autom.}, 2015, pp.
  1423--1430.

\bibitem{Khatib1987}
O.~Khatib, ``A unified approach for motion and force control of robot
  manipulators: The operational space formulation,'' \emph{IEEE J. Robot.
  Autom.}, vol.~3, no.~1, pp. 43--53, 1987.

\bibitem{Ott2015}
C.~Ott, A.~Dietrich, and A.~Albu-Schäffer, ``Prioritized multi-task compliance
  control of redundant manipulators,'' \emph{Automatica}, vol.~53, pp.
  416--423, 2015.

\bibitem{Nakamura1987a}
Y.~Nakamura and H.~Hanafusa, ``Optimal redundancy control of robot
  manipulators,'' \emph{Int. J. Robot. Res.}, vol.~6, no.~1, pp. 32--42, 1987.

\bibitem{Geisert2017}
M.~Geisert, A.~Del~Prete, N.~Mansard, F.~Romano, and F.~Nori, ``Regularized
  hierarchical differential dynamic programming,'' \emph{IEEE Trans. Robot.},
  vol.~33, no.~4, pp. 819--833, 2017.

\bibitem{Saveriano2015}
M.~Saveriano, S.-i. An, and D.~Lee, ``Incremental kinesthetic teaching of
  end-effector and null-space motion primitives,'' in \emph{Proc. IEEE Int.
  Conf. Robot. Autom.}, 2015, pp. 3570--3575.

\bibitem{Calinon2018}
S.~Calinon, ``Robot learning with task-parameterized generative models,'' in
  \emph{Robotics Research}.\hskip 1em plus 0.5em minus 0.4em\relax Springer,
  2018, pp. 111--126.

\bibitem{Filippov1988}
A.~F. Filippov, \emph{{Differential Equations with Discontinuous Right-Hand
  Sides}}.\hskip 1em plus 0.5em minus 0.4em\relax Kluwer Academic Publishers,
  1988.

\bibitem{Smirnov2002}
G.~V. Smirnov, \emph{{Introduction to the Theory of Differential
  Inclusions}}.\hskip 1em plus 0.5em minus 0.4em\relax American Mathematical
  Society, 2002.

\bibitem{Greville1966}
T.~N.~E. Greville, ``Note on the generalized inverse of a matrix product,''
  \emph{SIAM Rev.}, vol.~8, no.~4, pp. 518--521, 1966.

\bibitem{Miettinen1999}
K.~Miettinen, \emph{{Nonlinear Multiobjective Optimization}}.\hskip 1em plus
  0.5em minus 0.4em\relax Springer Science \& Business Media, 1999.

\bibitem{Ehrgott2005}
M.~Ehrgott, \emph{{Multicriteria Optimization}}.\hskip 1em plus 0.5em minus
  0.4em\relax Springer Science \& Business Media, 2005.

\bibitem{Plemmons1977}
R.~J. Plemmons, ``{$M$-Matrix characterizations. I--Nonsingular
  $M$-matrices},'' \emph{Linear Algebra Appl.}, vol.~18, no.~2, pp. 175--188,
  1977.

\bibitem{An2019a}
S.-i. An and D.~Lee, ``Prioritized inverse kinematics: Nonsmoothness,
  trajectory existence, task convergence, stability,'' 2019, arXiv:1905.03416.

\bibitem{Cortes2008}
J.~Cortes, ``Discontinuous dynamical systems,'' \emph{IEEE Control Syst. Mag.},
  vol.~28, no.~3, pp. 36--73, 2008.

\bibitem{Rudin1987}
W.~Rudin, \emph{{Real and Complex Analysis}}.\hskip 1em plus 0.5em minus
  0.4em\relax McGraw-Hill, 1987.

\bibitem{Baerlocher2004}
P.~Baerlocher and R.~Boulic, ``An inverse kinematics architecture enforcing an
  arbitrary number of strict priority levels,'' \emph{Vis. Comput.}, vol.~20,
  no.~6, pp. 402--417, 2004.

\bibitem{Clarke1998}
F.~H. Clarke, Y.~S. Ledyaev, R.~J. Stern, and P.~R. Wolenski, \emph{{Nonsmooth
  Analysis and Control Theory}}.\hskip 1em plus 0.5em minus 0.4em\relax New
  York: Springer-Verlag, 1998.

\bibitem{Paden1987}
B.~Paden and S.~Sastry, ``{A calculus for computing Filippov's differential
  inclusion with application to the variable structure control of robot
  manipulators},'' \emph{IEEE Trans. Circuits Syst.}, vol.~34, no.~1, pp.
  73--82, 1987.

\end{thebibliography}
\newcommand{\noop}[1]{}

\end{document}